\newtheorem{theorem}{Theorem}
\newtheorem{lemma}{Lemma}
\newtheorem{proposition}{Proposition}
\newtheorem{remark}{Remark}
\newcommand{\intd}{\textrm{d}}
\DeclarePairedDelimiter\floor{\lfloor}{\rfloor}
\theoremstyle{thmstyletwo}%
\theoremstyle{thmstylethree}%
\begin{document}

\title[Bayesian Wasserstein Repulsive Gaussian Mixture Models]{Bayesian Wasserstein Repulsive Gaussian Mixture Models}


\author[1]{\fnm{Weipeng} \sur{Huang}}\email{weipenghuang@sziit.edu.cn}
\equalcont{These authors contributed equally to this work.}

\author*[2]{\fnm{Tin Lok James} \sur{Ng}}\email{ngja@tcd.ie}
\equalcont{These authors contributed equally to this work.}

\affil[1]{\orgdiv{School of Computer Science and Software Engineering}, \orgname{Shenzhen Institute of Information Technology}, \orgaddress{\city{Shenzhen}, \state{Guangdong}, \country{China}}}

\affil*[2]{\orgdiv{School of Computer Science and Statistics}, \orgname{Trinity College Dublin}, \orgaddress{\city{Dublin},  \country{Ireland}}}


\abstract{We develop the Bayesian Wasserstein repulsive Gaussian mixture model that promotes well-separated clusters. Unlike existing repulsive mixture approaches that focus on separating the component means, our method encourages separation between mixture components based on the Wasserstein distance. We establish posterior contraction rates within the framework of nonparametric density estimation. Posterior sampling is performed using a blocked-collapsed Gibbs sampler. Through simulation studies and real data applications, we demonstrate the effectiveness of the proposed model.}

\keywords{Repulsive Mixture Model, Wasserstein Metric, Posterior Contraction Rate}



\maketitle

\section{Introduction}
Mixture models have been widely and successfully applied across various domains in Bayesian modeling~\citep{wade2023bayesian}. Independent priors on component-specific parameters are frequently used due to their flexibility and computational ease. However, a key limitation of this approach is the risk of introducing redundant components, which is particularly concerning in cases where a more parsimonious representation is desired, as excess components can hinder interpretability. To address this issue, repulsive mixture models have been introduced in the literature \citep{Petralia2012, Xu2016, Xie2020, Quinlan2021, Sun2022, Cremaschi2024}. These models employ specially designed priors on the component parameters to encourage separation between components. Rather than sampling component parameters independently from a base measure, the prior incorporates a mechanism that penalizes components that are too close to one another. A closely related approach in mixture modeling is the non-local prior method \citep{fuquene2019choosing}, which not only discourages component parameters from being too similar but also penalizes small component probabilities. Another fundamentally different approach to encouraging separation between mixture components is to impose constraints directly on the posterior distribution \citep{Chen2014, Huang2023}. This method builds upon the framework of Bayesian regularization, also known as posterior regularization \citep{Ganchev2007, Ganchev2010, Zhu2014}, which incorporates structured constraints into the posterior to guide the learning process.
\\\\
In the context of location-scale mixture models, the aforementioned approaches promote separation between mixture components by penalizing cases where the location parameters of the components are too close to each other. In other words, these methods focus solely on the distances between the location parameters when formulating the model, leading to estimated mixtures where the means of the components are well-separated. While this type of separation is appropriate for many applications, there are scenarios where it may be more meaningful to consider the distance between the entire distributions of the components rather than just their location parameters. In this work, we extend existing repulsive mixture models by developing prior distributions that discourage components from being too close to one another, where closeness is measured based on the entire distributions rather than solely their mean parameters. 
\\\\
To quantify the distance between probability distributions, a suitable metric on the space of probability measures is required. In this work, we utilize the Wasserstein metric to assess the closeness of distributions. The Wasserstein metric offers appealing properties. Notably, it has an intuitive geometric interpretation, as it represents the minimal cost required to optimally ``transport'' probability mass from one distribution to another. Unlike Kullback-Leibler (KL) divergence and Jensen-Shannon (JS) divergence, which can become infinite or undefined when distributions have disjoint support, the Wasserstein distance remains well-defined, enabling meaningful comparisons even in such cases. In recent years, the Wasserstein metric has become increasingly popular for measuring distances between probability distributions in various applications. This metric has been employed in various statistical and machine learning tasks, including principal component analysis of probability distributions \citep{Bigot2017, Pegoraro2022}, clustering of probability distributions \citep{Verdinelli2019, Zhuang2022}, multilevel clustering \citep{Huynh2021}, and distribution-on-distribution regression \citep{Ghodrati2022, Chen2023, Ng2024}. 
\\\\
In this work, we focus on Gaussian mixture models and extend the repulsive mixture modeling framework of \cite{Xie2020} by replacing mean-based repulsion with Wasserstein-based repulsion. Our model penalizes mixture components that are too close based on the Wasserstein distance. We focus on Gaussian mixtures because the Wasserstein distance between two Gaussian distributions has a closed-form expression, making the approach computationally efficient. However, the idea of incorporating the Wasserstein metric for repulsion can be applied to other families of mixture models, with numerical approximations of the Wasserstein distance when closed-form solutions are unavailable. Additionally, while our work extends the repulsive mixture model developed in \cite{Xie2020}, the proposed approach is broadly applicable to other classes of repulsive mixture models discussed earlier.
\\\\The remainder of this manuscript is organized as follows. Section \ref{sec_background} provides background on the Wasserstein distance, with a particular focus on its formulation for Gaussian distributions. In Section \ref{sec_method}, we introduce the proposed model, referred to as the Bayesian Wasserstein Repulsive Gaussian Mixture Model (WRGM). We analyze the posterior contraction rate for density estimation under the proposed model, extending the theoretical results of \cite{Xie2020}, and adapt the blocked-collapsed Gibbs sampler from \cite{Xie2020} for posterior sampling in our setting. Section \ref{sec_simulation} presents simulation studies to examine the empirical behavior of the proposed approach, while Section \ref{sec_data_app} demonstrates its application to real-world datasets. Finally, Section \ref{sec_discussion} concludes the manuscript with a discussion of key findings and potential future directions.

\section{Background and Notations}
\label{sec_background}

\subsection{Background on Wasserstein Space}
Let ${\cal P}(\mathbb{R}^p)$ denote the space of probability measures on $\mathbb{R}^p$, and ${\cal P}_2(\mathbb{R}^p)$ be the space of probability measures on $\mathbb{R}^p$ with finite second moments. Consider two probability measures $\nu_0, \nu_1 \in {\cal P}(\mathbb{R}^p)$, a map $T: \mathbb{R}^d \rightarrow \mathbb{R}^d$ is said to be a transport map from $\nu_0$ to $\nu_1$ if $T\#\nu_0 = \nu_1$, that is, $\nu_1$ is the push-forward measure of $\nu_0$ under $\nu_1$, meaning $ \nu_1(B) = \nu_0(T^{-1}(B))$
for all Borel set $B$ in $\mathbb{R}^p$. Let $c(\cdot, \cdot): \mathbb{R}^p \times \mathbb{R}^p \rightarrow \mathbb{R}_+$ be a cost function where $c(x,y)$ represent the cost of transporting a unit mass from $x$ to $y$. The Monge problem \citep{Monge1781} consists of finding the transport map which minimizes the total transportation costs:
\begin{eqnarray}
\label{eqn_monge_problem}
    \inf_{T: T\#\nu_0 = \nu_1} \int_{\mathbb{R}^p} c(x, T(x)) \nu_0(\intd x) .
\end{eqnarray}
A transport map $T$ which achieves the infinum above is called an optimal transport map. Since solution to the Monge problem \eqref{eqn_monge_problem} may not exist in general, \citep{Kantorovich1958} proposed a relaxation of \eqref{eqn_monge_problem} and instead solves the following minimization problem:
\begin{eqnarray}
\label{eqn_kantorovich_problem}
    \inf_{\gamma \in \Gamma(\nu_0, \nu_1)} \int_{\mathbb{R}^p \times \mathbb{R}^p} c(x,y) \intd \gamma(x, y),
\end{eqnarray}
where $\Gamma(\nu_0, \nu_1)$ is the set of couplings of $\nu_0, \nu_1$. That is, $\gamma \in \Gamma(\nu_0, \nu_1)$ if $\gamma$ is a joint distribution with marginals $\nu_0, \nu_1$. Any $\gamma \in \Gamma(\nu_0, \nu_1)$ that attains the infimum in \eqref{eqn_kantorovich_problem} is called an optimal transport plan. Since every transport map can be associated with a transport plan of the same cost, one may realize that solutions to the Monge and Kantorovich problems are related by 
\begin{eqnarray}
\label{eqn_monge_kantorovich_rel}
    \inf_{\gamma \in \Gamma(\nu_0, \nu_1)} \int_{\mathbb{R}^d \times \mathbb{R}^d} c(x,y) \intd \gamma(x, y) \le \inf_{T: T\#\nu_0 = \nu_1} \int_{\mathbb{R}^d} c(x, T(x)) \nu_0(\intd x) . 
\end{eqnarray}
In this work, we concentrate on the most commonly used squared distance cost function: $ c(x,y) = ||x-y||^2$. With this cost function, the squared root of \eqref{eqn_kantorovich_problem} defines a metric on ${\cal P}_2(\mathbb{R}^p)$, known as the 2-Wasserstein metric:
\begin{eqnarray}
    W_2(\nu_0,\nu_1) =   \bigg( \inf_{\gamma \in \Gamma(\nu_0, \nu_1)} \int_{\mathbb{R}^p \times \mathbb{R}^p} ||x-y||^2 \intd \gamma(x, y) \bigg)^{1/2}.
\end{eqnarray}
When $\nu_0 \in {\cal P}_2(\mathbb{R}^p)$ is absolutely continuous, and $c$ is the squared distance cost function, it is well known that a unique solution to the Monge problem exists \citep{Brenier1991} and the equality in \eqref{eqn_monge_kantorovich_rel} is attained. Let $T$ be this unique optimal transport map, the corresponding optimal transport plan which minimizes \eqref{eqn_kantorovich_problem} is given by $\gamma = (\mbox{Id}, T)\#\nu_0$ where $\mbox{Id}: \mathbb{R}^p \rightarrow \mathbb{R}^p$ is the identity mapping. 
\\\\
Closed-form solutions to the optimal transport problems exist when both measures $\nu_0, \nu_1$ are Gaussian distributions. Denote the mean and covariance of $\nu_i$ by $m_i$ and $\Sigma_i$, respectively, for $i=0,1$. In this case, the squared 2-Wasserstein distance between $\nu_0$ and $\nu_1$ is given by \citep{Panaretos2020}[Chapter 1.6.3]
\begin{eqnarray}
\label{eqn_wasserstein_gaussian}
    W_2^2(\nu_0, \nu_1) = ||m_1 - m_0||^2 + \mbox{tr}\big( \Sigma_0 + \Sigma_1 - 2(\Sigma_0^{1/2} \Sigma_1 \Sigma_0^{1/2})^{1/2} \big) ,
\end{eqnarray}
where $\mbox{tr}(\cdot)$ denotes the trace of a matrix. The first term on the right-hand side (RHS) of \eqref{eqn_wasserstein_gaussian} represents the squared Euclidean distance between the mean vectors of the two Gaussian distributions. The second term corresponds to the squared Bures-Wasserstein distance, given by
\begin{equation}
    D^2_{\Sigma}(\Sigma_0, \Sigma_1) = \text{tr} \big( \Sigma_0 + \Sigma_1 - 2(\Sigma_0^{1/2} \Sigma_1 \Sigma_0^{1/2})^{1/2} \big),
\end{equation}
where $D^2_{\Sigma}$ defines a metric on the space of positive definite covariance matrices. Notably, the Bures-Wasserstein distance corresponds to the 2-Wasserstein distance between the Gaussian distributions \( N(0, \Sigma_0) \) and \( N(0, \Sigma_1) \). Thus, we obtain a decomposition of the 2-Wasserstein distance for Gaussian distributions into one term quantifying the distance between mean vectors and another term capturing the discrepancy between covariance matrices. In the special case where $\Sigma_0$ and $\Sigma_1$ commute, i.e., $\Sigma_0 \Sigma_1 = \Sigma_1 \Sigma_0$, \eqref{eqn_wasserstein_gaussian} can be simplified as:
\begin{eqnarray}
\label{eqn_wasserstein_gaussian_commute}
    W_2^2(\nu_0, \nu_1) = ||m_1 - m_0||^2 + ||\Sigma_0^{1/2} - \Sigma_1^{1/2} ||_F^{2} ,
\end{eqnarray}
where $||\cdot||_F$ is the Frobenius norm. 

\subsection{Notations}
The notations employed in our work align with those in \cite{Xie2020}. We let \(\mathcal{S} \subset \mathbb{R}^{p \times p}\) denote a collection of positive definite matrices. The space of probability distributions over \(\mathbb{R}^p \times \mathcal{S}\) is denoted by \(\mathcal{M}(\mathbb{R}^p \times \mathcal{S})\), while \(\mathcal{M}(\mathbb{R}^p)\) represents the space of probability distributions over \(\mathbb{R}^p\). The probability simplex is denoted as  
\[
\Delta^K = \{(w_1, \ldots, w_K) \mid w_k \geq 0, \sum_{k=1}^{K} w_k = 1 \}.
\]
For a positive definite matrix \(\Sigma\), we use \(\lambda(\Sigma)\) to refer to any of its eigenvalues, and denote the largest and smallest eigenvalues by \(\lambda_{\max}(\Sigma)\) and \(\lambda_{\min}(\Sigma)\), respectively. Denote $\| \cdot \|$ the Euclidean norm on $\mathbb{R}^p$. We use $\| \cdot \|_1$ to denote both the $L_1$-norm on $L_1(\mathbb{R}^p)$ and the 1-norm on finite-dimensional Euclidean space $\mathbb{R}^d$ for any $d \geq 1$. $\| \cdot \|_\infty$ is used to denote both the $\ell_\infty$-norm of a vector and the supremum norm of a bounded function. Given a set ${\cal A}$ in a metric space with metric $d$, the $\epsilon$-covering number of ${\cal A}$ with respect to $d$, denoted as ${\cal N}(\epsilon, {\cal A}, d)$ is defined to be the minimum number of $\epsilon$ balls of the form $\{ a \in {\cal A}: d(a, a_0) < \epsilon\}$ 
that are needed to cover ${\cal A}$.
\\\\
Furthermore, we let \(\mathbb{P}_0\) and \(\mathbb{E}_0\) represent probability and expectation under the true data-generating density \(f_0\). The prior distribution on \(\mathcal{M}(\mathbb{R}^p)\) is denoted by \(\Pi\), with the corresponding posterior distribution given observed data \(y_1, \dots, y_n\) written as \(\Pi(\cdot \mid y_1, \dots, y_n)\).

\section{Methodology}
\label{sec_method}
\subsection{Model Specification}
\label{subsec_model}
Our model formulation closely follows that of \cite{Xie2020}, with the key distinction being that in \cite{Xie2020}, the repulsive prior specification only incorporates the mean vectors of the components, while in our approach, the repulsive prior includes both the mean vectors and the covariances of the components. The significant implication of this difference is that, while the mean vectors and covariances of the mixture components are a priori independent in \cite{Xie2020}, they are a priori dependent in our formulation.
\\\\
 We consider the Gaussian mixture model with density
$$ f_F(y) = \int_{\mathbb{R}^p \times {\cal S}} \phi(y|m, \Sigma) dF(m, \Sigma) ,$$
where 
$$ \phi(y|m, \Sigma) = \mbox{det}(2\pi \Sigma)^{-\frac{1}{2}} \exp\Big( -\frac{1}{2} (y-m)^{T} \Sigma^{-1}(y-m) \Big)$$
is the density function of $p$-dimensional Gaussian distribution $N(m, \Sigma)$, and $F \in {\cal M}(\mathbb{R}^p \times {\cal S})$ is a distribution on $\mathbb{R}^{p} \times {\cal S}$.  We define a prior $\Pi$ over the space of probability densities on $\mathbb{R}^p$ by the following hierarchical model:
\begin{eqnarray}
\label{eqn_gen_model}
(f(y)|F) = \int_{\mathbb{R}^p \times {\cal S}} \phi(y|m, \Sigma) dF(m, \Sigma), \nonumber \\
 (F|K, \{w_k, m_k, \Sigma_k\}_{k=1}^{K} ) = \sum_{k=1}^{K} w_k \delta_{(m_k, \Sigma_k)} , \\     
 (m_1, \Sigma_1, \ldots, m_K, \Sigma_k|K) \sim p(m_1, \Sigma_1, \ldots, m_K, \Sigma_K|K) , \nonumber \\
 (w_1, \ldots, w_K|K) \sim {\cal D}_K(\beta). \nonumber
\end{eqnarray}

Here $p(m_1, \Sigma_1, \ldots, m_K, \Sigma_K|K)$ is a certain density function with respect to the Lebesgue measure on $(\mathbb{R}^p \times {\cal S})^{K}$, ${\cal D}_K(\beta)$ is the Dirichlet distribution on $\Delta^K$. The number of mixture components, $K$, is assigned a prior
\begin{eqnarray}
\label{eqn_prior_K}
 K \sim p_K(K), \quad K \in \mathbb{N}_+.   
\end{eqnarray}
Instead of assuming $(m_k, \Sigma_k)_{k=1}^{K}$ being i.i.d. from a base measure, we introduce repulsion among components $N(m_k, \Sigma_k)$ such that they are well separated. Unlike previous works where repulsion is enforced through the distance between mean vectors $m_k, m_{k'}, k \ne k' $ between components, we instead enforce separation using Wasserstein distance between normal components. We assume that the density is of the form
\begin{eqnarray}
\label{eqn_repulsive_prior}
 p(m_1, \Sigma_1, \ldots, m_K, \Sigma_K|K) = \frac{1}{Z_K} \bigg( \prod_{k=1}^{K} p_{m}(m_k) p_{\Sigma}(\Sigma_k) \bigg) h_K((m_1, \Sigma_1), \ldots, (m_K, \Sigma_K)) ,    
\end{eqnarray}
where $p_m$ and $p_{\Sigma}$ are the prior distributions for component mean and covariance, respectively, and
\begin{eqnarray}
 Z_K &=   \int_{{\cal S}} \cdots \int_{{\cal S}} \int_{\mathbb{R}^p} \cdots \int_{\mathbb{R}^p}& \bigg( \prod_{k=1}^{K} p_{m}(m_k) p_{\Sigma}(\Sigma_k) \bigg) h_K((m_1, \Sigma_1), \ldots, (m_K, \Sigma_K)) \nonumber \\
  && d \rho_{\mathbb{R}^p}(m_1) \cdots d \rho_{\mathbb{R}^p}(m_K) d \rho_{{\cal S}}(\Sigma_1) \cdots d \rho_{{\cal S}}(\Sigma_K),    
\end{eqnarray}
is the normalizing constant. Here, $d \rho_{\mathbb{R}^p}(m)$ is the Lebesgue measure on $\mathbb{R}^p$ and $d \rho_{{\cal S}}(\Sigma)$ is the Lebesgue measure on ${\cal S}$. For notation simplicity, we write $Z_K$ as:
\begin{eqnarray}
\label{eqn_ZK}
 Z_K &=   \int_{{\cal S}} \cdots \int_{{\cal S}} \int_{\mathbb{R}^p} \cdots \int_{\mathbb{R}^p} & \bigg( \prod_{k=1}^{K} p_{m}(m_k) p_{\Sigma}(\Sigma_k) \bigg) h_K((m_1, \Sigma_1), \ldots, (m_K, \Sigma_K)) \nonumber \\
  && d m_1 \cdots d m_K d \Sigma_1 \cdots d \Sigma_K.
\end{eqnarray}
The function $h_K: (\mathbb{R}^p \times {\cal S})^K \rightarrow [0,1]$ is required to be invariant under permutation of arguments. That is, for any permutation $\zeta: \{1, \ldots, K\} \rightarrow \{1, \ldots, K\}$, the following equality holds:
$$ h_K((m_1, \Sigma_1), \ldots, (m_K, \Sigma_K)) = h_K((m_{\xi(1)}, \Sigma_{\xi(1)}), \ldots, (m_{\xi(K)}, \Sigma_{\xi(K)})) .$$
We require $h_K$ satisfies the following repulsive condition:
$ h_K((m_1, \Sigma_1), \ldots, (m_K, \Sigma_K)) = 0$
if and only if $m_k = m_{k'}, \Sigma_k = \Sigma_{k'}$ for some $k \ne k'$.
\\\\
Let $g: \mathbb{R}_{+} \rightarrow [0,1]$ be a strictly monotonically increasing function. We employ the two types of repulsion functions introduced by \cite{Xie2020}:

\begin{eqnarray}
\label{eqn_repulsive_fun1}
 h_K((m_1, \Sigma_1), \ldots, (m_K, \Sigma_K)) = \min_{1 \le k < k' \le K} g( W^2_2(N(m_k, \Sigma_k), N(m_{k'}, \Sigma_{k'})) ) ,    
\end{eqnarray}

\begin{eqnarray}
\label{eqn_repulsive_fun2}
 h_K((m_1, \Sigma_1), \ldots, (m_K, \Sigma_K)) = \bigg(  \prod_{1 \le k < k' \le K} g( W^2_2(N(m_k, \Sigma_k), N(m_{k'}, \Sigma_{k'})) ) \bigg)^{\frac{1}{K}},    
\end{eqnarray}
for $K \ge 2$, where we recall that $W_2$ is the 2-Wasserstein distance. For $K=1$, we let $h((m_1, \Sigma_1)) := 1$. We refer to the general model in \eqref{eqn_gen_model} with the repulsive prior in \eqref{eqn_repulsive_prior} as the  Bayesian Wasserstein
repulsive Gaussian mixture (WRGM) model.
\\\\
Note that in \cite{Xie2020}, the monotone function $g$ and, consequently, the repulsion function $h_K$ depend solely on the mean vectors. In contrast, in our formulation, they depend on both the mean vectors and the covariance matrices. To recall, the 2-Wasserstein distance between two normal distributions, as given in \eqref{eqn_wasserstein_gaussian}, consists of the distance between the mean vectors and the distance between the covariance matrices. While the repulsion function $h_K$ in \cite{Xie2020} heavily penalizes small distances between the mean vectors, irrespective of the distance between the covariance matrices, our approach takes into account both the mean and covariance components in the penalty.
\\\\
A desirable property of using the repulsion functions in \eqref{eqn_repulsive_fun1} and \eqref{eqn_repulsive_fun2} is that a connection between the normalizing constant $Z_K$ and the number of components $K$ can be established. The result below is analogous to Theorem 1 in \cite{Xie2020}.

\begin{proposition}
\label{thm_normalizing_const}
    Suppose the repulsive function $h_K$ is of the form either \eqref{eqn_repulsive_fun1} or \eqref{eqn_repulsive_fun2}. If
\begin{eqnarray}
\label{eqn_integrability_cond}
& \int_{\mathbb{R}^p \times \mathbb{R}^p} \int_{{\cal S} \times {\cal S}} &  \Big[ \log g(W_2(N(m_1, \Sigma_1), N(m_2, \Sigma_2))) \Big]^2 \nonumber \\
 && p_{m}(m_1) p_{m}(m_2) p_{\Sigma}(\Sigma_1) p_{\Sigma}(\Sigma_2) dm_1 dm_2 d\Sigma_1 d\Sigma_2 < \infty,
\end{eqnarray}
then $0 \le -\log Z_K \le c_1 K$.
\end{proposition}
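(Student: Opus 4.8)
The plan is to exploit the representation of $Z_K$ as an expectation and then apply Jensen's inequality. Write $Z_K = \mathbb{E}\big[ h_K((m_1,\Sigma_1),\ldots,(m_K,\Sigma_K)) \big]$, where the expectation is over $(m_k,\Sigma_k)$, $k=1,\ldots,K$, drawn independently with $m_k \sim p_m$ and $\Sigma_k \sim p_\Sigma$. Since $h_K$ takes values in $[0,1]$ we have $Z_K \le 1$, which immediately gives the lower bound $-\log Z_K \ge 0$. Because $p_m$ and $p_\Sigma$ are densities with respect to Lebesgue measure, the event that $(m_k,\Sigma_k) = (m_{k'},\Sigma_{k'})$ for some $k \neq k'$ has probability zero, so $h_K > 0$ almost surely and $\log h_K$ is well defined. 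Jensen's inequality for the concave map $\log$ then yields
\[
\log Z_K = \log \mathbb{E}[h_K] \ge \mathbb{E}[\log h_K], \qquad\text{i.e.}\qquad -\log Z_K \le \mathbb{E}[-\log h_K],
\]
so it remains to bound $\mathbb{E}[-\log h_K]$ by a multiple of $K$ for each of the two repulsion functions.

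Set $X_{kk'} := \log g\big(W_2^2(N(m_k,\Sigma_k), N(m_{k'},\Sigma_{k'}))\big)$ for $1 \le k < k' \le K$, so that $X_{kk'} \le 0$ and, by exchangeability of the $(m_k,\Sigma_k)$, each $X_{kk'}$ has the same law as $X_{12}$; the integrability assumption \eqref{eqn_integrability_cond} guarantees $\sigma^2 := \mathbb{E}[X_{12}^2] < \infty$, and hence also $\mathbb{E}[|X_{12}|] \le \sigma$ by Cauchy--Schwarz. For the product repulsion \eqref{eqn_repulsive_fun2} the estimate is direct, since $-\log h_K = \tfrac{1}{K}\sum_{1\le k<k'\le K}|X_{kk'}|$ and therefore, by linearity of expectation,
\[
\mathbb{E}[-\log h_K] = \tfrac{1}{K}\binom{K}{2}\,\mathbb{E}[|X_{12}|] = \tfrac{K-1}{2}\,\mathbb{E}[|X_{12}|] \le \tfrac{\sigma}{2}\,K .
\]

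For the min repulsion \eqref{eqn_repulsive_fun1}, $-\log h_K = \max_{1\le k<k'\le K}|X_{kk'}|$, and here a crude $\max \le \sum$ bound would only produce an $O(K^2)$ estimate. Instead I would control the maximum in $L^2$: pointwise $\max_{k<k'}|X_{kk'}| \le \big(\sum_{k<k'}X_{kk'}^2\big)^{1/2}$, and then, using Jensen's inequality for the concave square root together with the linearity of expectation,
\[
\mathbb{E}[-\log h_K] \le \Big(\sum_{1\le k<k'\le K}\mathbb{E}[X_{kk'}^2]\Big)^{1/2} = \Big(\binom{K}{2}\,\sigma^2\Big)^{1/2} \le \tfrac{\sigma}{\sqrt{2}}\,K .
\]
Choosing $c_1 = \sigma/\sqrt{2}$ (which dominates $\sigma/2$) then gives $-\log Z_K \le c_1 K$ in both cases and completes the proof.

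I expect the min-type repulsion to be the only delicate point: the family $\{X_{kk'}\}$ is identically distributed but not independent, so a union/sum bound over the $\binom{K}{2}$ pairs is too lossy, and the right move is to pass through the $L^2$ norm of the maximum --- which is precisely why hypothesis \eqref{eqn_integrability_cond} is stated with a square rather than merely a first-moment condition. It is worth noting that the argument uses no independence among the pairs, only that each marginal pair is distributed as $(m_1,\Sigma_1,m_2,\Sigma_2)$; linearity of expectation does all the work, so the combinatorial overlap structure of the index pairs is irrelevant.
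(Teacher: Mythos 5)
Your proof is correct and follows essentially the same route as the paper: apply Jensen's inequality to reduce $-\log Z_K$ to $\mathbb{E}[-\log h_K]$, then for the min-type repulsion bound the maximum by passing through the $L^2$ norm via $\max_{k<k'}|X_{kk'}| \le \big(\sum_{k<k'}X_{kk'}^2\big)^{1/2}$ and a second application of Jensen, which is precisely the paper's chain of inequalities. You additionally spell out the product case, which the paper dispatches as ``similar.''
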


\begin{remark}
    
Since $W_2(N(m_1, \Sigma_1), N(m_2, \Sigma_2)) \ge ||m_1 - m_2||$, a sufficient condition to ensure the condition \eqref{eqn_integrability_cond} holds is the following integrability condition:
\begin{eqnarray}
    \int_{\mathbb{R}^p \times \mathbb{R}^p} \big( \log g(||m_1 - m_2||) \big)^2 p_m(m_1) p_m(m_2) dm_1 dm_2 < \infty ,
\end{eqnarray}
which is identical to the condition in Theorem 1 of \cite{Xie2020}.

\end{remark}

\subsection{Posterior Contraction Rate}
We analyze the posterior contraction rate of the WRGM, extending the results of \cite{Xie2020} to our current framework. To establish this, we first outline the necessary assumptions for the model and prior. These assumptions largely align with those in \cite{Xie2020}.

\subsubsection{Assumptions}
 We require the following conditions to hold for the model:

\begin{itemize}
    \item[A0] The data generating density $f_0$ is of the form $f_0 = \phi(\cdot|m, \Sigma) * F_0$ where $*$ represent the convolution of two functions and $F_0 \in M(\mathbb{R}^p \times {\cal S})$ has a sub-Gaussian tail:
    $$ F_0(||m|| \ge t) \le B_1 \exp(-b_1 t^2)$$
    for some $B_1, b_1 > 0$.

    \item[A1] For some $\delta > 0, c_2 > 0$, we have $g(x) \ge c_2 \epsilon$ whenever $x \ge \epsilon$ and $\epsilon \in (0, \delta)$.

    \item[A2] $g$ satisfies 
\begin{eqnarray*}
& \int_{\mathbb{R}^p \times \mathbb{R}^p} \int_{\mathbb{S} \times \mathbb{S}} &  \Big[ \log g(W_2(N(m_1, \Sigma_1), N(m_2, \Sigma_2))) \Big]^2 \nonumber \\
 && p_{m}(m_1) p_{m}(m_2) p_{\Sigma}(\Sigma_1) p_{\Sigma}(\Sigma_2) dm_1 dm_2 d\Sigma_1 d\Sigma_2 < \infty.
\end{eqnarray*}

\item[A3] There exists $\underline{\sigma}^2, \overline{\sigma}^2 \in (0, +\infty)$, $\underline{\sigma}^2 \le \inf_{\Sigma \in {\cal S}} \lambda(\Sigma) \le \sup_{\Sigma \in {\cal S}} \lambda(\Sigma) \le \overline{\sigma}^2$.

\end{itemize}
\begin{remark}
    Condition A0, A1, and A3 are identical to those in \cite{Xie2020}, while Condition A2 is analogous to A2 in \cite{Xie2020}. Condition A0 requires that the true density \( f_0 \) takes the form of a (nonparametric) Gaussian mixture model. Condition A1 is satisfied, for example, by letting \( g(x) = \frac{x}{g_0 + x} \) for some fixed \( g_0 > 0 \), which is the form used in \cite{Xie2020}. Note that in \cite{Xie2020}, they also required all \( \Sigma \in {\cal S} \) to be simultaneously diagonalizable. This assumption leads to a simpler derivation of the posterior contraction rate as well as posterior sampling. While this assumption is reasonable in their setting, as their prior only incorporates mean repulsion, it is undesirable in our setting. With this simplifying assumption, they derived a slightly faster contraction rate. However, we show that by removing this assumption, our posterior contraction rate remains of the same order, differing only by a logarithmic term.
\end{remark}

The following conditions are required for the prior distribution.

\begin{itemize}
    \item[B1] $(w_1, \ldots, w_K) \sim {\cal D}_{K}(\beta)$ is weakly informative, i.e. $\beta \in (0, 1]$.

    \item[B2] $p_{m}$ has a sub-Gaussian tail, i.e.
    $$ \int_{||m|| \ge t} p_{m}(m) dm \le B_2 \exp(-b_2 t^2) .$$

    \item[B3] For all $m \in \mathbb{R}^p$, $p_{m}(m) \ge B_3 \exp(-b_3 ||m||^{\alpha})$, for some $\alpha \ge 2$, and $B_3, b_3 > 0$.

    \item[B4] $p_{\Sigma}$ is induced by 
    $$ p_{\Sigma}(\Sigma) \propto  p_{IW}(\Sigma; \Psi, \nu) \prod_{j=1}^{p} 1_{[\underline{\sigma}^2, \overline{\sigma}^2]}(\lambda_j(\Sigma)) ,$$
    where $p_{IW}(\cdot,\Psi, \nu) $ is the density of inverse-Wishart distribution where $\Psi$ a $p\times p$ positive definite scale matrix, and $\nu > p-1$ is the degree of freedom. 
    \item[B5] There exists some $B_4, b_4 > 0$ such that for sufficiently large $K$,
    $$ p_K(K) \ge \exp(-b_4 K \log K), \quad \sum_{N=K}^{\infty} p_K(N) \le \exp(-B_4 K \log K). $$
\end{itemize}
We observe that Conditions B1, B2, B3, and B5 align with those presented in \cite{Xie2020}. However, for Condition B4, we choose to use the more widely adopted inverse-Wishart distribution for the covariances, while imposing eigenvalue boundedness conditions.

\subsubsection{Posterior Contraction Rate}
For a prior distribution $\Pi$ on ${\cal M}(\mathbb{R}^p)$, we study the contraction rate of the posterior distribution $\Pi(\cdot|y_1,\ldots,y_n)$ towards the true data generating density $f_0$ measured using the $L_1$-norm $||\cdot||_1$ on $L^1(\mathbb{R}^p)$ with respect to $\mathbb{P}_0$-probability. Given a sequence $(\epsilon_n)_n$, the posterior is said to contract towards $f_0$ with rate (at least) $\epsilon_n$ if 
$$ \Pi(f \in {\cal M}(\mathbb{R}^p): ||f - f_0||_1 > M \epsilon_n | y_1, \ldots, y_n) \rightarrow 0$$
as $n \rightarrow \infty$ in $\mathbb{P}_0$-probability for some constant $M > 0$. We have the following contraction rate result under the given assumptions, which extends Theorem 4 of \cite{Xie2020} to the present setting.
\begin{theorem}
    \label{thm_post_contraction}
    Suppose conditions A0-A3 and B1-B5 hold. Then the posterior distribution $\Pi(\cdot|y_1, \ldots, y_n)$ contracts towards $f_0$ under the $||\cdot||_1$ norm at a rate of $\epsilon_n = (\log n)^t / \sqrt{n}$, $t > \frac{p^2}{2} + p+\frac{\alpha+2}{4}$.
\end{theorem}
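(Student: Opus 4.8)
The plan is to invoke the general theory of posterior contraction rates of Ghosal, Ghosh and van der Vaart, which reduces the statement to verifying, at the rate $\epsilon_n = (\log n)^t/\sqrt{n}$, three conditions: (i) a prior-mass bound $\Pi\big(B_{\mathrm{KL}}(f_0,\epsilon_n)\big) \ge \exp(-c\,n\epsilon_n^2)$, where $B_{\mathrm{KL}}(f_0,\epsilon_n)$ denotes the usual Kullback--Leibler-type neighbourhood of $f_0$; (ii) the existence of sieves ${\cal F}_n \subset {\cal M}(\mathbb{R}^p)$ with $\log {\cal N}(\epsilon_n, {\cal F}_n, ||\cdot||_1) \le c\,n\epsilon_n^2$; and (iii) $\Pi({\cal F}_n^c) \le \exp\big(-(c+4)\,n\epsilon_n^2\big)$. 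The argument follows the structure of the proof of Theorem~4 of \cite{Xie2020}; the new content is the treatment of the covariance coordinates, which now enter both the repulsion function $h_K$ and, through $p_\Sigma$, the prior, and which are no longer assumed simultaneously diagonalizable.

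\textbf{Prior-mass condition (the crux).} First I would show, using A0, that $f_0$ can be approximated by a finite Gaussian mixture $f_{F^*}$ with $F^* = \sum_{k=1}^{N_n} w_k^*\,\delta_{(m_k^*,\Sigma_k^*)}$ in which $N_n$ is polylogarithmic in $n$, the atoms are pairwise well separated, $||m_k^*|| \lesssim \sqrt{\log n}$, and both $||f_{F^*} - f_0||_1$ and the corresponding KL discrepancy are $\lesssim \epsilon_n$; this uses the sub-Gaussian tail of $F_0$ to truncate the mixing measure to its effective support together with a moment-matching discretization, as in \cite{Xie2020}. I would then bound below the prior mass of a small $||\cdot||_1$- (hence KL-) neighbourhood of $f_{F^*}$ by restricting to $K = N_n$ and to $(w,m,\Sigma)$ in a box of radius $\eta_n$ (with $\log(1/\eta_n) \asymp \log n$) around $(w^*,m^*,\Sigma^*)$: B5 gives $p_K(N_n) \ge \exp(-b_4 N_n\log N_n)$; B1 gives the Dirichlet small-ball probability; B3 gives the $p_m$-mass of each coordinate ball, where $p_m(m_k^*) \ge B_3\exp(-b_3||m_k^*||^\alpha)$ with $||m_k^*|| \lesssim \sqrt{\log n}$ yields a factor $\exp(-c(\log n)^{\alpha/2})$ per component -- the source of the $\frac{\alpha+2}{4}$ contribution to $t$; B4, being a continuous strictly positive density on the compact eigenvalue-truncated region, gives the $p_\Sigma$-mass bounded below by a power $\eta_n^{p(p+1)/2}$ of the radius, so that the $(p+p(p+1)/2)$-dimensional component parameters -- appearing both in $N_n$ and in these small-ball exponents -- produce the $\frac{p^2}{2}+p$ part of $t$; the repulsion factor $h_{N_n}$ is bounded below by $\exp(-cN_n)$ on this box because the atoms are well separated and $W_2(N(m,\Sigma),N(m',\Sigma')) \ge ||m-m'||$, so mean-separation already suffices; and $Z_{N_n}^{-1} \ge 1$ since $Z_{N_n} \le 1$. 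Collecting these bounds gives $\Pi\big(B_{\mathrm{KL}}(f_0,\epsilon_n)\big) \ge \exp(-c\,n\epsilon_n^2)$ precisely when $t > \frac{p^2}{2} + p + \frac{\alpha+2}{4}$.

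\textbf{Sieve, entropy and complement.} I would take ${\cal F}_n = \{\, f_F : F = \sum_{k\le H_n} w_k \delta_{(m_k,\Sigma_k)},\ ||m_k|| \le a_n \text{ for all } k \,\}$ with $H_n$ polylogarithmic and $a_n \asymp \sqrt{\log n}$ (the covariances already range over the compact ${\cal S}$ by A3). From the Lipschitz-type estimate $||\phi(\cdot|m,\Sigma) - \phi(\cdot|m',\Sigma')||_1 \lesssim ||m-m'|| + ||\Sigma-\Sigma'||$, valid uniformly on $\{||m||\le a_n\}\times{\cal S}$ by A3, the $||\cdot||_1$-covering number of ${\cal F}_n$ is controlled by covering the simplex $\Delta^{H_n}$, the $H_n$ mean-balls, and the set ${\cal S}$; the last, lacking the simultaneous-diagonalizability assumption of \cite{Xie2020}, is a full $p(p+1)/2$-dimensional bounded set requiring $\epsilon^{-p(p+1)/2}$-type covers, so $\log {\cal N}(\epsilon_n,{\cal F}_n,||\cdot||_1) \lesssim H_n p^2 \log(1/\epsilon_n) \lesssim n\epsilon_n^2$ for a suitable polylogarithmic $H_n$ -- this extra cost is what makes our rate differ from that of \cite{Xie2020} only by a logarithmic factor. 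Finally $\Pi({\cal F}_n^c) \le \Pi(K > H_n) + \Pi\big(\exists\, k \le H_n : ||m_k|| > a_n\big)$; the first term is $\le \sum_{N\ge H_n} p_K(N) \le \exp(-B_4 H_n\log H_n)$ by B5, and the second, after using $Z_K^{-1} \le \exp(c_1 K)$ from Proposition~\ref{thm_normalizing_const} and the tail bound B2, is $\le H_n \exp(c_1 H_n) B_2 \exp(-b_2 a_n^2)$; choosing $H_n$ and $a_n$ large enough (still polylogarithmic) makes both $\le \exp(-(c+4)n\epsilon_n^2)$.

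\textbf{Main obstacle.} With (i)--(iii) established the general contraction theorem delivers the claim. I expect the prior-mass step to be the principal difficulty: one must construct the well-separated finite approximation with a controlled, polylogarithmic number of components and then carry out -- this is the genuinely new part -- the per-component accounting for the covariances, which in the WRGM are a priori dependent on the means and enter the repulsion, so as to arrive at exactly the exponent $t > \frac{p^2}{2} + p + \frac{\alpha+2}{4}$ and not something larger; the entropy step, though essentially routine, also needs care so that dropping simultaneous diagonalizability costs only a logarithmic factor.
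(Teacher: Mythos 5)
Your proposal follows essentially the same route as the paper: reduce to the standard three conditions (Kullback--Leibler prior mass, sieve entropy, sieve-complement mass), approximate $f_0$ by a polylogarithmically many-component mixture with means in a $\sqrt{\log n}$ box, use B3 for the mean small-ball, a bounded-density bound for the covariance small-ball, B5 for $p_K$, and Proposition~\ref{thm_normalizing_const} plus B2 for the complement. There are two places where your account diverges from what the paper actually does, though neither breaks the argument. First, on the entropy/sieve step the paper invokes Kruijer et al.'s refinement (Theorem 3 there), which avoids a compact sieve in the mean coordinates by partitioning into annuli $(a_k, a_k+1]$ and summing $\sqrt{{\cal N}}\sqrt{\Pi}$; you instead truncate at $||m_k|| \le a_n$ and bound $\Pi({\cal F}_n^c)$ directly. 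Both are standard and yield the same polylogarithmic budget; the paper's choice just streamlines the unboundedness issue. Second, and more substantively, your dimension count is imprecise: the exponent $\tfrac{p^2}{2}+p$ in $t$ is $\tfrac{1}{2}(p^2+2p)$, and the paper derives $N_n \lesssim (\log n)^{p^2+2p}$ from a moment-matching argument (Lemma~\ref{lemma_mixture_approx_1}) that expands $Q_\Sigma^j(y-m)$ by the multinomial theorem over all $p^2$ entries of $\Sigma^{-1}$; your stated ``$p+p(p+1)/2$-dimensional component parameters'' ($=\tfrac{p^2}{2}+\tfrac{3p}{2}$) does not by itself produce this exponent, so that step needs the actual moment-matching count, not a generic dimension argument. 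You do correctly identify that the $\tfrac{\alpha+2}{4}$ comes from B3 together with the extra $\tfrac{1}{2}$ from the $\overline\epsilon_n/\underline\epsilon_n$ relation, that the covariance small-ball contributes only a $+1$ which is dominated for $\alpha\ge2$, and that covering ${\cal S}$ without simultaneous diagonalizability costs only a logarithmic factor versus \cite{Xie2020}.
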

\begin{remark}
We compare the contraction rate \(\frac{(\log n)^t}{\sqrt{n}}\) with \(t > \frac{p^2}{2} + p + \frac{\alpha+2}{4}\) to the rate \(\frac{(\log n)^t}{\sqrt{n}}\) with \(t > p + \frac{\alpha+2}{4}\) in Theorem 4 of \cite{Xie2020}. We observe that our rate is only slower by a logarithmic factor, which arises because we no longer impose the simultaneous diagonalization condition on the covariance matrices.

\end{remark}

\subsection{Posterior Inference}
\label{subsec_post_inf}
We modify the blocked-collapsed Gibbs sampler introduced by \cite{Xie2020} for posterior inference in the WRGM model. As in \cite{Xie2020}, we express the WRGM model using the exchangeable partition distribution \cite{Miller2018} and apply the data augmentation technique from \cite{Neal2000} to develop a blocked-collapsed Gibbs sampler similar to that of \cite{Xie2020}. Since the derivation follows the same approach and the sampler closely resembles the one in \cite{Xie2020}, we omit the full details and instead focus on highlighting the key differences.
\\\\
Similar to \cite{Xie2020}, we use the first type of repulsion function \( h_K \) from \eqref{eqn_repulsive_fun1}, setting the monotone function \( g \) as \( g(x) = \frac{x}{g_0 + x} \) for some constant \( g_0 > 0 \). The key distinction in our approach is that \( g \) and, consequently, \( h_K \) are defined based on the 2-Wasserstein distance between normal components, rather than solely on the distance between their location parameters. This implies that while RGM samples the location parameters and covariances independently, WRGM samples them jointly. Furthermore, \cite{Xie2020} assumed, for simplicity, that all covariance matrices were diagonal and incorporated an algorithm that involved sampling the eigenvalues of these matrices. In contrast, we relax this assumption by adopting a conjugate inverse-Wishart prior on covariance matrices, subject to an eigenvalue boundedness condition (Condition B4). This facilitates a closed-form update of the covariance matrices, with any sampled matrix being rejected if it fails to satisfy the eigenvalue boundedness condition. The rest of the steps of our sampler are identical to those in Algorithm 1 of \cite{Xie2020}. Our code for this blocked-collapsed Gibbs sampling algorithm is publicly available at~\url{https://github.com/huangweipeng7/wrgm}.

\section{Simulation Studies}
\label{sec_simulation}
We conduct a series of simulations to evaluate the performance of the proposed WRGM model. In particular, we focus on its ability to recover the true data-generating density. The WRGM is compared against the RGM and the Mixture of Finite Mixtures (MFM) models \citep{Richardson2002}. We also examine the minimum distance between the mean parameters of mixture components under the three models, focusing on WRGM and RGM, as WRGM imposes repulsion jointly on means and covariances, while RGM enforces repulsion only on the means.
\\\\
In all three simulation settings, we simulate data from Gaussian mixtures. In each case, we fix two mixture components with the same mean but different covariance structures. Specifically, both components share the mean \( m = (0, 0)^T \), while their covariances are given by
\[
\Sigma_1 = 
\begin{bmatrix}
1 & 0 \\
0 & 100
\end{bmatrix}
\quad \text{and} \quad
\Sigma_2 = 
\begin{bmatrix}
100 & 0 \\
0 & 1
\end{bmatrix}.
\]
Intuitively, one would expect that the RGM, which penalizes small distances between component means, may push the mean parameters further apart. In contrast, the WRGM may not exhibit the same behavior, as it accounts for both mean and covariance parameters, and when the covariance structures of the components are already substantially different, there may be less incentive to separate the means further.
\\\\
In addition to these two fixed components, we also include randomly sampled Gaussian components. The mean and covariance of each such component, denoted by \(\tilde{m}_k\) and \(\tilde{\Sigma}_k\) for \(k = 1, \ldots, K_s\), are drawn according to
\begin{equation}
\label{eq:sim_sample}
\tilde{m}_k \sim N(m_0, \Sigma_0), \quad \tilde{\Sigma}_k \sim \text{IW}(\Psi_0, 5),  
\end{equation} 
where $K_s$ is the number of random components, $\text{IW}$ denotes the inverse-Wishart distribution and
\[
m_0 = (0, 0)^T, \quad 
\Sigma_0 = \begin{bmatrix}
25 & 0 \\
0 & 25
\end{bmatrix}, \quad
\Psi_0 = \begin{bmatrix}
10 & 0 \\
0 & 10
\end{bmatrix}.
\]
Consequently, the true density takes the form:
\[
f(y) = \frac{1}{K_s+2} \bigg( \phi(y| m, \Sigma_1) + \phi(y | m, \Sigma_2) + \sum_{k=1}^{K_s} \phi(y | \tilde{m}_k, \tilde{\Sigma}_k) \bigg)
\] 
In all three simulation settings, we set the sample size to 500. The number of randomly generated mixture components, denoted by \(K_s\), is set to 3, 4, and 5, respectively. Combined with the two fixed components, this results in true mixture densities consisting of 5, 6, and 7 components, respectively.
\\\\
We apply the WRGM model to the three simulated datasets, using the prior specification outlined in Section \ref{subsec_model} and the posterior inference methodology described in Section \ref{subsec_post_inf}. In particular, we set priors that satisfy Conditions B1 - B5. We specify the Dirichlet concentration parameter as \(\beta = 1\). For the prior on mean parameters, \(p_m\), is a normal distribution with zero mean and covariance \(\tau^2 I\), where \(\tau = 100\). For the inverse-Wishart prior, \(p_{IW}(\Sigma; \Psi, \nu)\), we set \(\nu = 3\) and \(\Psi = I\). Additionally, we set the eigenvalue bounds as $\underline{\sigma}^2=10^{-12}$ and $\overline{\sigma}^2=10^{12}$. Finally, the prior for \(K\), \(p_K(K)\), is taken to be a zero-truncated Poisson distribution with parameter \(\lambda = 1\). We adopt the repulsion function \( h_K \) as defined in \eqref{eqn_repulsive_fun1} and set \( g(x) = \frac{x}{g_0 + x} \) with \( g_0 = 5 \).
\\\\
The MCMC is run for 15,000 iterations, with the first 10,000 iterations used as burn-in. Additionally, we set the thinning interval to 2 for the MCMC runs. We also fit the RGM and MFM models using the same prior hyperparameters as specified for the WRGM model. Since \cite{Xie2020} focuses only on the case of diagonal covariance matrices, we extend the RGM model to accommodate full covariance matrices in our analysis.
\\\\
The estimated densities for the three simulation settings under each of the three models are shown in Figures \ref{fig:sim1_de}, \ref{fig:sim2_de}, and \ref{fig:sim3_de}. As in \cite{Xie2020}, we also compute the logarithm of the conditional predictive ordinate (log-CPO) for the three models using the post-burn-in samples:
$$ \text{log-CPO} = - \sum_{i=1}^{n} \log \bigg( \frac{1}{n_{mc}} \sum_{j=1}^{n_{mc}} \frac{1}{p(y_i|\Theta_{mc}^{j})} \bigg), $$
where $\Theta_{mc}^{j}$ represents the post-burn-in samples of all parameters generated by the $j$th iteration, $p(y_i|\Theta_{mc}^{(j)})$ is the density of data $y_i$, $n_{mc}$ is the number of post-burn-in MCMC samples. The maximum a posteriori (MAP) component assignments corresponding to each simulation and model are presented in Figures \ref{fig:sim1_map}, \ref{fig:sim2_map}, and \ref{fig:sim3_map}. Additionally, Figures \ref{fig:sim1_min_d}, \ref{fig:sim2_min_d}, and \ref{fig:sim3_min_d} display the minimum pairwise distances between the mean parameters of the mixture components for each model.
\\\\
Across all three simulation settings, the WRGM model consistently achieves slightly better density estimates compared to RGM, and both WRGM and RGM outperform the MFM model regarding log-CPO. Additionally, the MAP component assignments under WRGM more closely reflect the true component structure relative to the other two models. Notably, the minimum pairwise distances between the component means in WRGM are consistently smaller than those in RGM and MFM. This observation aligns with our intuition: WRGM imposes repulsion based on the Wasserstein distance, which accounts for both mean and covariance differences, and does not require the component means to be far apart—unlike RGM, which directly penalizes proximity between means.

\begin{figure}[ht]
    \centering
        \subfloat{
        \includegraphics[width=0.46\linewidth]{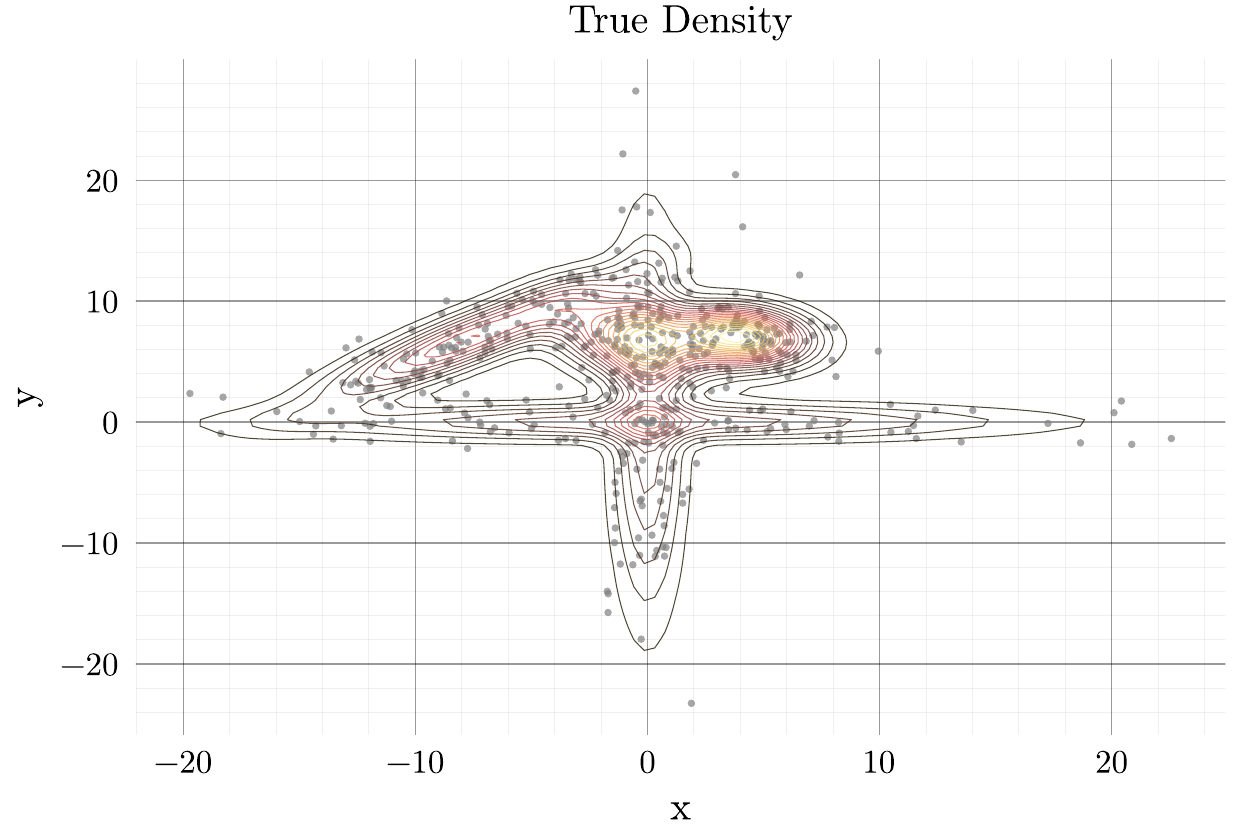} 
    } \quad
    \subfloat{
        \includegraphics[width=0.46\linewidth]{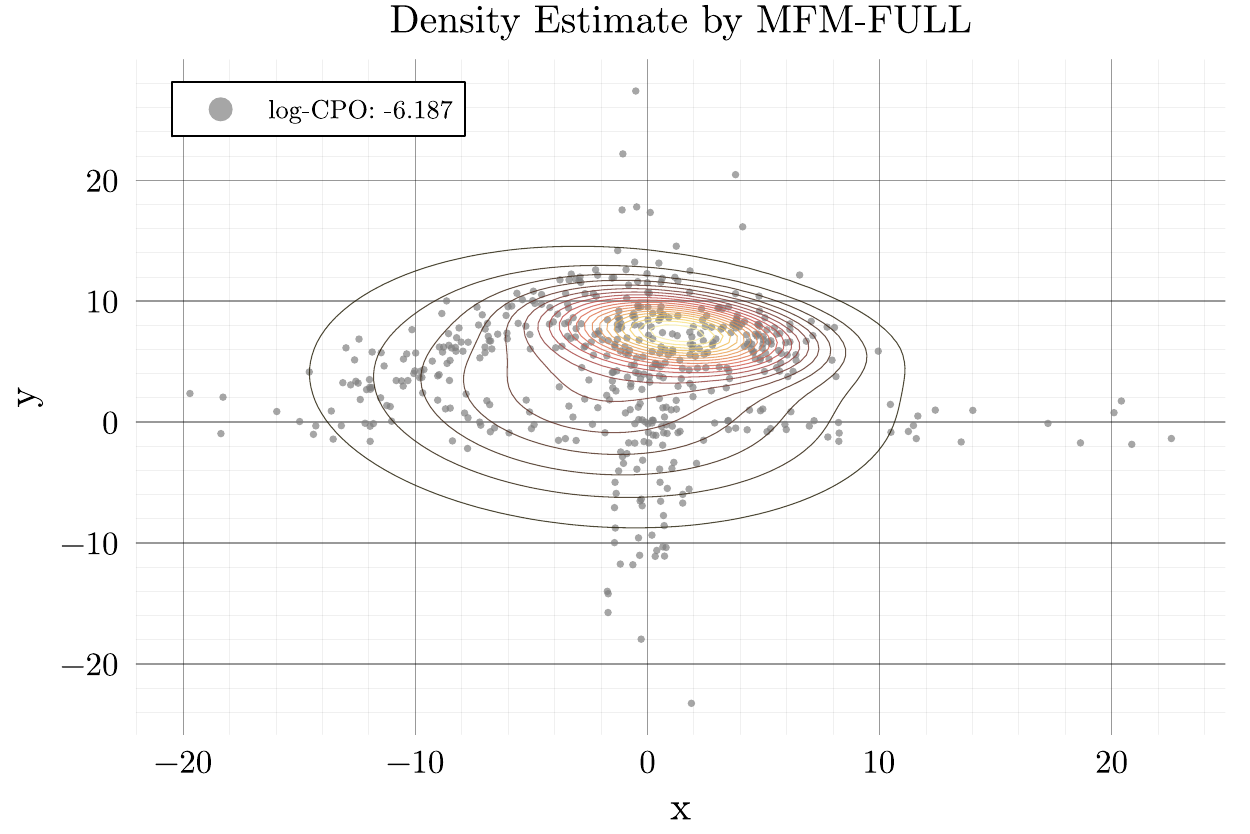} 
    }
    \\
    \subfloat{
        \includegraphics[width=0.46\linewidth]{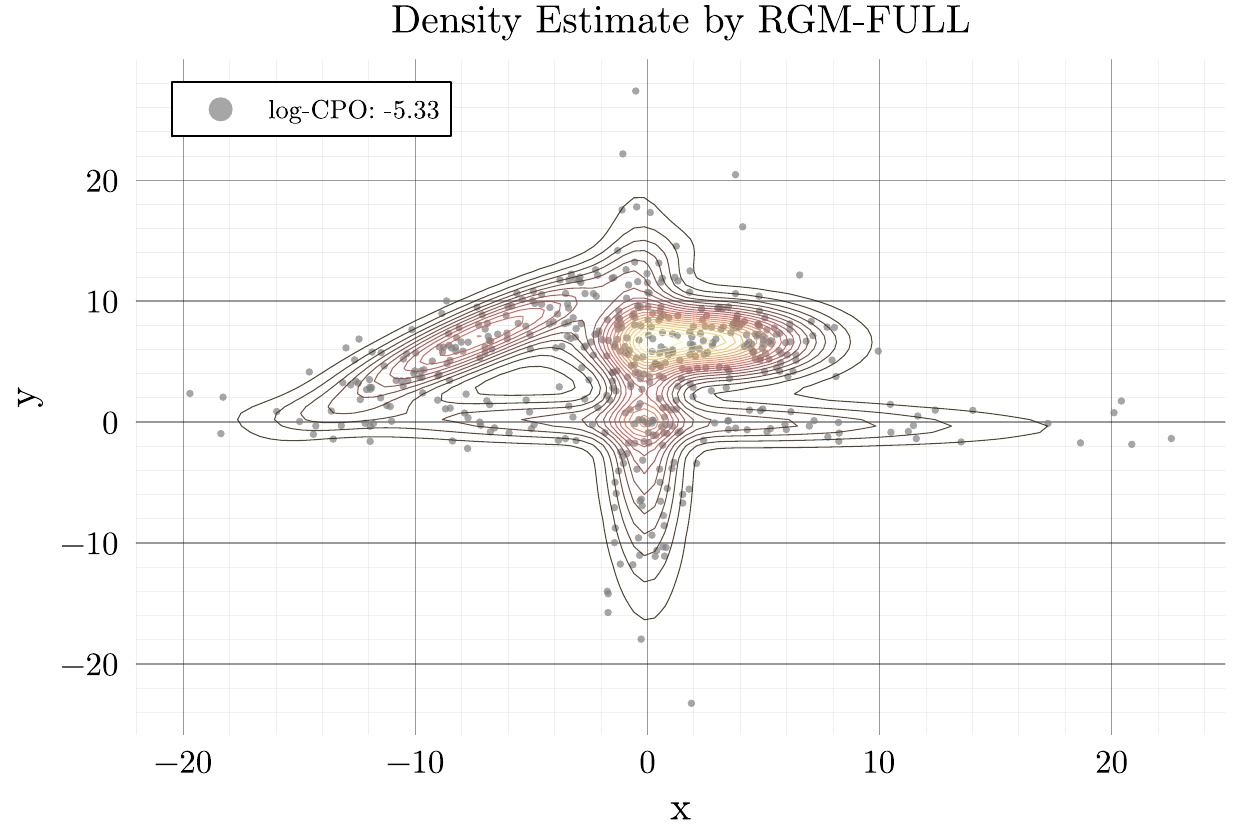} 
    } \quad
    \subfloat{
        \includegraphics[width=0.46\linewidth]{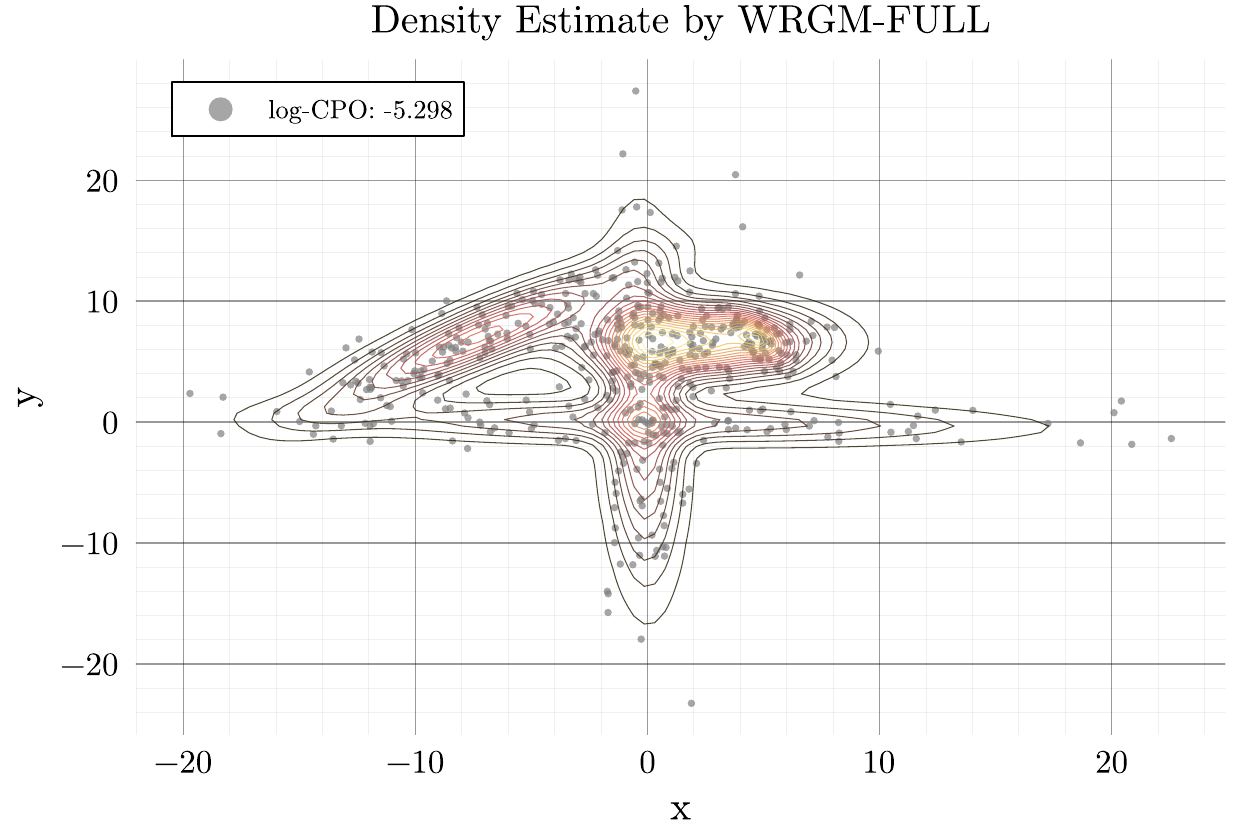} 
    } 
    \caption{Comparison of the true density with the densities estimated by the three models in Simulation 1.}
    \label{fig:sim1_de}
\end{figure}

\begin{figure}[ht]
    \centering
    \subfloat{
        \includegraphics[width=0.46\linewidth]{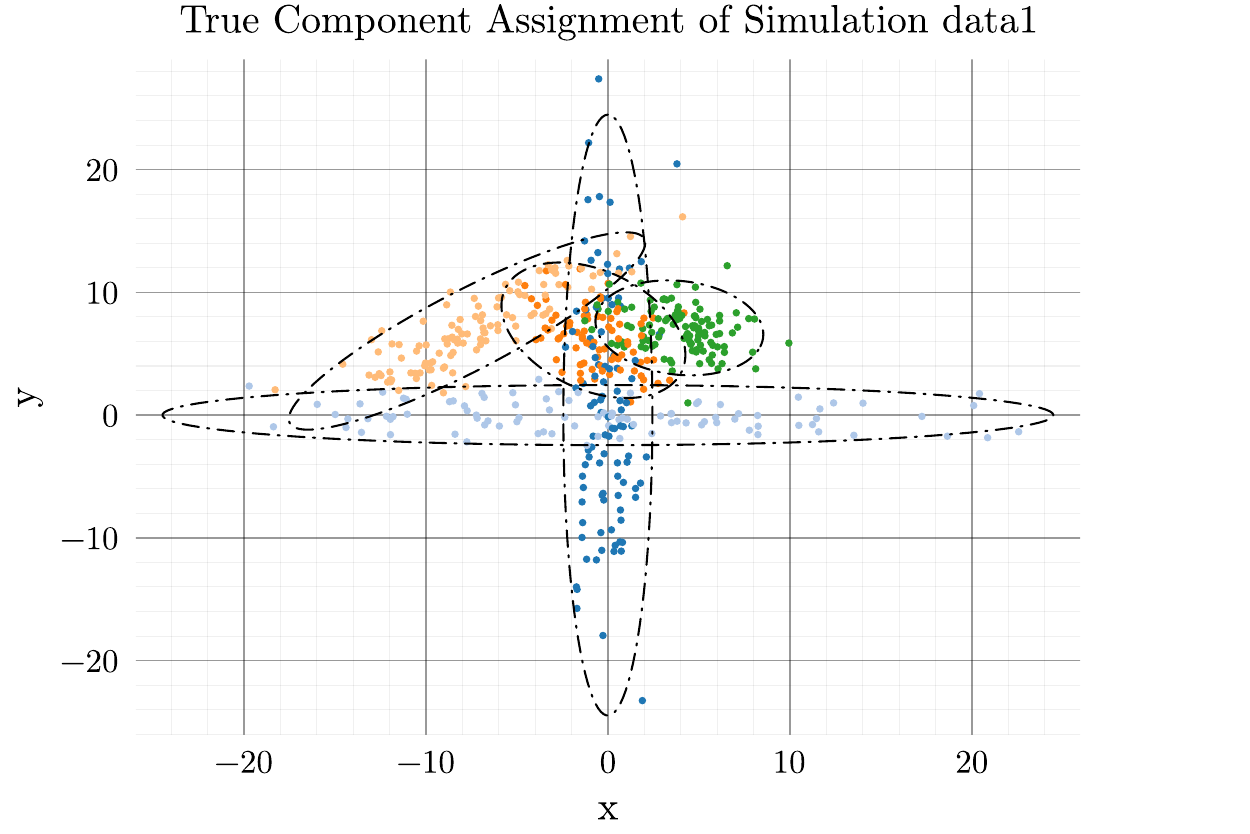} 
    } \quad 
    \subfloat{
        \includegraphics[width=0.46\linewidth]{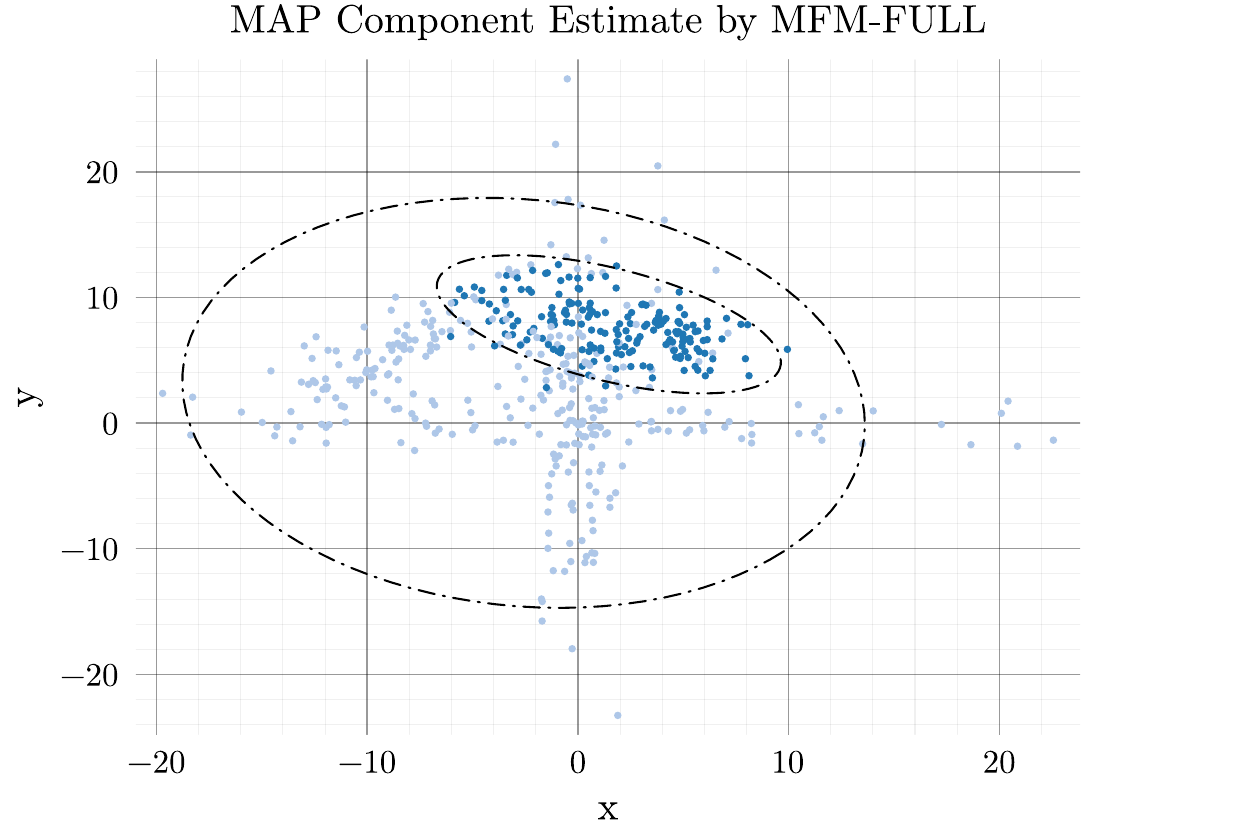} 
    }
    \\ 
    \subfloat{
        \includegraphics[width=0.46\linewidth]{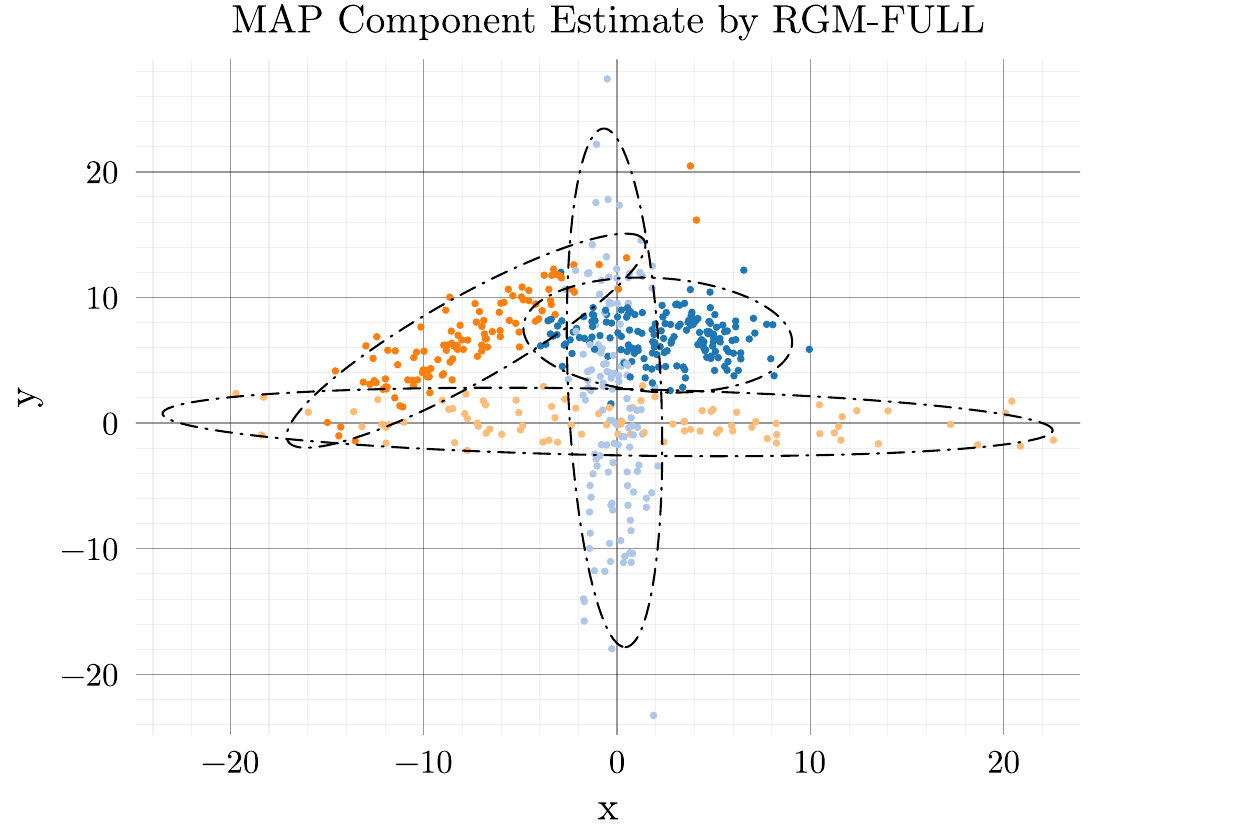} 
    } \quad
    \subfloat{
        \includegraphics[width=0.46\linewidth]{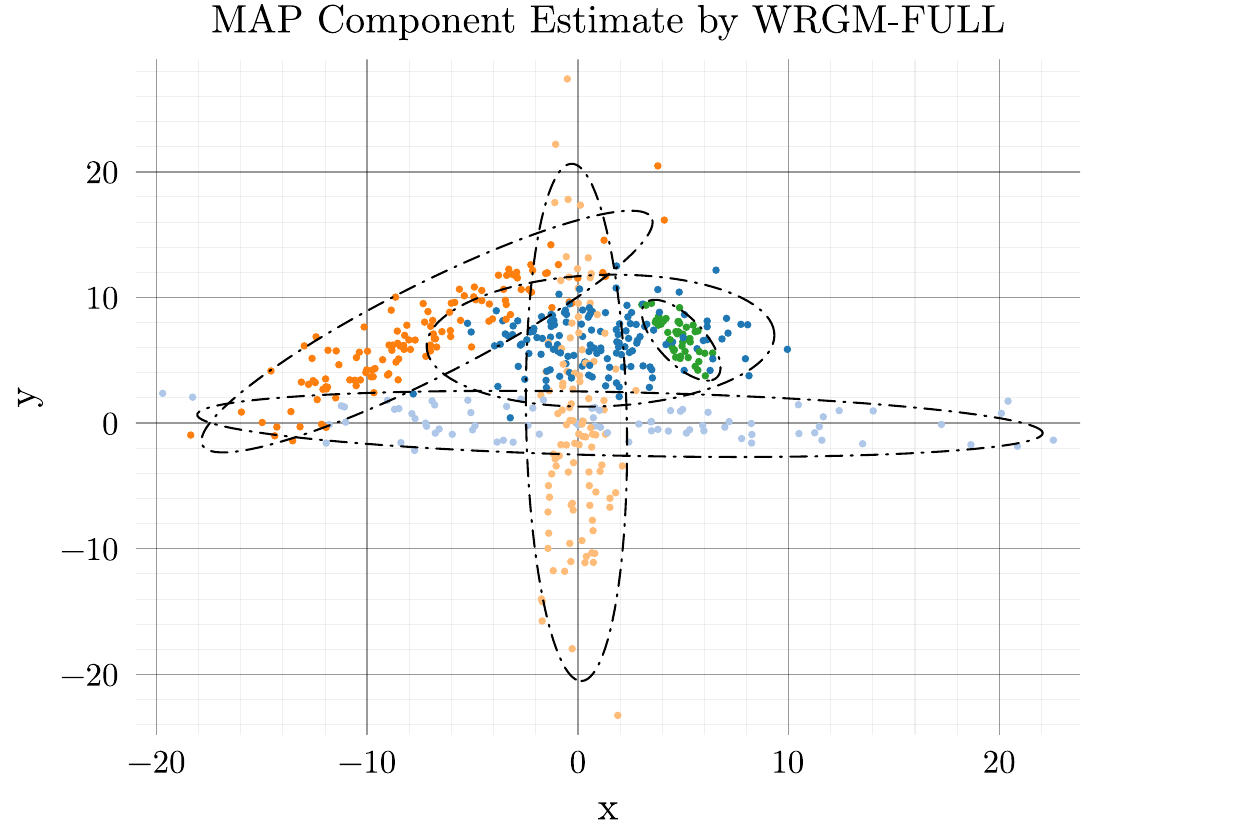} 
    }  
    \caption{Maximum a Posteriori (MAP) component assignments for the three models in Simulation 1.}
    \label{fig:sim1_map}
\end{figure}

\begin{figure}
    \centering
    \subfloat{
        \includesvg[width=0.7\linewidth]{figures/sim_data1/sim_data1_Mean_min_dist_kde} 
    }  
    \caption{Minimum pairwise distances between the component mean parameters for each of the three models in Simulation 1.}
    \label{fig:sim1_min_d}
\end{figure}

\begin{figure}[ht]
    \centering
        \subfloat{
        \includegraphics[width=0.46\linewidth]{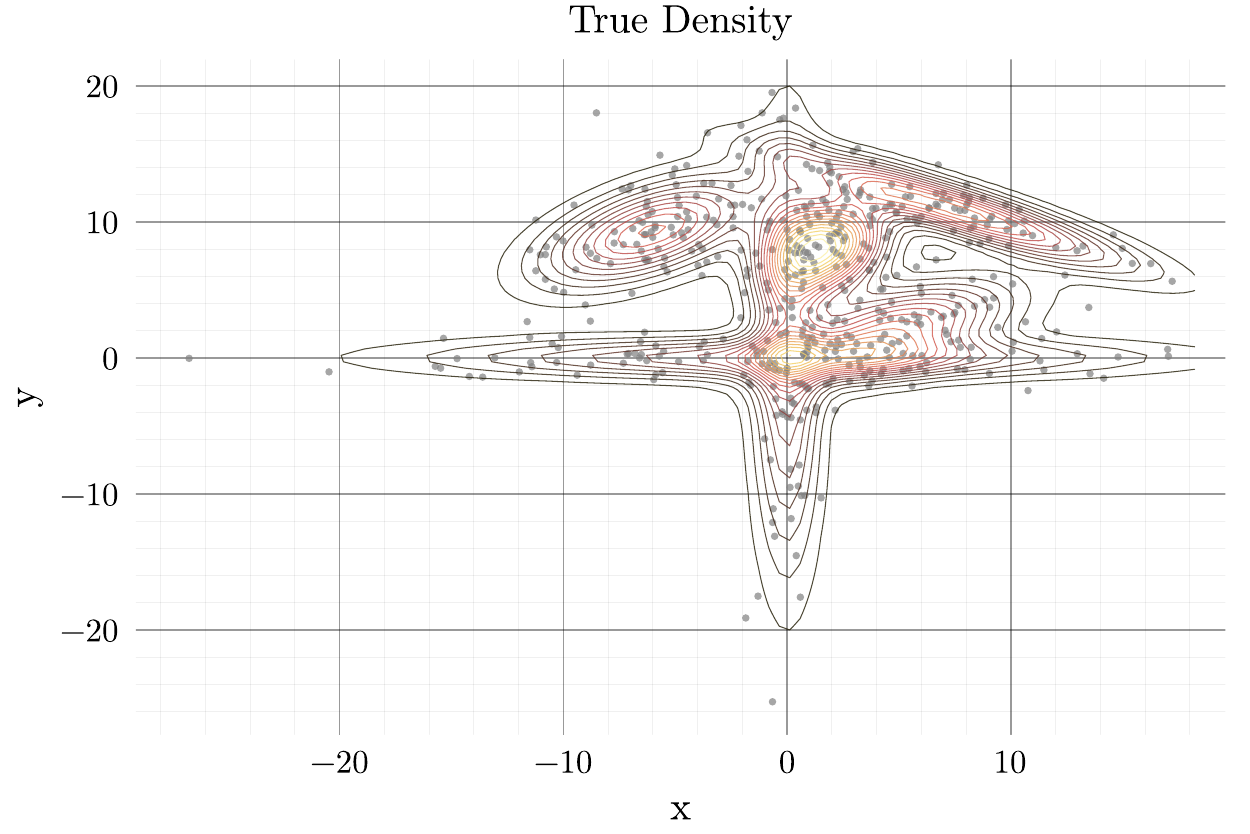} 
    } \quad
    \subfloat{
        \includegraphics[width=0.46\linewidth]{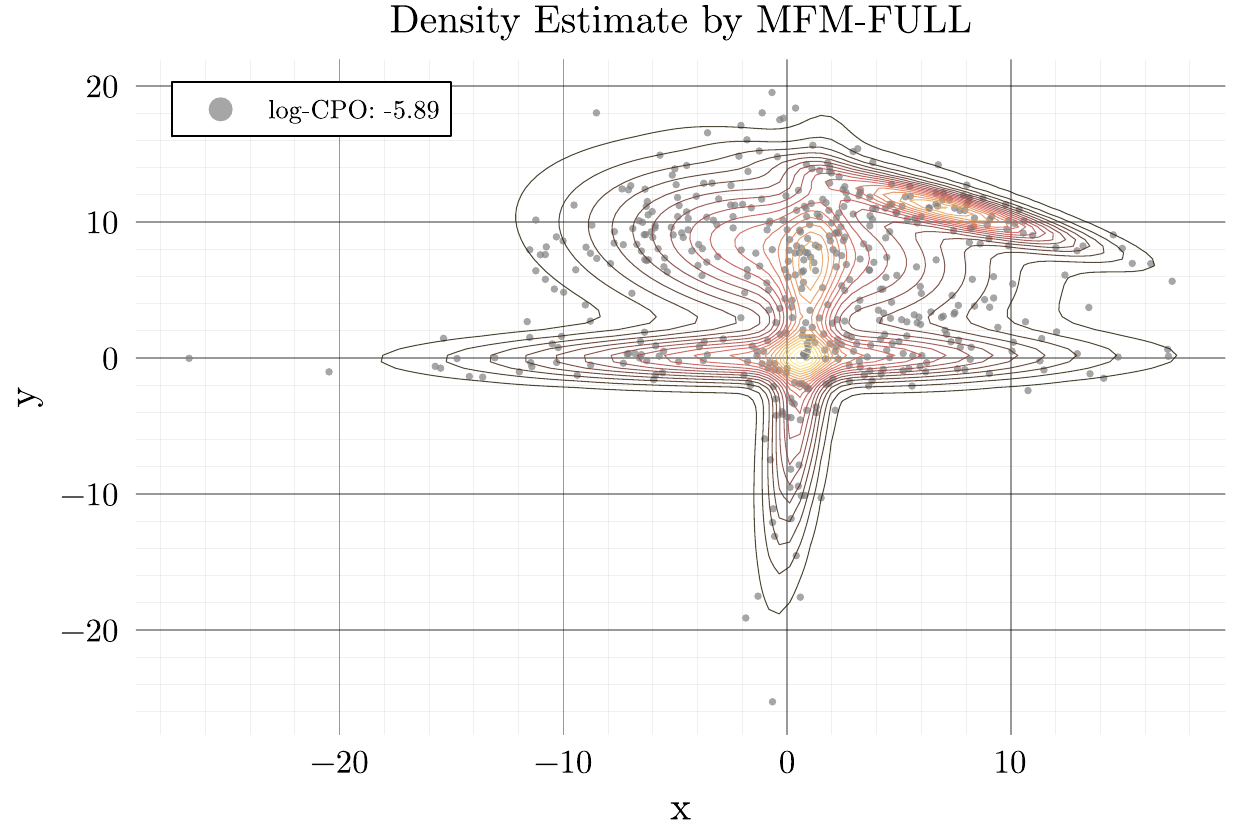} 
    }
    \\
    \subfloat{
        \includegraphics[width=0.46\linewidth]{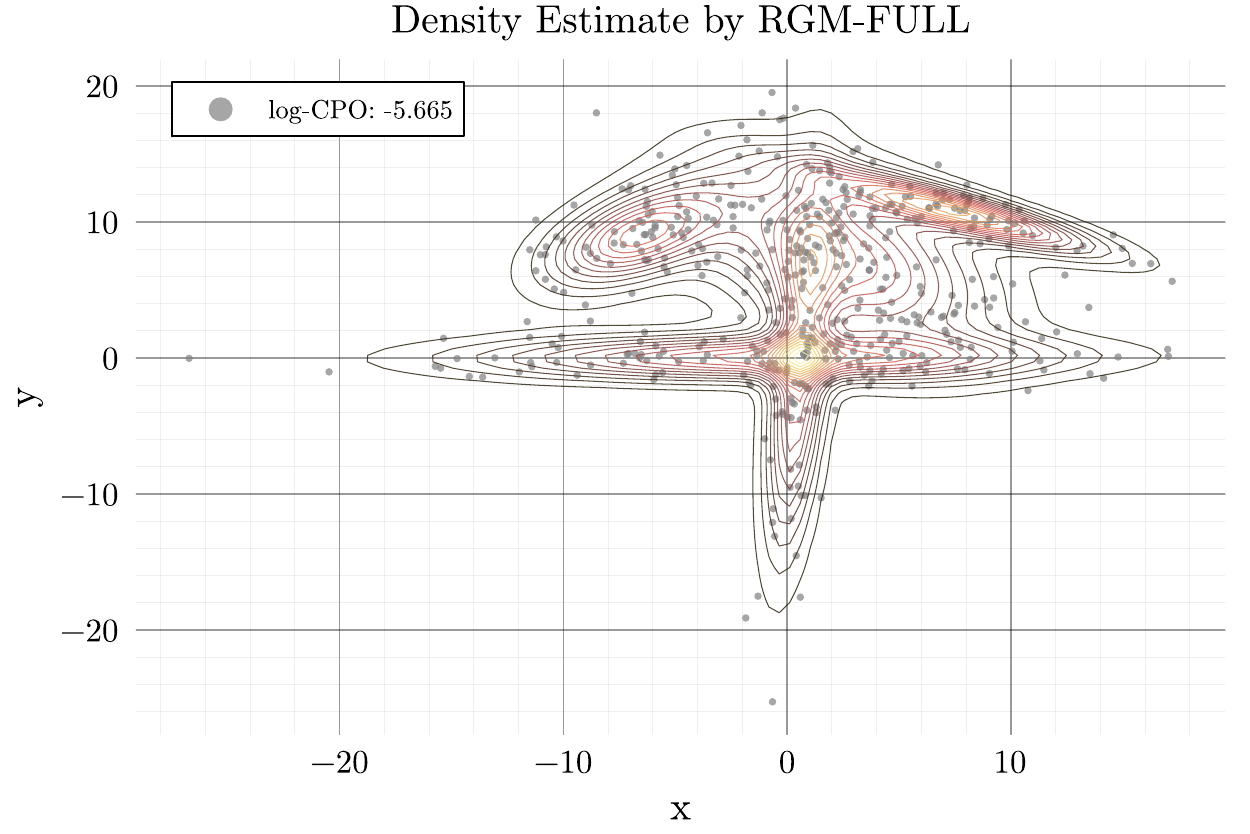} 
    } \quad 
    \subfloat{
        \includegraphics[width=0.46\linewidth]{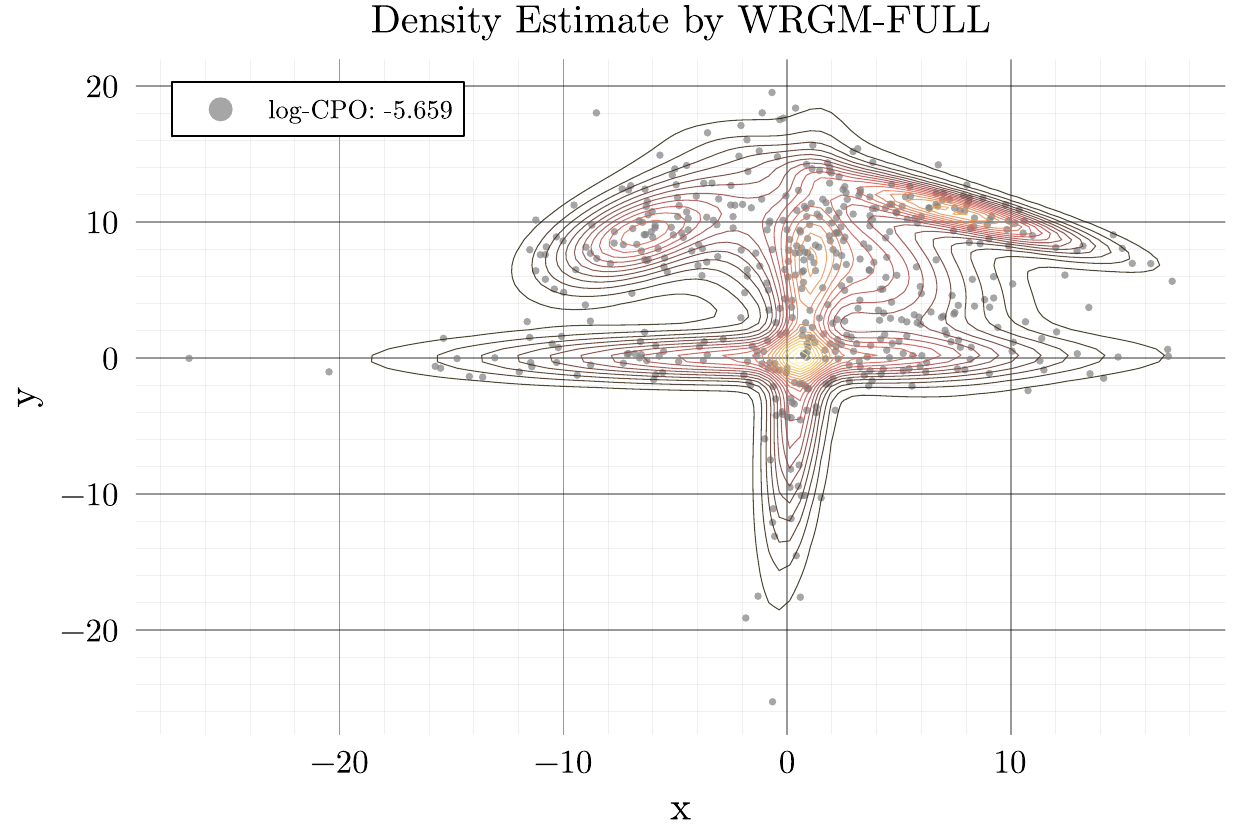} 
    }  
    \caption{Comparison of the true density with the densities estimated by the three models in Simulation 2.}
    \label{fig:sim2_de}
\end{figure}

\begin{figure}[ht]
    \centering
    \subfloat{
        \includegraphics[width=0.46\linewidth]{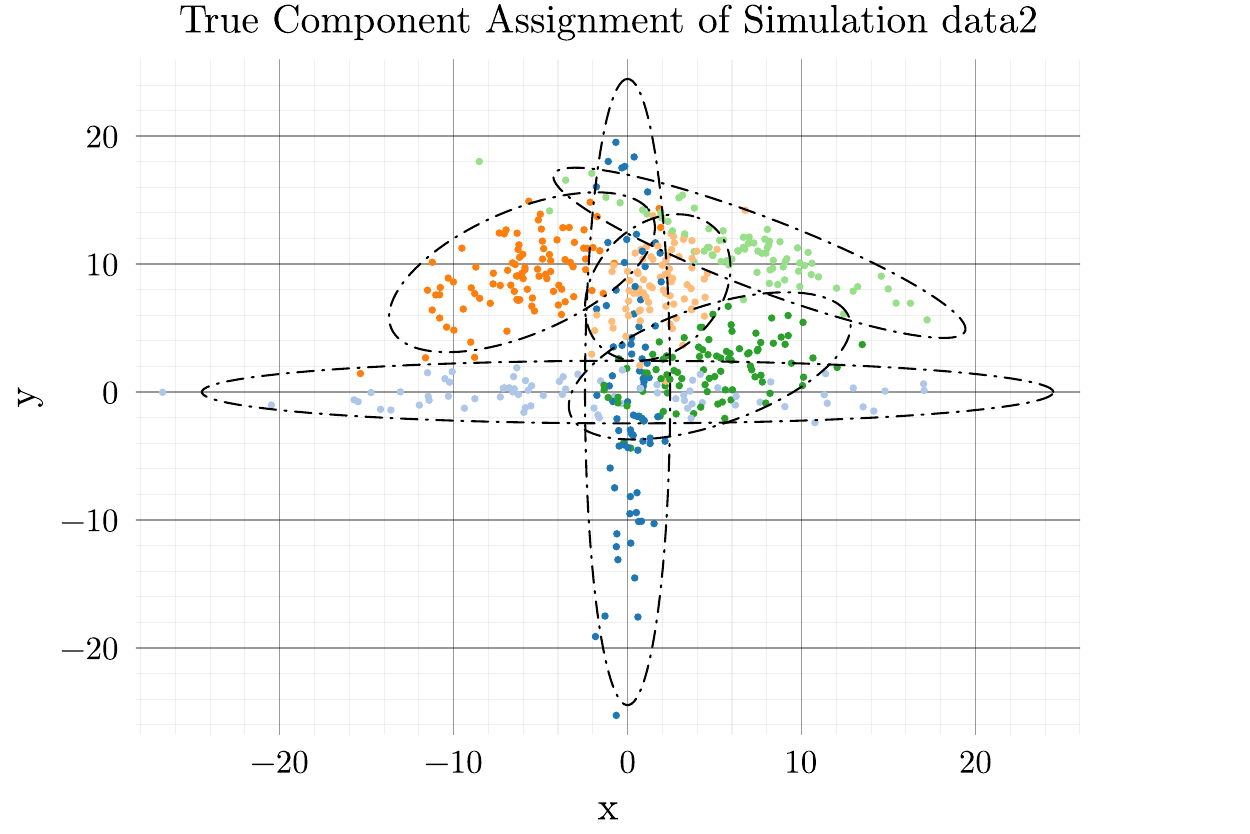} 
    } \quad 
    \subfloat{
        \includegraphics[width=0.46\linewidth]{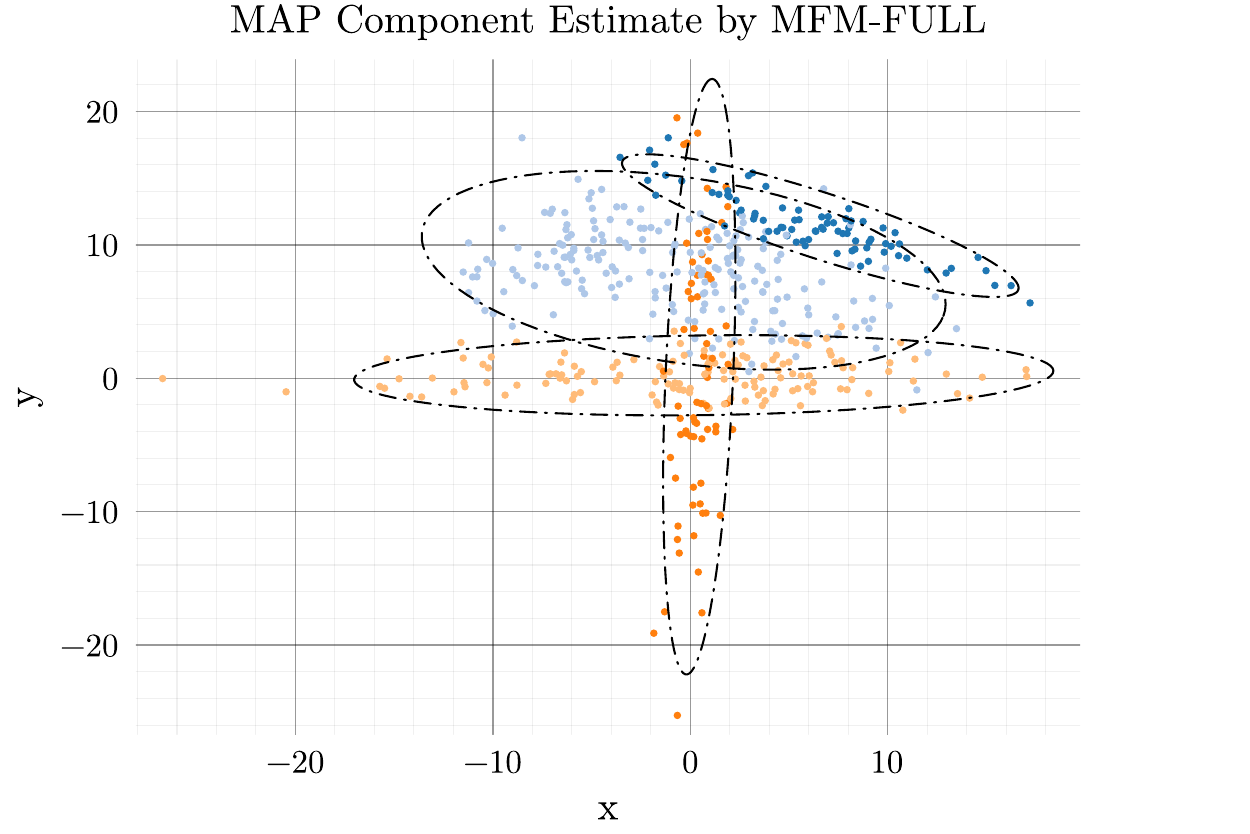} 
    }
    \\
    \subfloat{
        \includegraphics[width=0.46\linewidth]{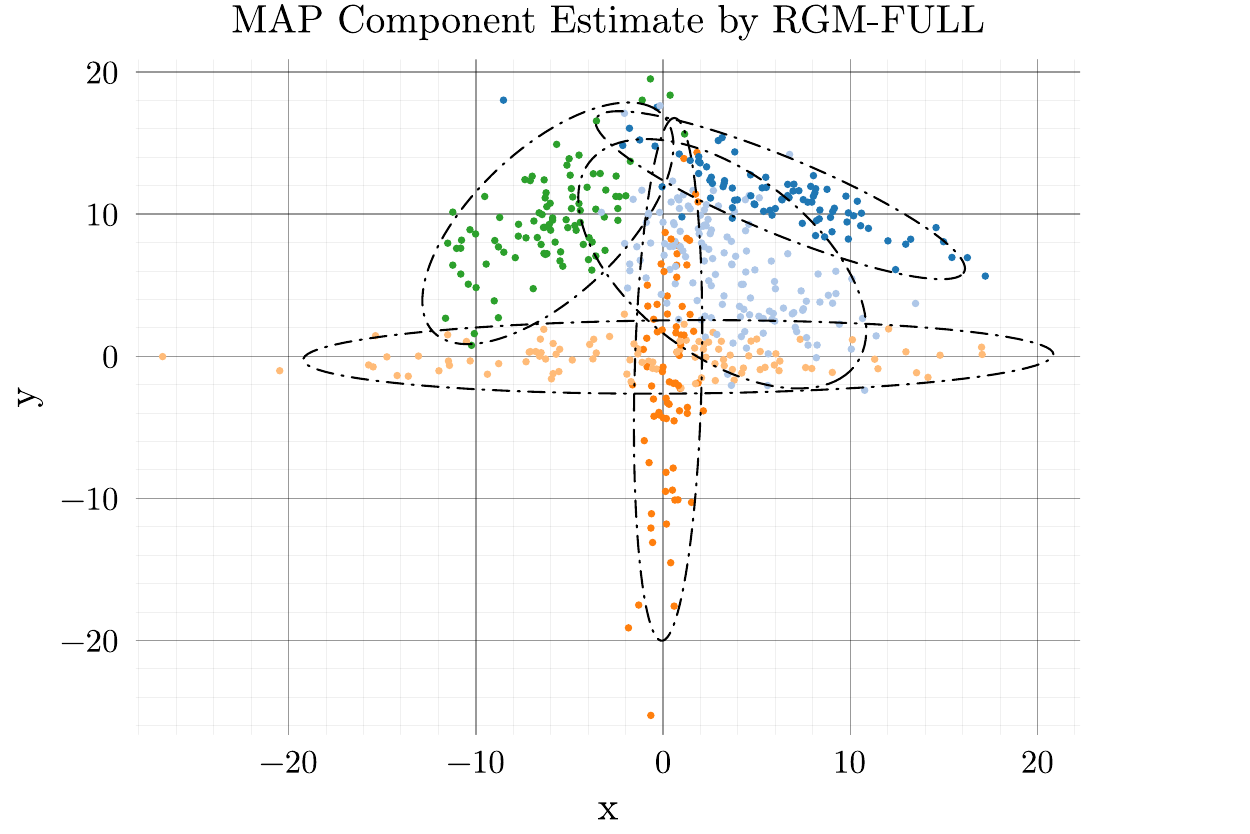} 
    } \quad
    \subfloat{
        \includegraphics[width=0.46\linewidth]{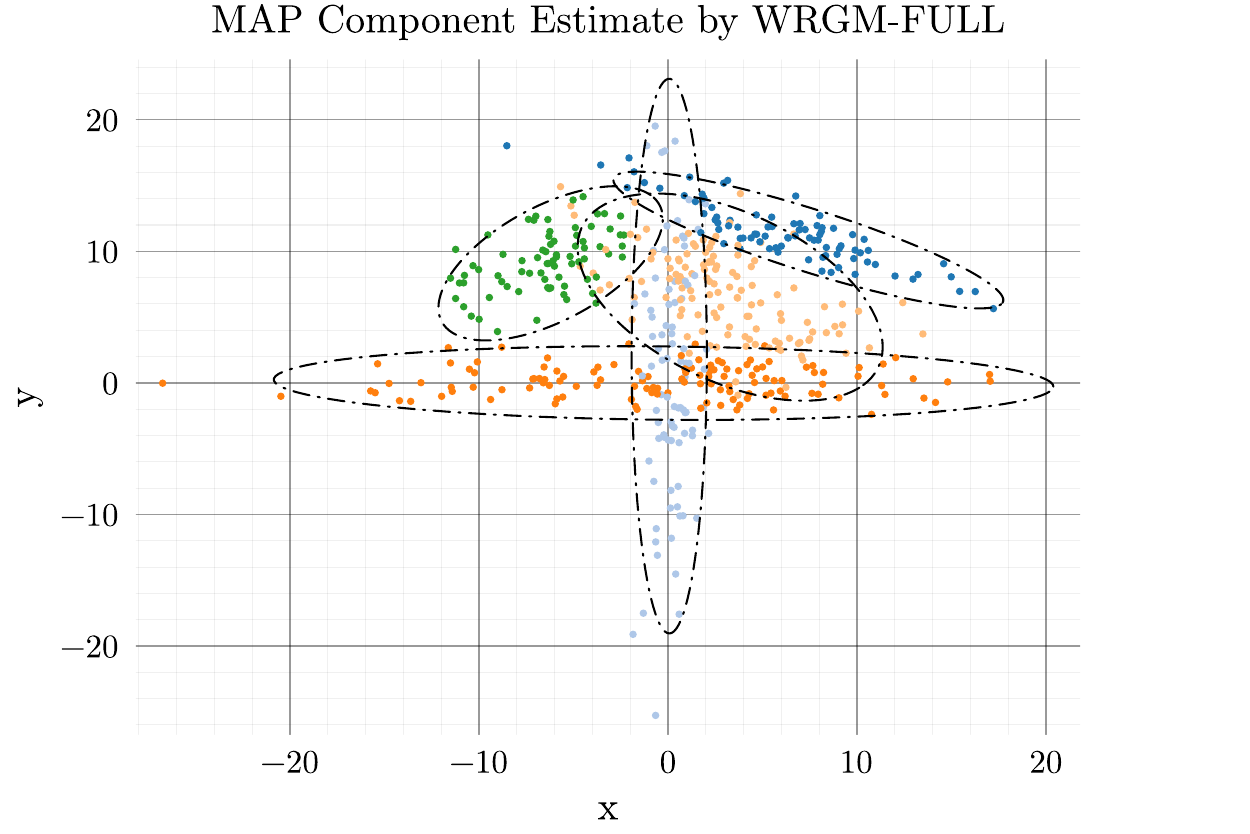} 
    }  
    \caption{Maximum a Posteriori (MAP) component assignments for the three models in Simulation 2.}
    \label{fig:sim2_map}
\end{figure}

\begin{figure}
    \centering
    \subfloat{
        \includesvg[width=0.6\linewidth]{figures/sim_data2/sim_data2_Mean_min_dist_kde} 
    } 
    \caption{Minimum pairwise distances between the component mean parameters for each of the three models in Simulation 2.}
    \label{fig:sim2_min_d}
\end{figure}

\begin{figure}[ht]
    \centering
        \subfloat{
        \includegraphics[width=0.46\linewidth]{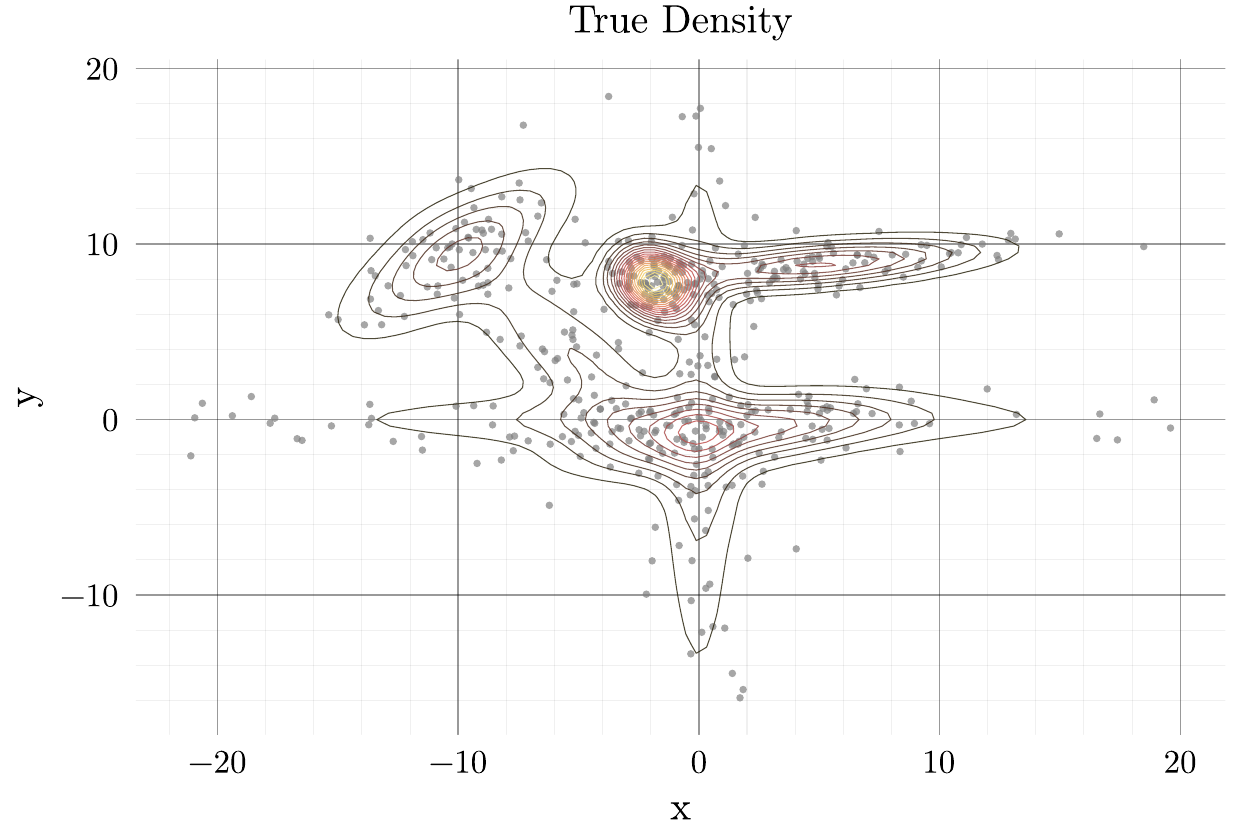} 
    } \quad
    \subfloat{
        \includegraphics[width=0.46\linewidth]{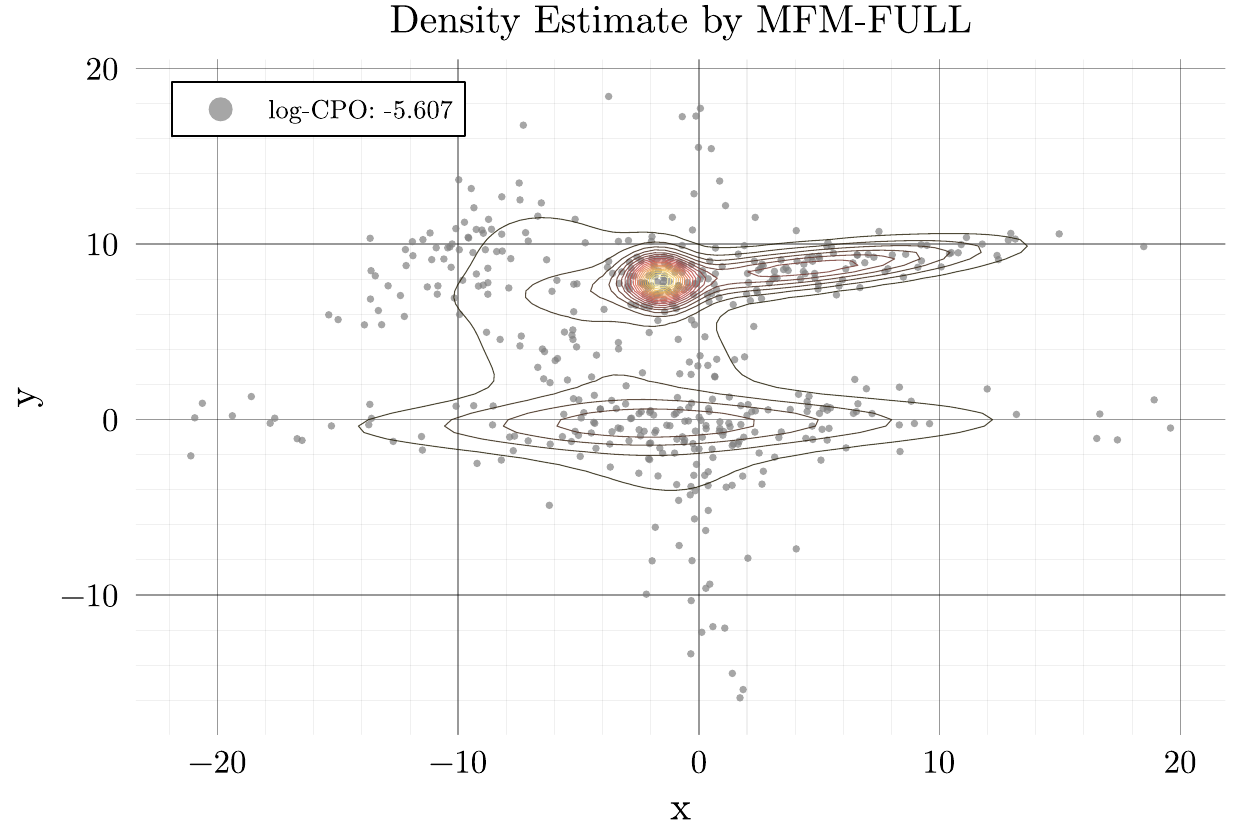} 
    }
    \\
    \subfloat{
        \includegraphics[width=0.46\linewidth]{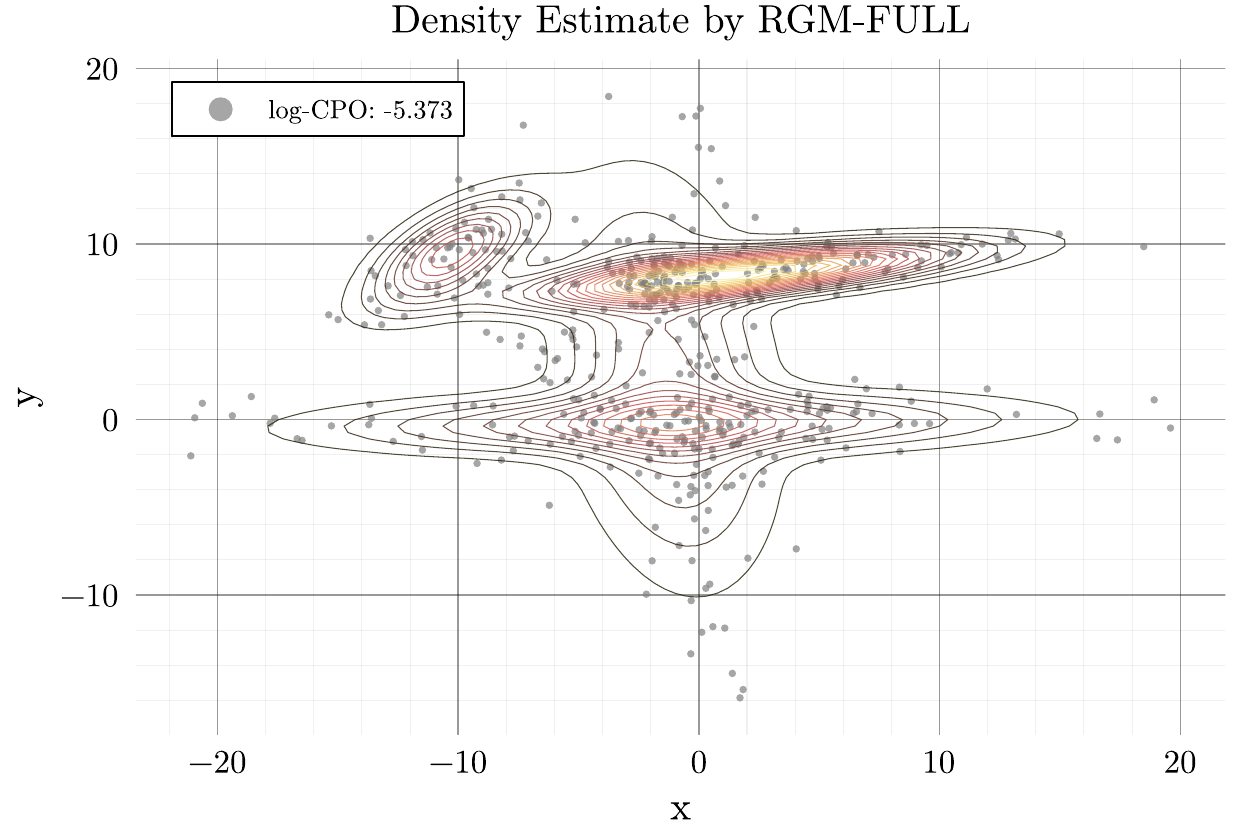} 
    } \quad 
    \subfloat{
        \includegraphics[width=0.46\linewidth]{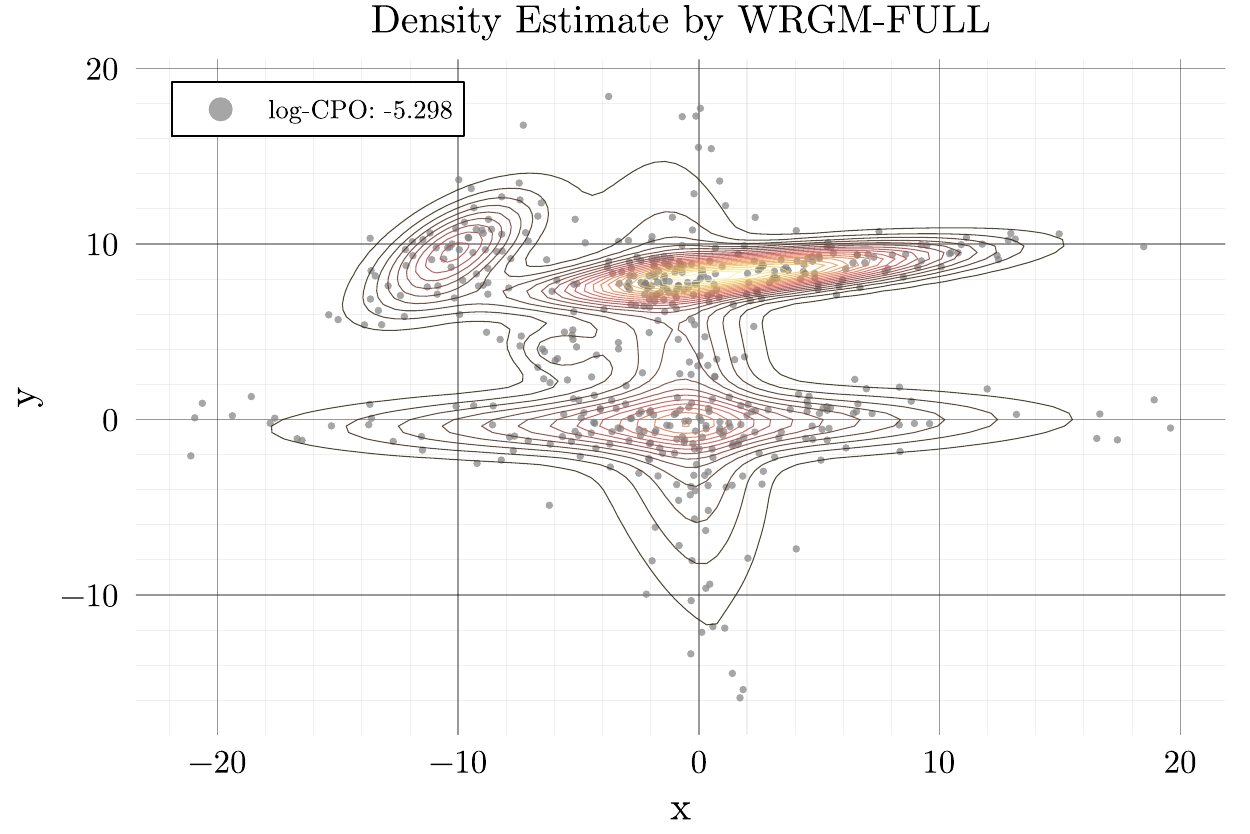} 
    }  
    \caption{Comparison of the true density with the densities estimated by the three models in Simulation 3.}
    \label{fig:sim3_de}
\end{figure}

\begin{figure}[ht]
    \centering
    \subfloat{
        \includegraphics[width=0.46\linewidth]{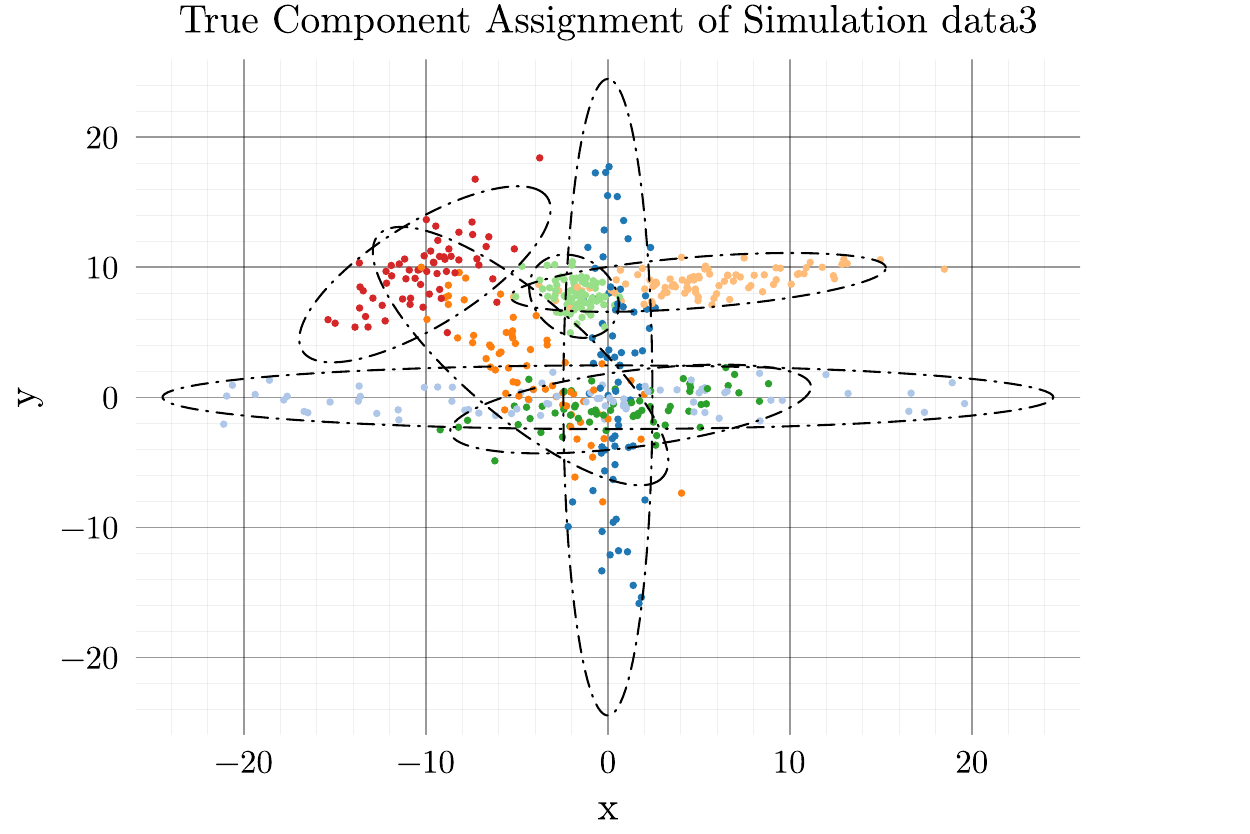} 
    } \quad 
    \subfloat{
        \includegraphics[width=0.46\linewidth]{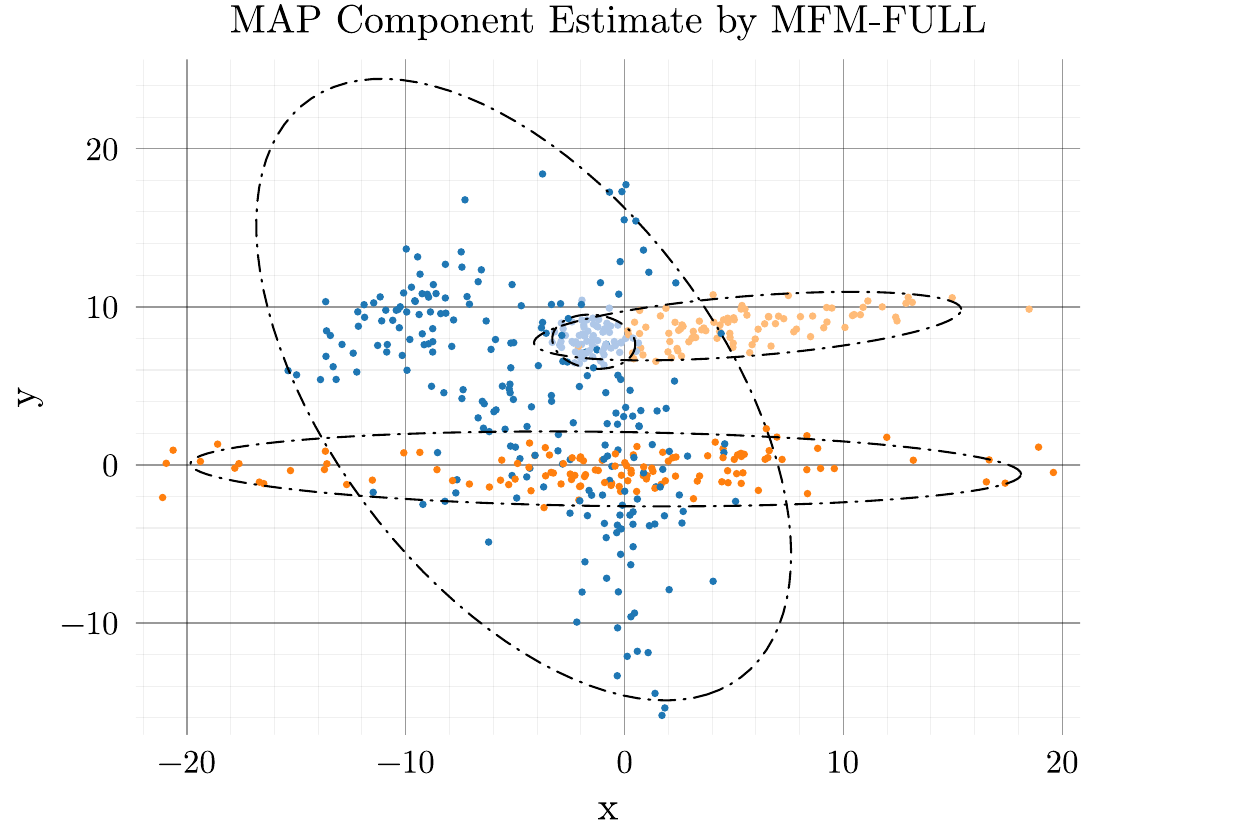} 
    }
    \\
    \subfloat{
        \includegraphics[width=0.46\linewidth]{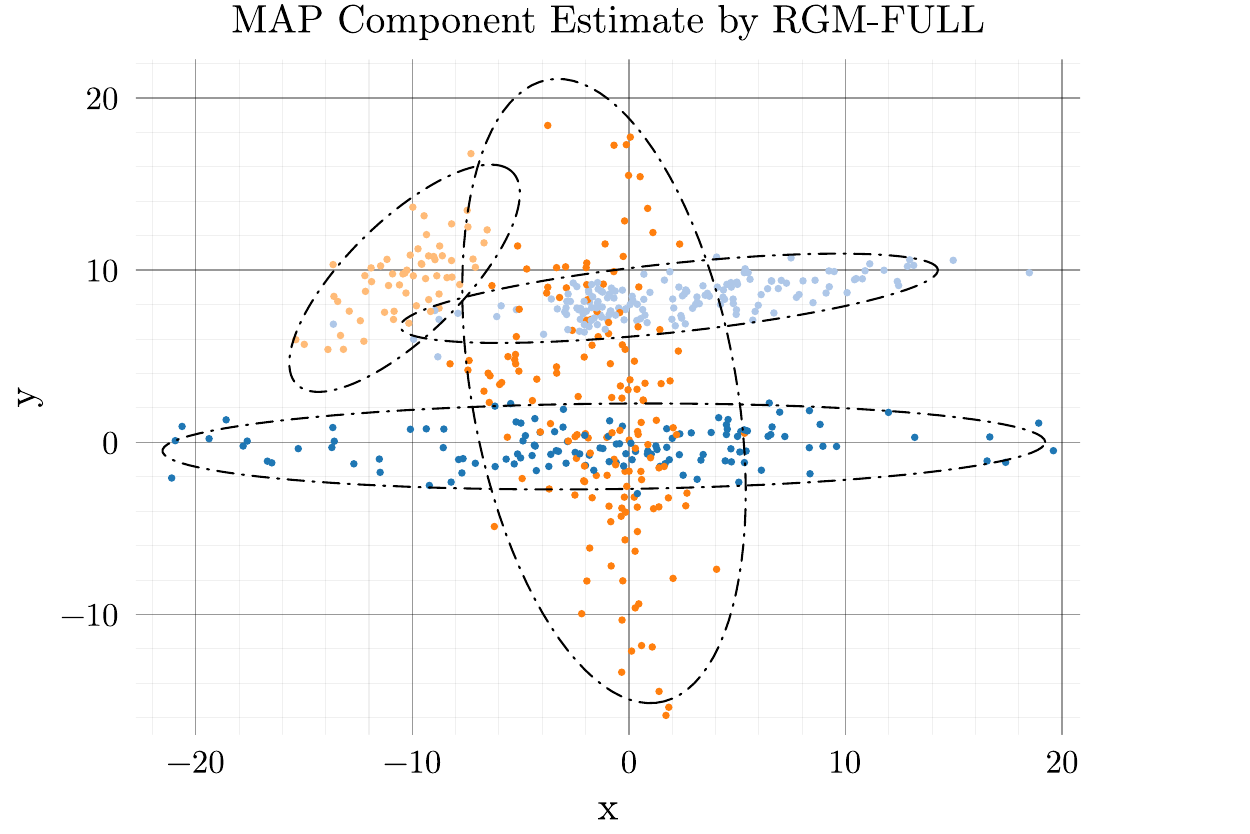} 
    } \quad 
    \subfloat{
        \includegraphics[width=0.46\linewidth]{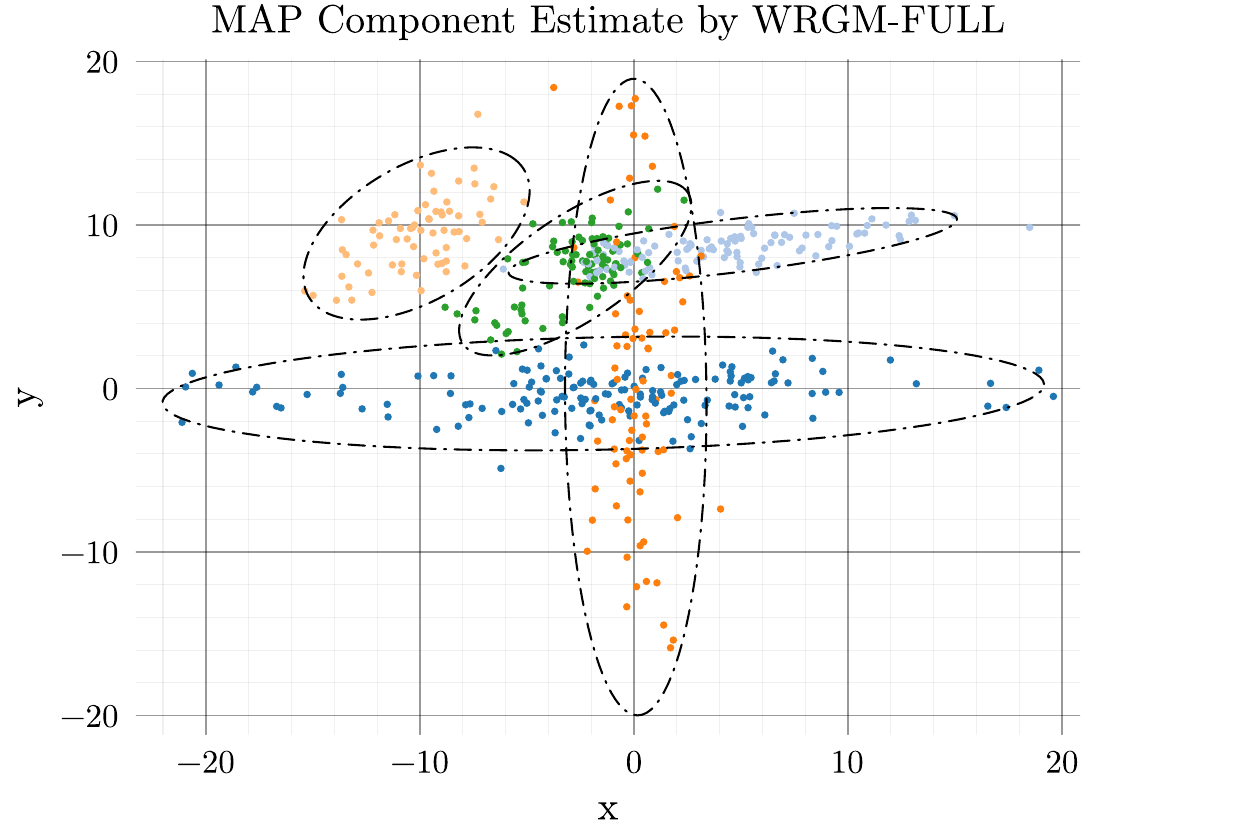} 
    }  
    \caption{Maximum a Posteriori (MAP) component assignments for the three models in Simulation 3.}
    \label{fig:sim3_map}
\end{figure} 

\begin{figure}
    \centering
    \subfloat{
        \includesvg[width=0.6\linewidth]{figures/sim_data3/sim_data3_Mean_min_dist_kde} 
    } 
    \caption{Minimum pairwise distances between the component mean parameters for each of the three models in Simulation 3.}
    \label{fig:sim3_min_d}
\end{figure}

\section{Data Applications}
\label{sec_data_app}
We apply the WRGM model to two datasets to evaluate its performance and compare it with the RGM and MFM models. The first dataset, known as the A1 dataset and originally introduced by~\cite{Asets2002}, is a two-dimensional dataset with a sample size of 3,000. It has been widely used in the clustering literature~\cite{denoeux2015ek,du2016study,wang2016automatic,gu2019distance,ros2019munec,zhang2023density}. The second dataset is the Graft-versus-Host Disease (GvHD) flow cytometry dataset introduced in~\citep{brinkman2007high}. It contains measurements on \( p = 4 \) markers: CD3, CD4, CD8b, and CD8. Following the preprocessing approach in~\cite{fuquene2019choosing}, we retained only observations where CD3 \( > 300 \), resulting in a subset of 1,802 samples. For clarity of visualization and interpretation, we focused our analysis on the (CD8, CD4) marker pair.
\\\\Different prior hyperparameters are used for the two datasets due to their varying scales. For the A1 dataset, we set the Dirichlet concentration parameter to \( \beta = 1 \). The prior for the mean parameters of the mixture components, \( p_m \), follows a normal distribution with zero mean and covariance \( \tau^2 I \), where \( \tau = 10^5 \). The repulsion function is \( g(x) = \frac{x}{x + g_0} \), with \( g_0 = 100 \). For the truncated inverse-Wishart prior on the covariance matrices, we set \( \nu = 4 \), and the eigenvalue bounds are \( (10^{-12}, 10^{12}) \). The prior on \( K \) is a zero-truncated Poisson distribution with \( \lambda = 1 \). For the GvHD dataset, we maintain the same prior structure, but with different hyperparameters: \( \beta = 1 \), \( \tau = 100 \), \( g_0 = 50 \), \( \nu = 4 \), and \( \lambda = 1 \).
\\\\
In addition to fitting the three models with full covariance matrices, we also consider their diagonal covariance counterparts for comparison, resulting in a total of six models. For each model, the MCMC is run for 10,000 burn-in iterations, followed by the 5,000 posterior samples.
\\\\
For the A1 dataset, the estimated densities, MAP component assignments, and the minimum pairwise distances between the mean parameters of the components under these six models are presented in Figures~\ref{fig:a1_de}, \ref{fig:a1_map}, and \ref{fig:a1_min_d}, respectively. We observe that both RGM and WRGM outperform MFM in terms of log-CPO under both diagonal and full covariance specifications. WRGM also demonstrates a slight improvement over RGM. Interestingly, both RGM and WRGM tend to produce a larger number of components compared to MFM. The MAP component allocations from RGM and WRGM are generally consistent with each other, but differ noticeably from those obtained under MFM. As shown in Figure~\ref{fig:a1_min_d}, the minimum pairwise distances between the mean parameters of mixture components under WRGM tend to be smaller than those under the other two models. This indicates that WRGM allows components to be closer together in location, which may partially explain why the estimated densities under WRGM—both with diagonal and full covariance structures—contain more components.
\\\\
For the GvHD dataset, the estimated densities, MAP component assignments, and the minimum pairwise distances between the mean parameters of the components for the six models are shown in Figures~\ref{fig:gvhd_de}, \ref{fig:gvhd_map}, and \ref{fig:gvhd_min_d}, respectively. In this case, we find that the estimated densities and MAP assignments are largely consistent across models, likely reflecting the relatively simple structure of the dataset. However, some differences arise when comparing the diagonal covariance models: specifically, three components identified by WRGM appear to be merged into a single component under RGM. This discrepancy may be attributed to the repulsive prior in RGM penalizing the close proximity of component means more strongly than WRGM.

\begin{figure}[ht]
    \centering
        \subfloat{
        \includegraphics[width=0.46\linewidth]{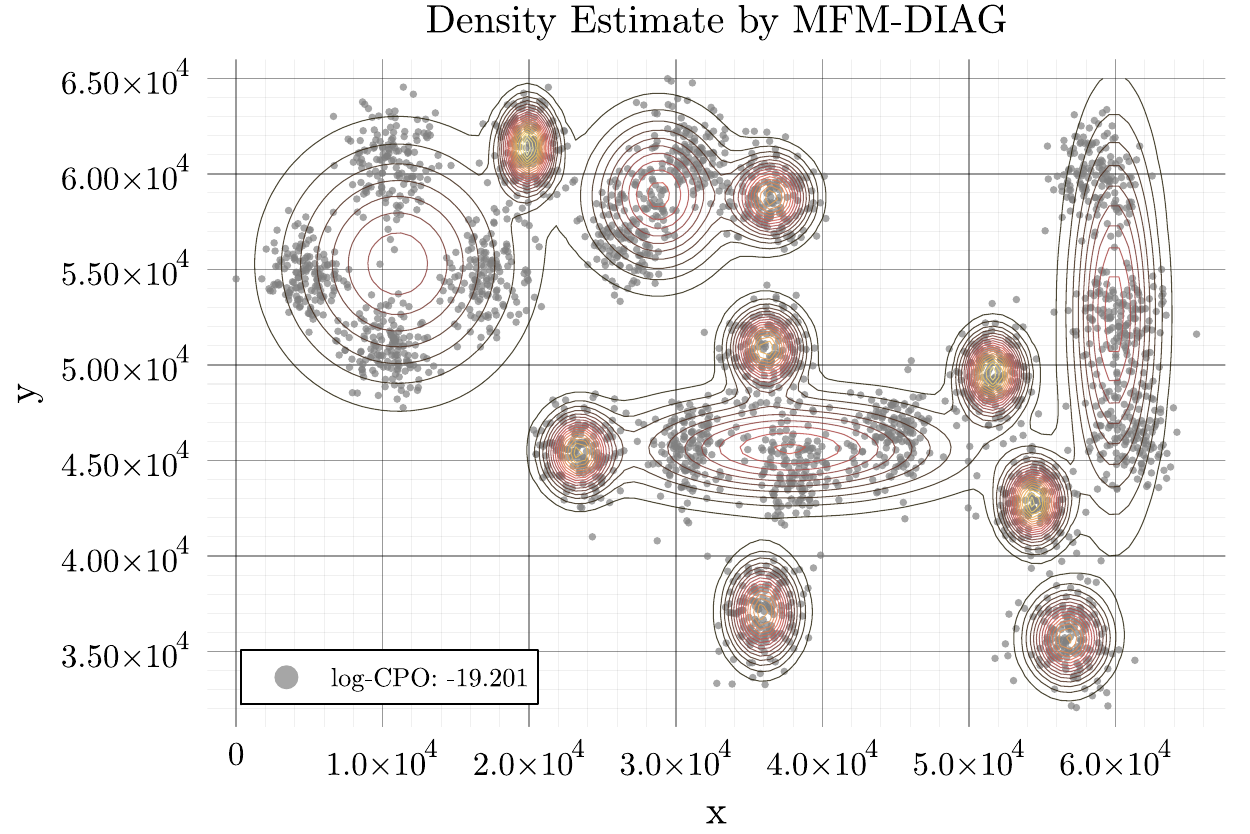} 
    } \quad
    \subfloat{
        \includegraphics[width=0.46\linewidth]{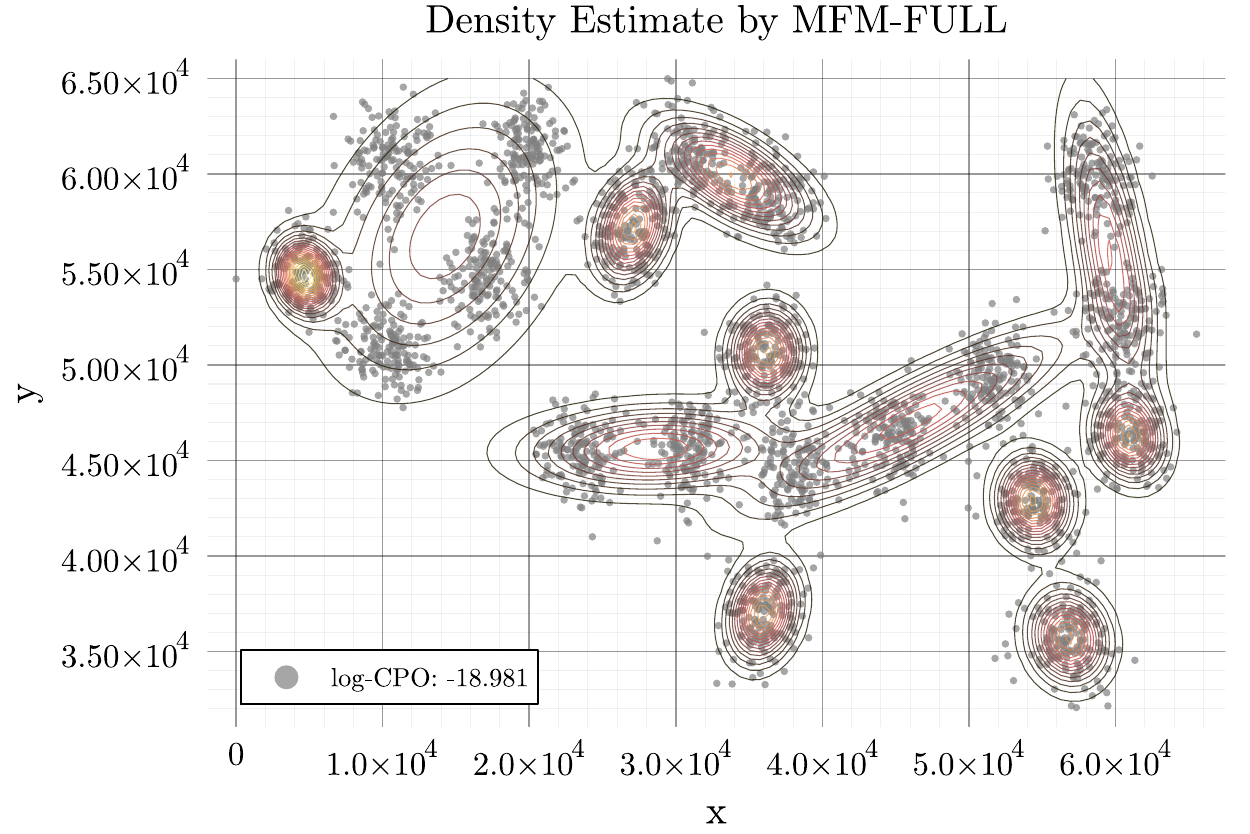} 
    }
    \\
    \subfloat{
        \includegraphics[width=0.46\linewidth]{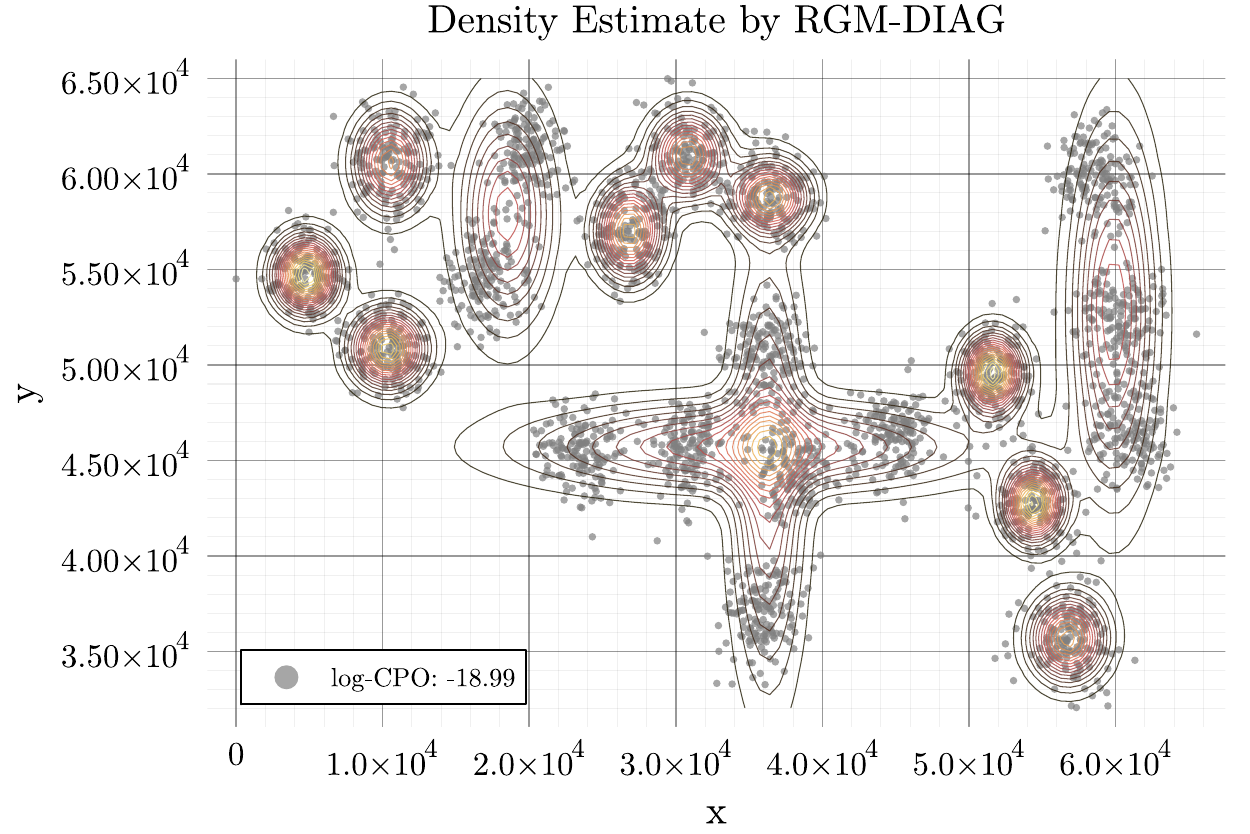} 
    } \quad
    \subfloat{
        \includegraphics[width=0.46\linewidth]{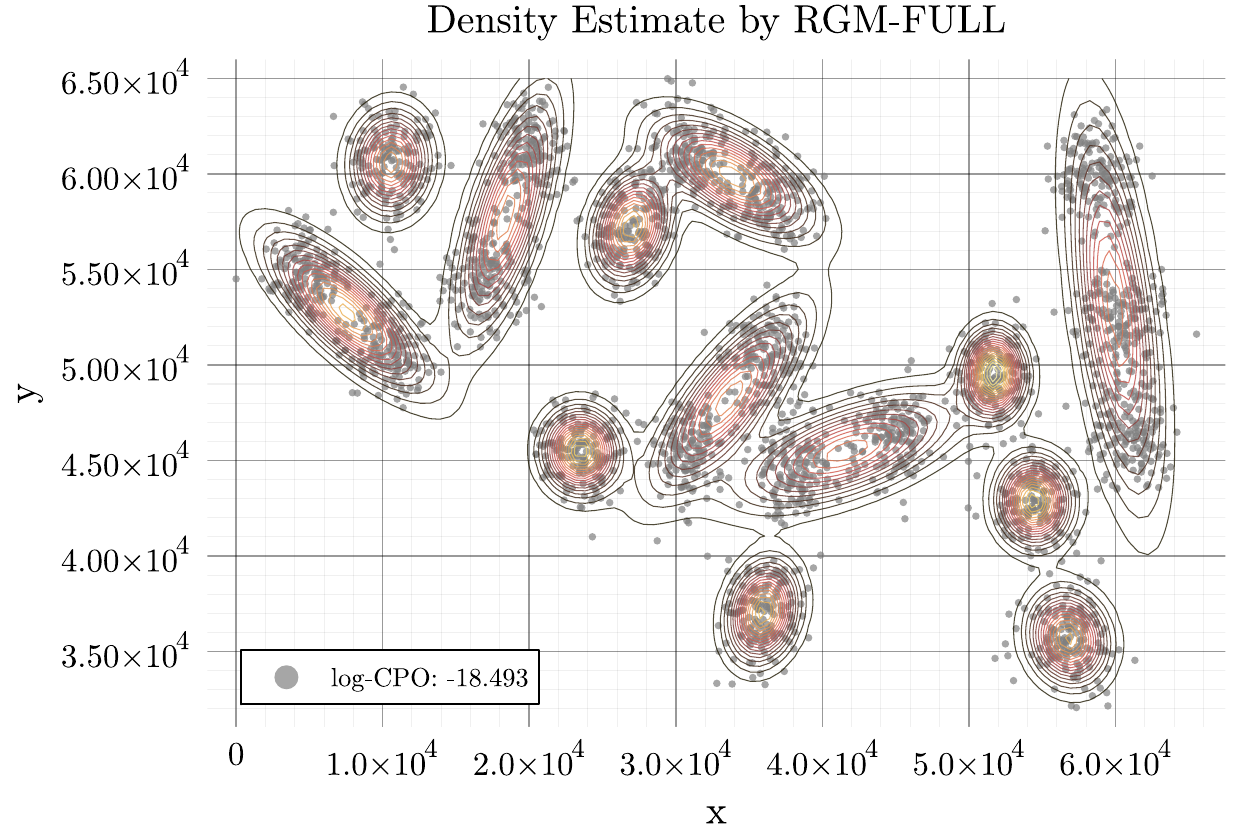} 
    } 
    \\
    \subfloat{
        \includegraphics[width=0.46\linewidth]{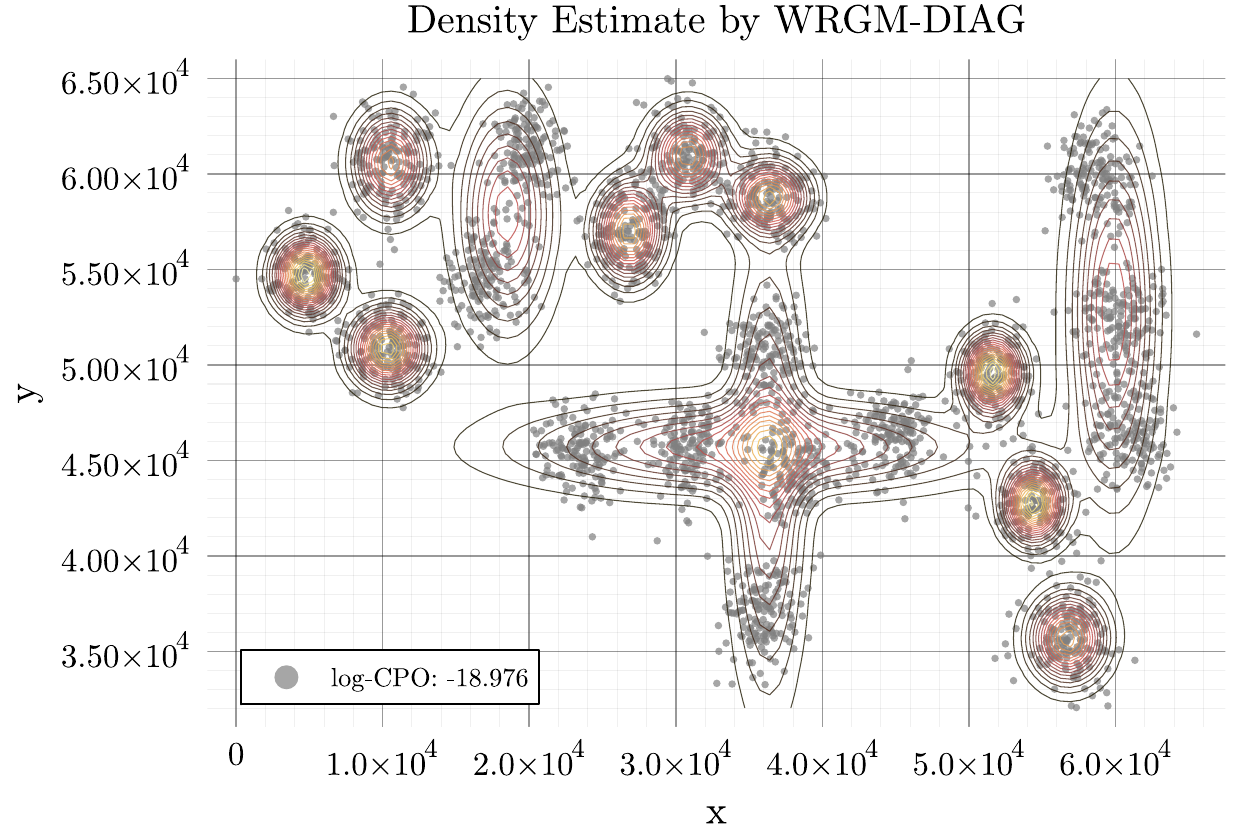} 
    } \quad
    \subfloat{
        \includegraphics[width=0.46\linewidth]{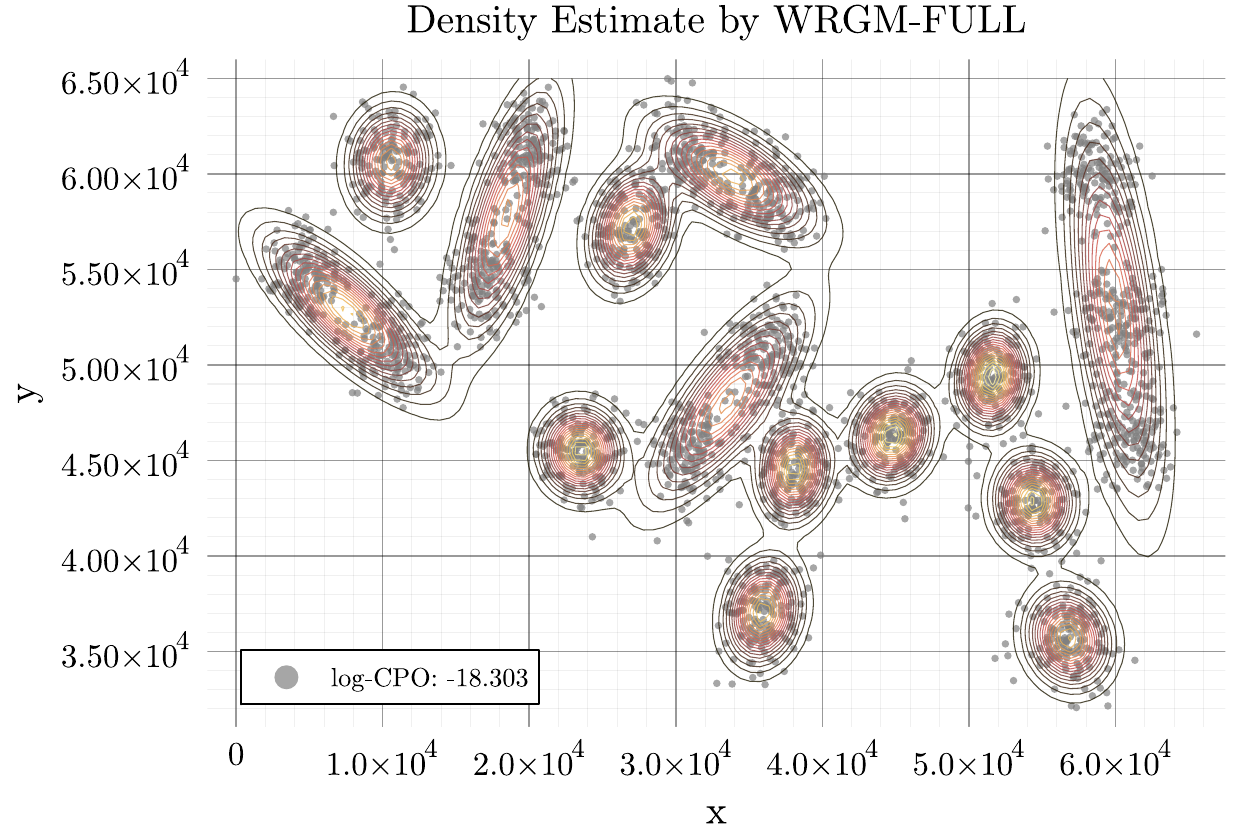} 
    }  
    \caption{Estimated densities by the six models for the A1 dataset}
    \label{fig:a1_de}
\end{figure}

\begin{figure}[ht]
    \centering
    \subfloat{
        \includegraphics[width=0.46\linewidth]{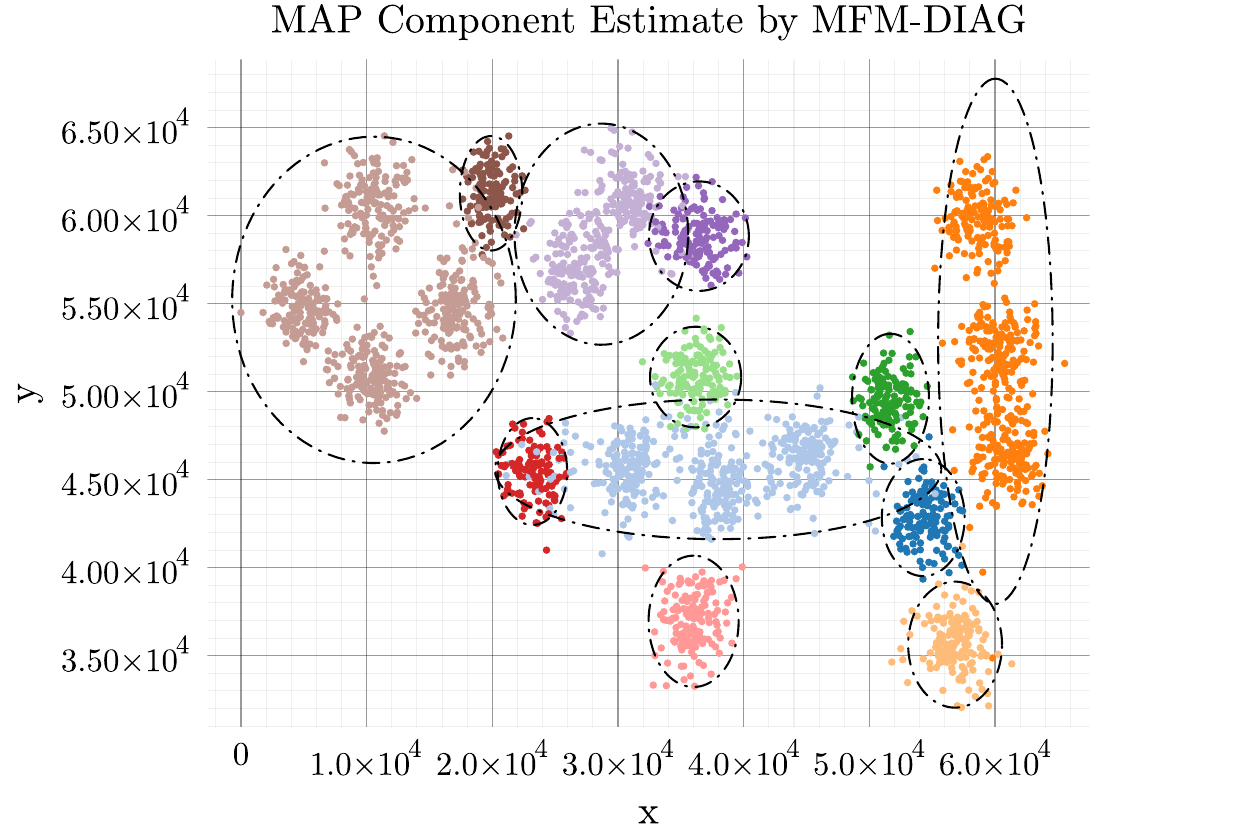} 
    } \quad 
    \subfloat{
        \includegraphics[width=0.46\linewidth]{figures/a1/a1_MFM-DIAG_map.pdf} 
    }
    \\
    \subfloat{
        \includegraphics[width=0.46\linewidth]{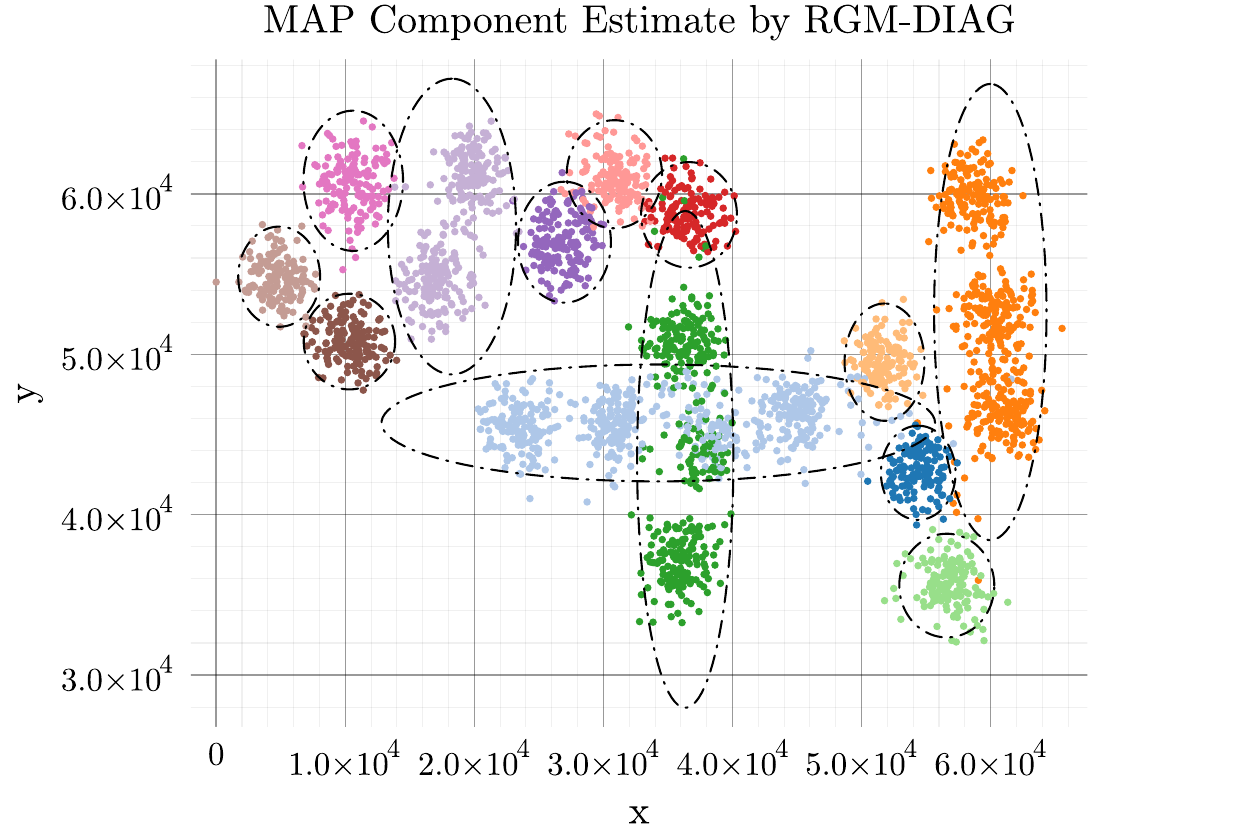} 
    } \quad
    \subfloat{
        \includegraphics[width=0.46\linewidth]{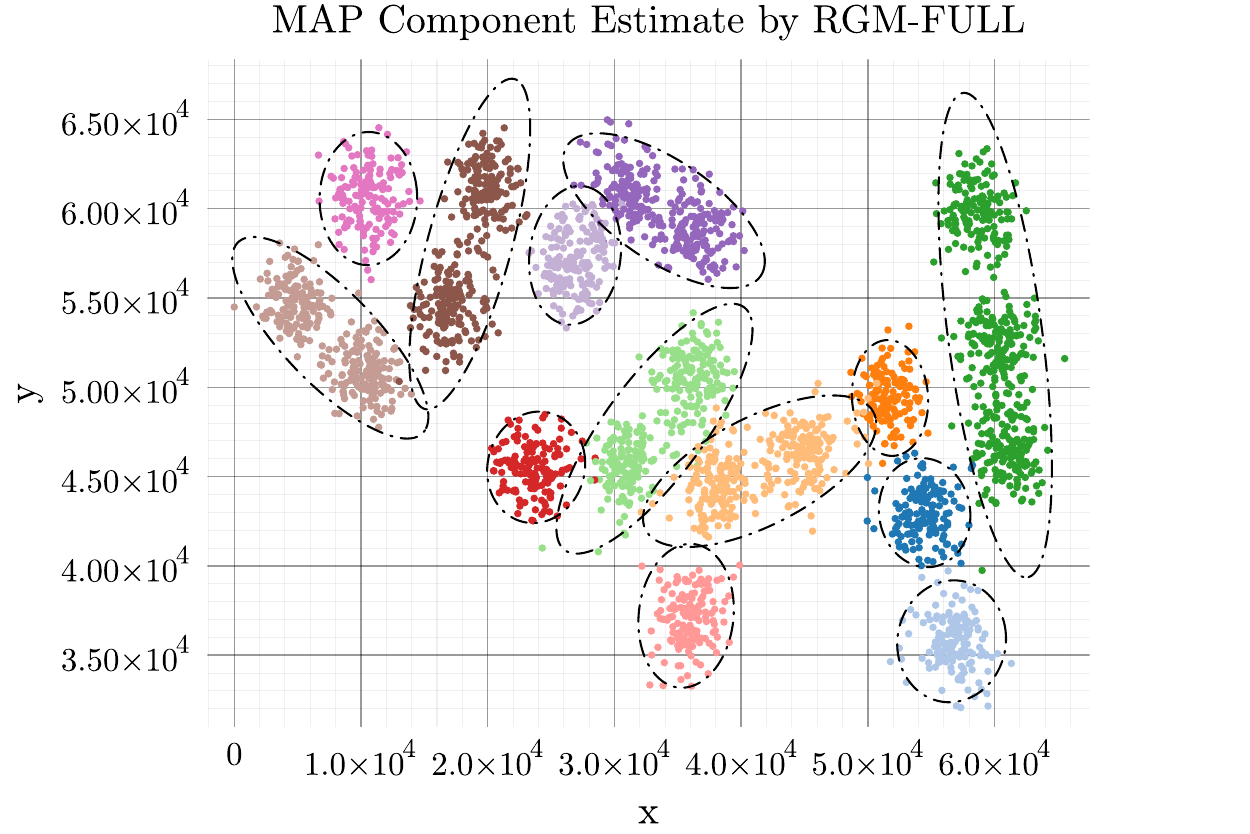} 
    } 
    \\
    \subfloat{
        \includegraphics[width=0.46\linewidth]{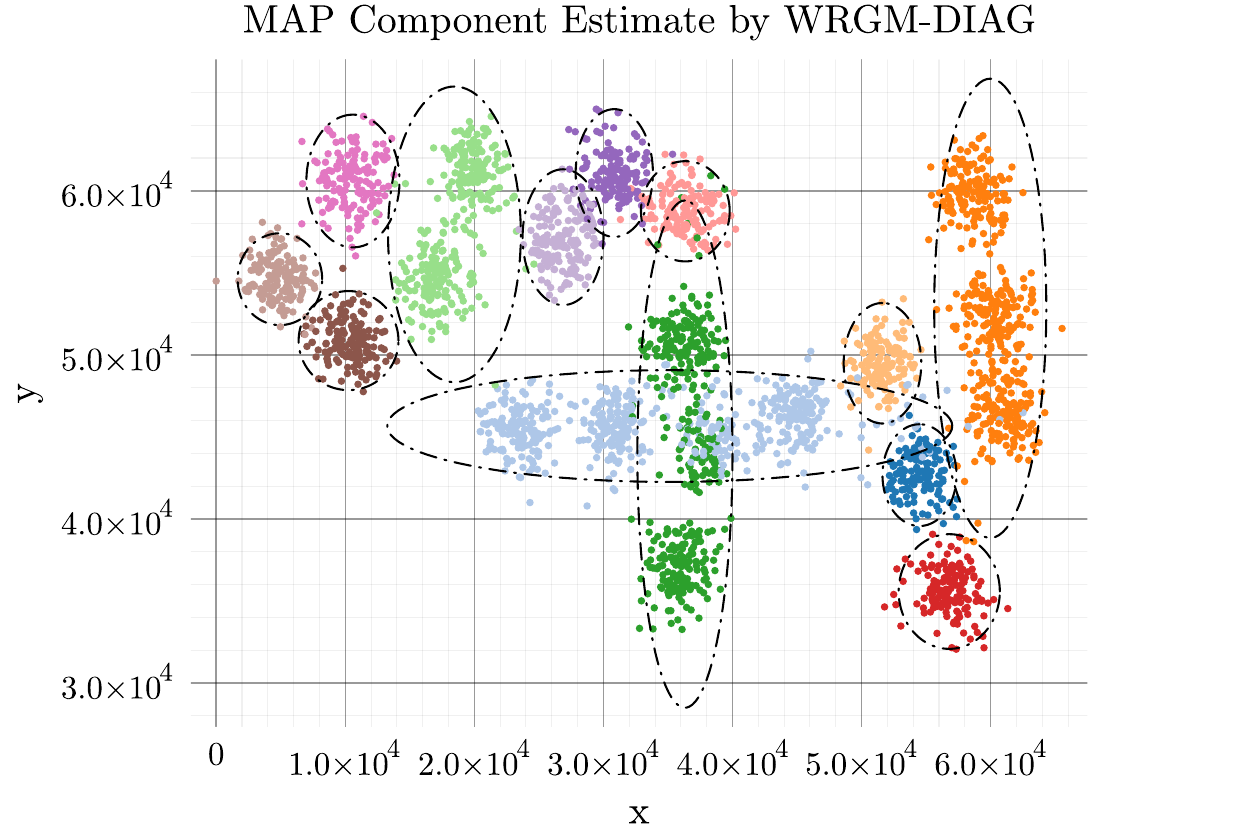} 
    } \quad
    \subfloat{
        \includegraphics[width=0.46\linewidth]{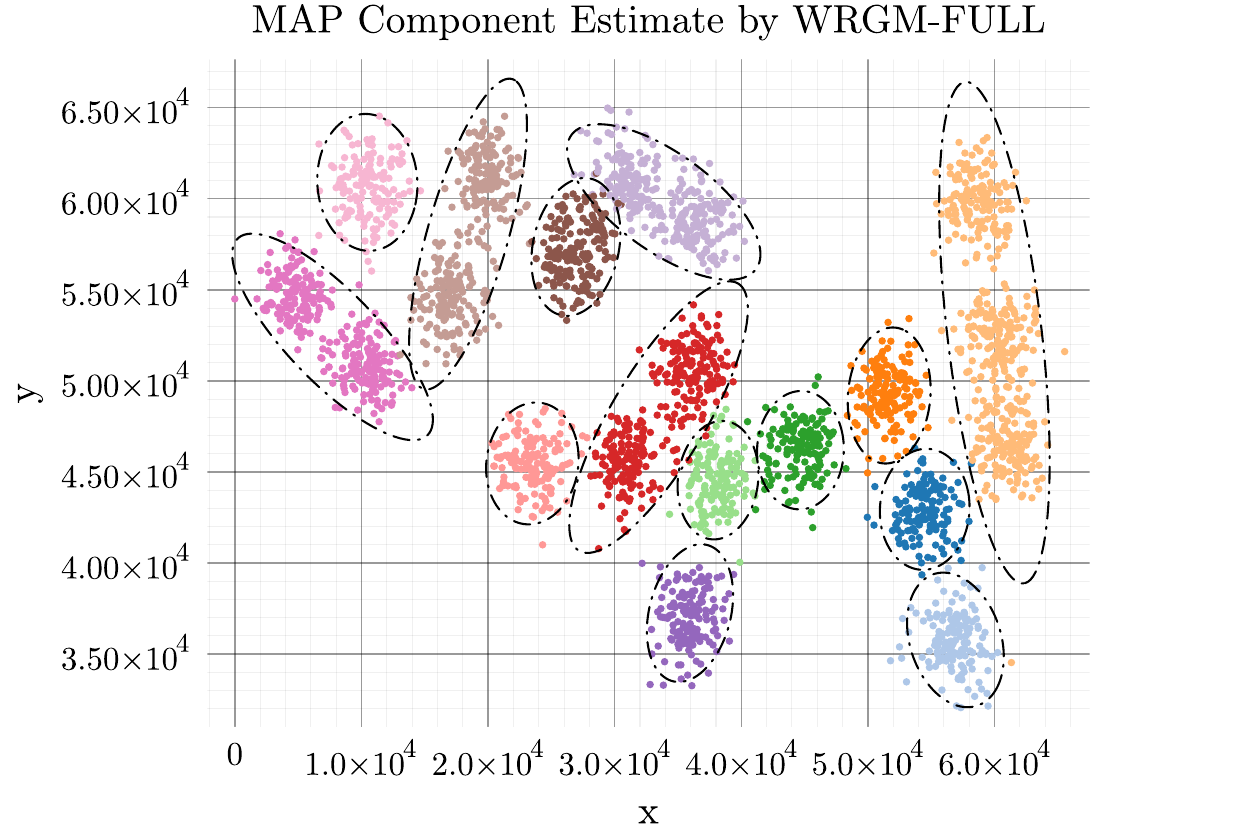} 
    }  
    \caption{MAP Component Assignments for the six models for the A1 dataset}
    \label{fig:a1_map}
\end{figure}

\begin{figure}
    \centering
    \subfloat{
        \includesvg[width=0.46\linewidth]{figures/a1/a1_Mean_min_dist_kde} 
    }  
      
    \caption{Minimum pairwise distances between the component mean parameters for each of the six models for the A1 dataset.}
    \label{fig:a1_min_d}
\end{figure}

\begin{figure}[ht]
    \centering
        \subfloat{
        \includegraphics[width=0.46\linewidth]{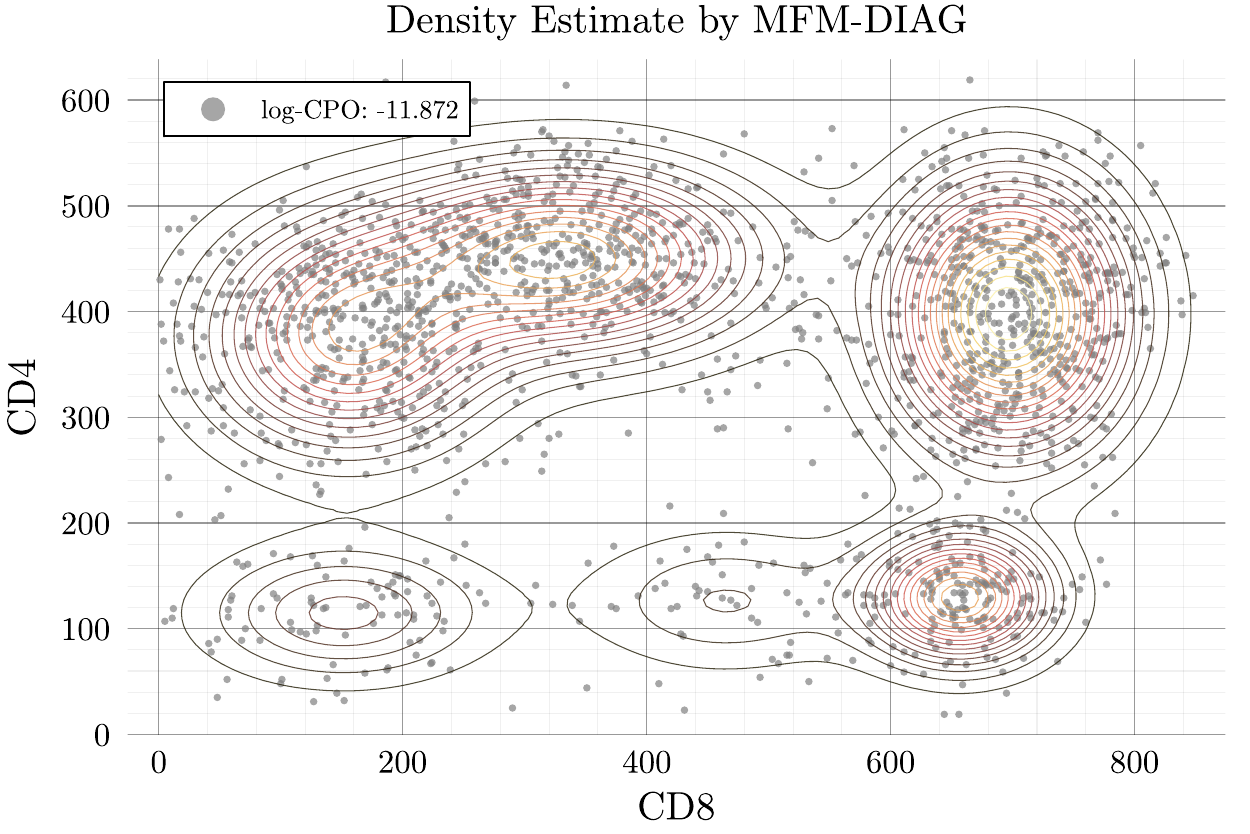} 
    } \quad
    \subfloat{
        \includegraphics[width=0.46\linewidth]{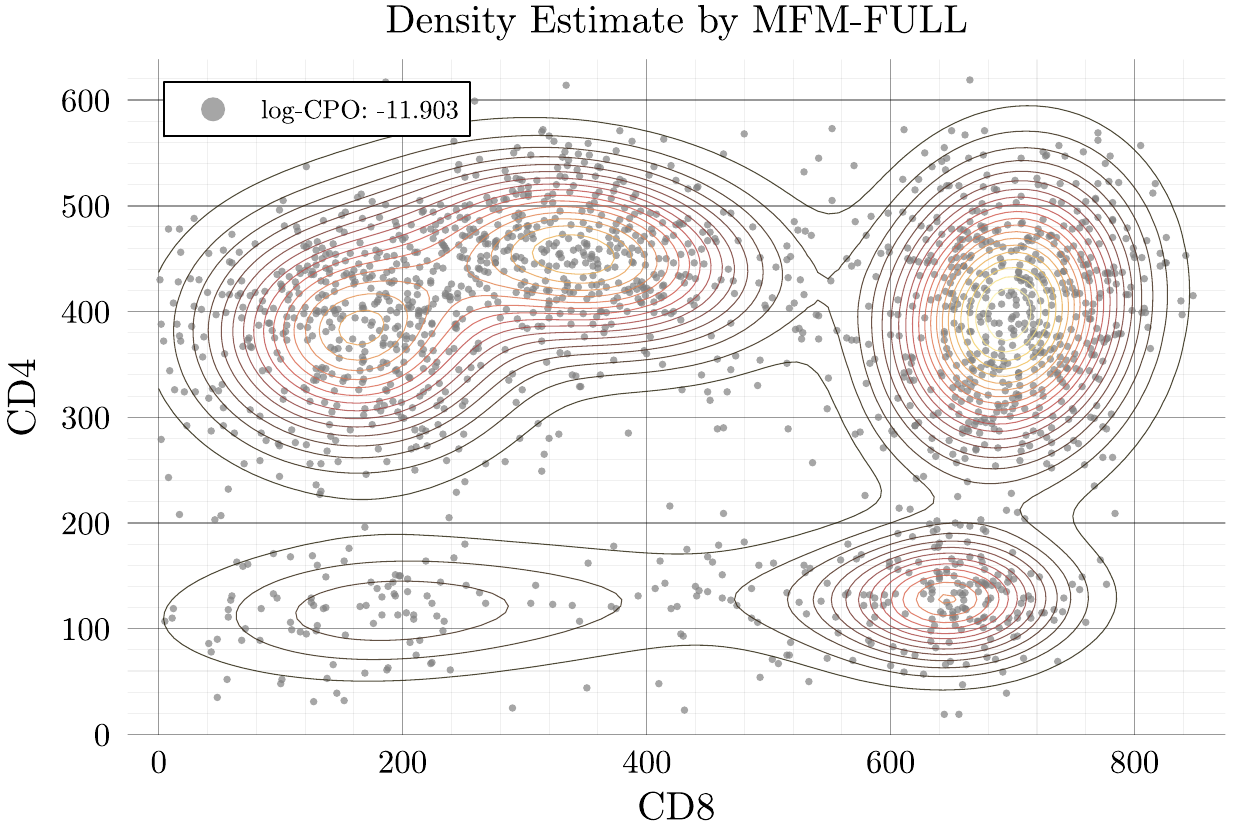} 
    }
    \\
    \subfloat{
        \includegraphics[width=0.46\linewidth]{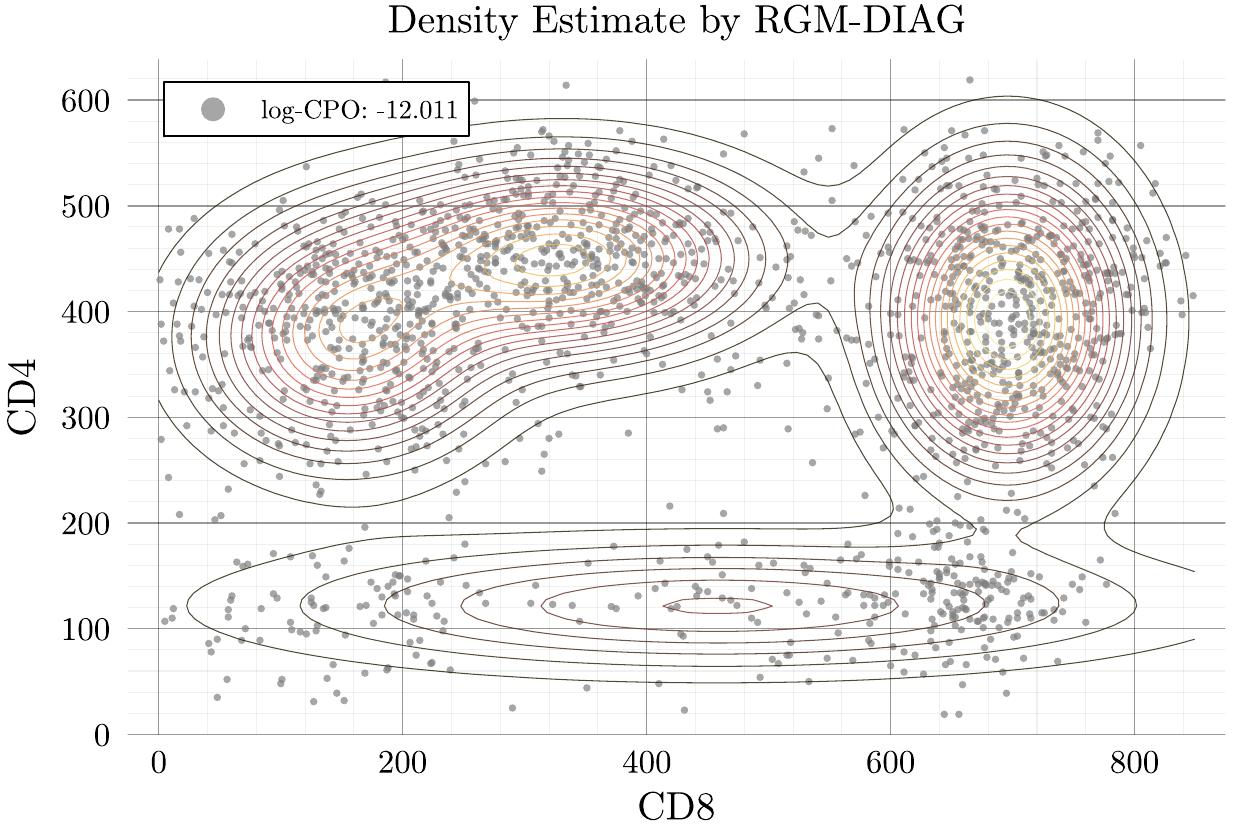} 
    } \quad
    \subfloat{
        \includegraphics[width=0.46\linewidth]{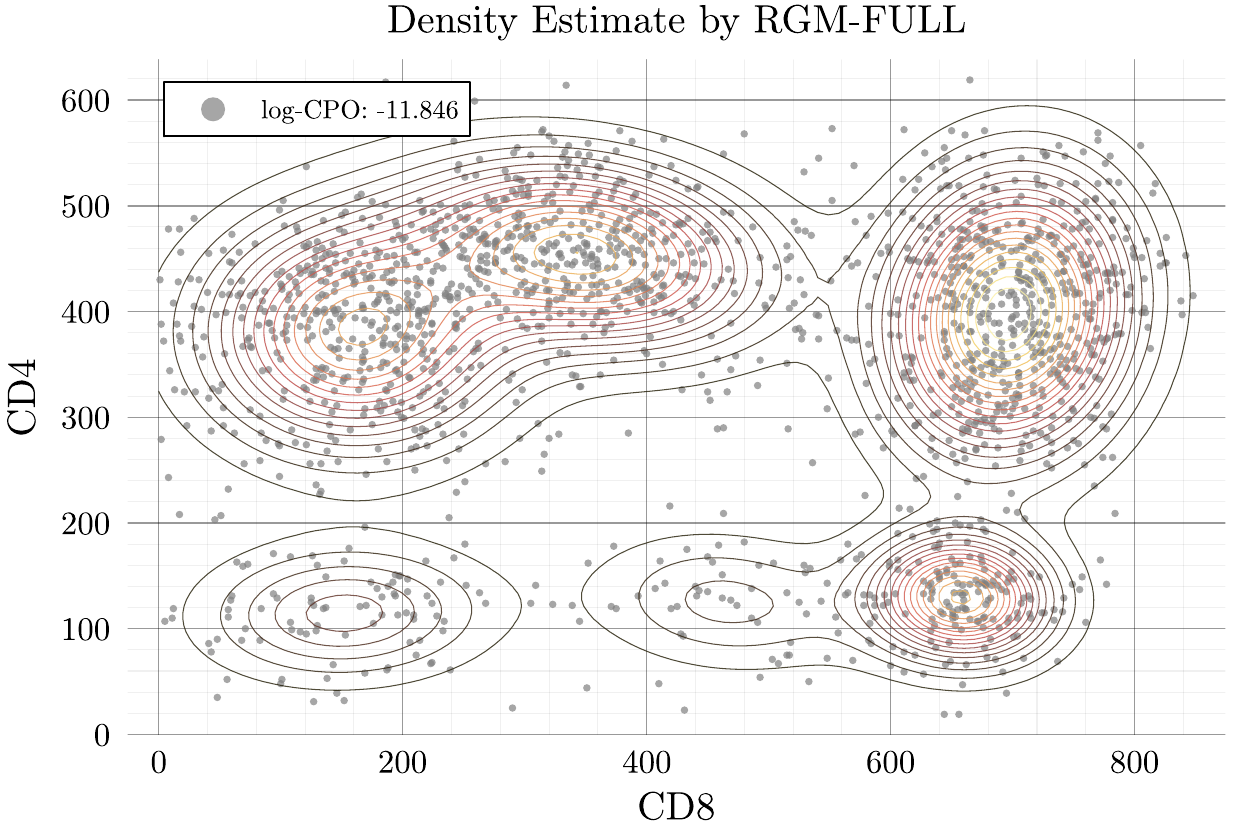} 
    } 
    \\
    \subfloat{
        \includegraphics[width=0.46\linewidth]{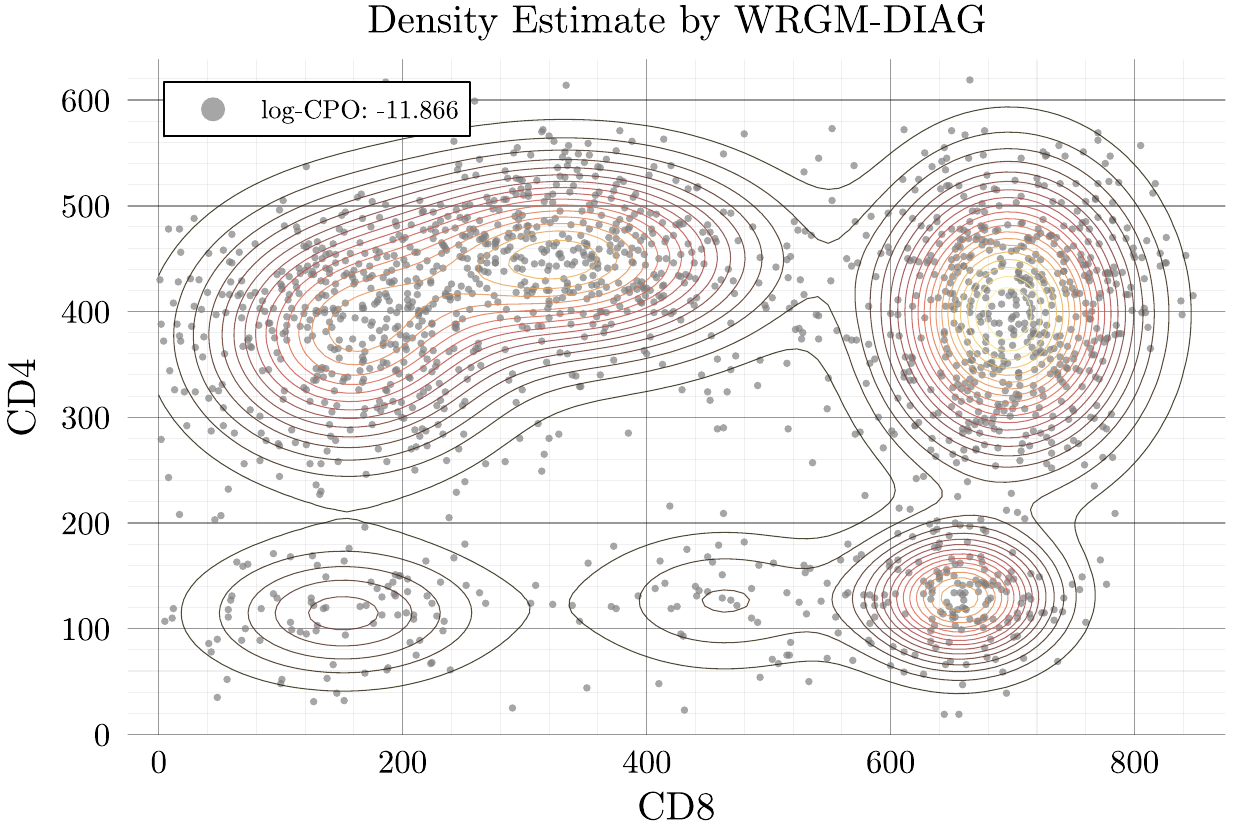} 
    } \quad
    \subfloat{
        \includegraphics[width=0.46\linewidth]{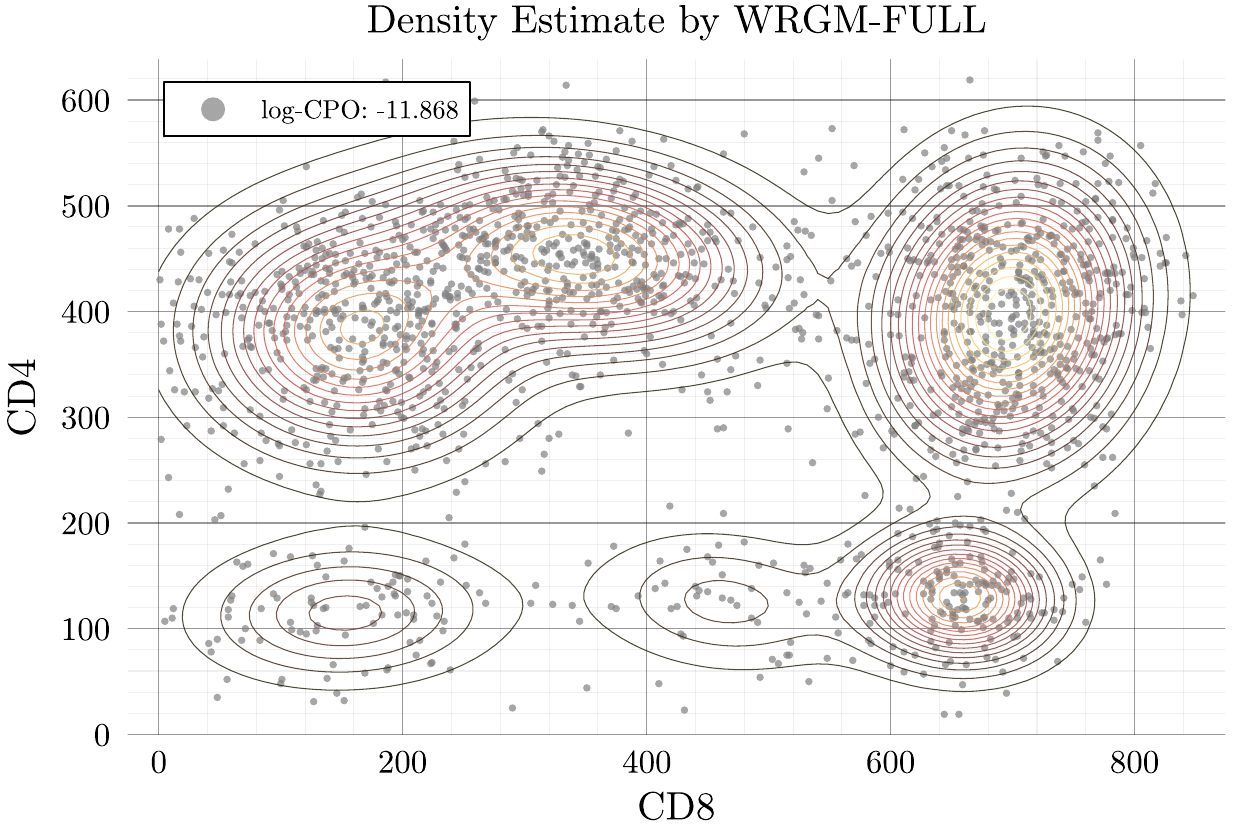} 
    }  
    \caption{Estimated densities by the six models for the GvHD dataset}
    \label{fig:gvhd_de}
\end{figure} 

\begin{figure}[ht]
    \centering
    \subfloat{
        \includegraphics[width=0.46\linewidth]{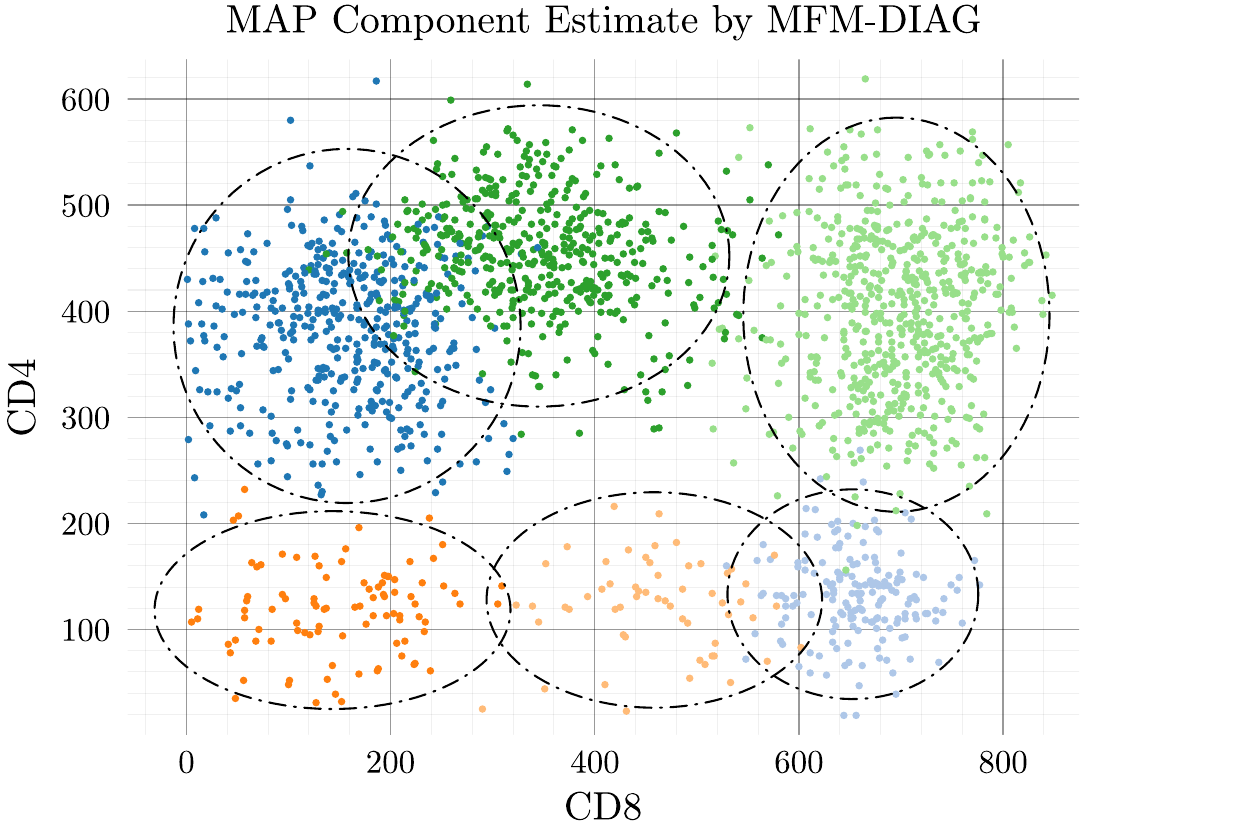} 
    } \quad 
    \subfloat{
        \includegraphics[width=0.46\linewidth]{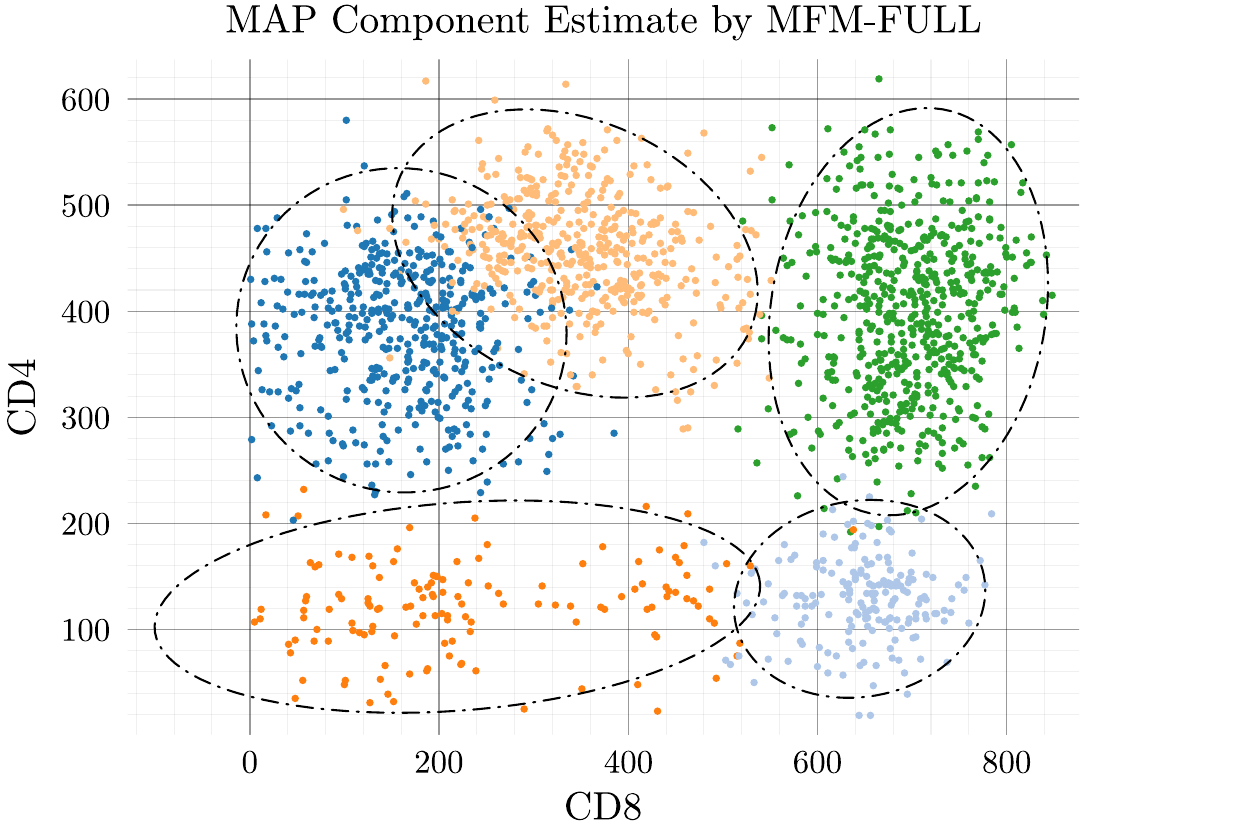} 
    }
    \\
    \subfloat{
        \includegraphics[width=0.46\linewidth]{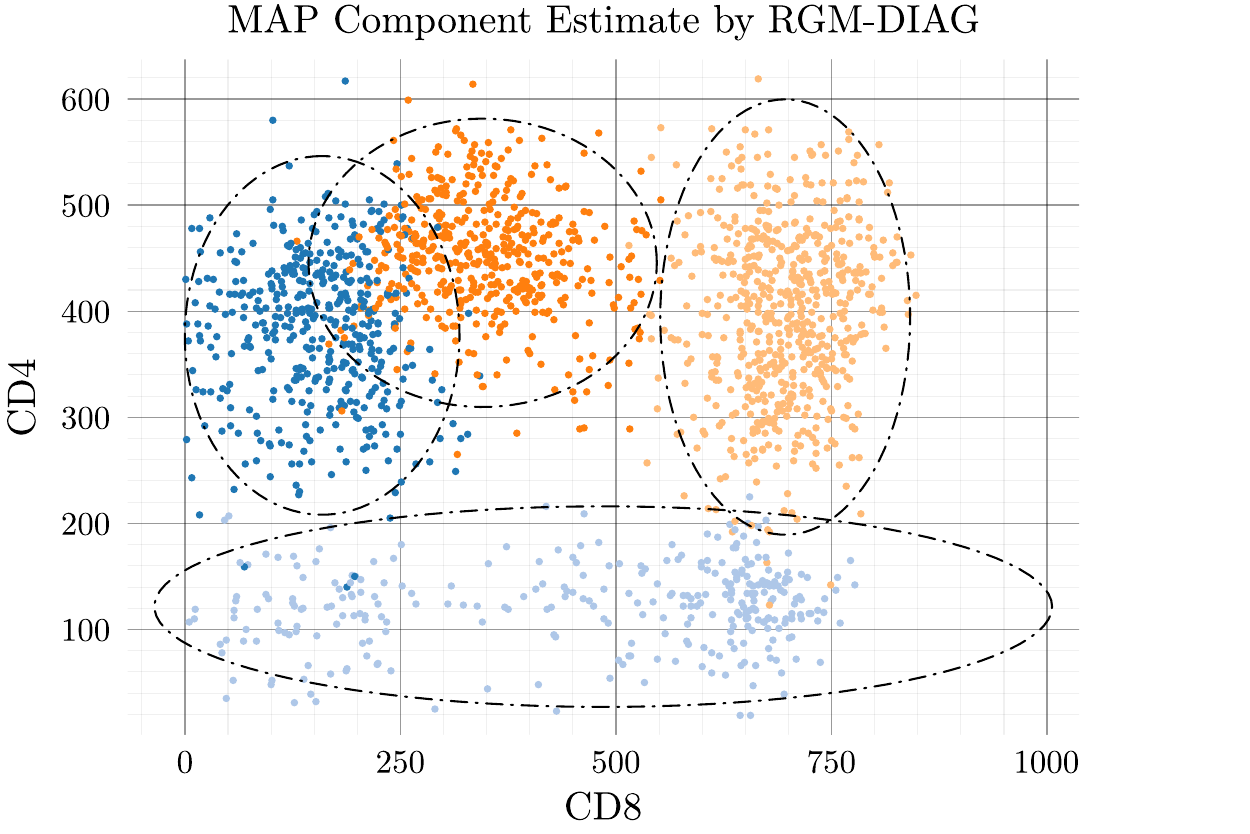} 
    } \quad
    \subfloat{
        \includegraphics[width=0.46\linewidth]{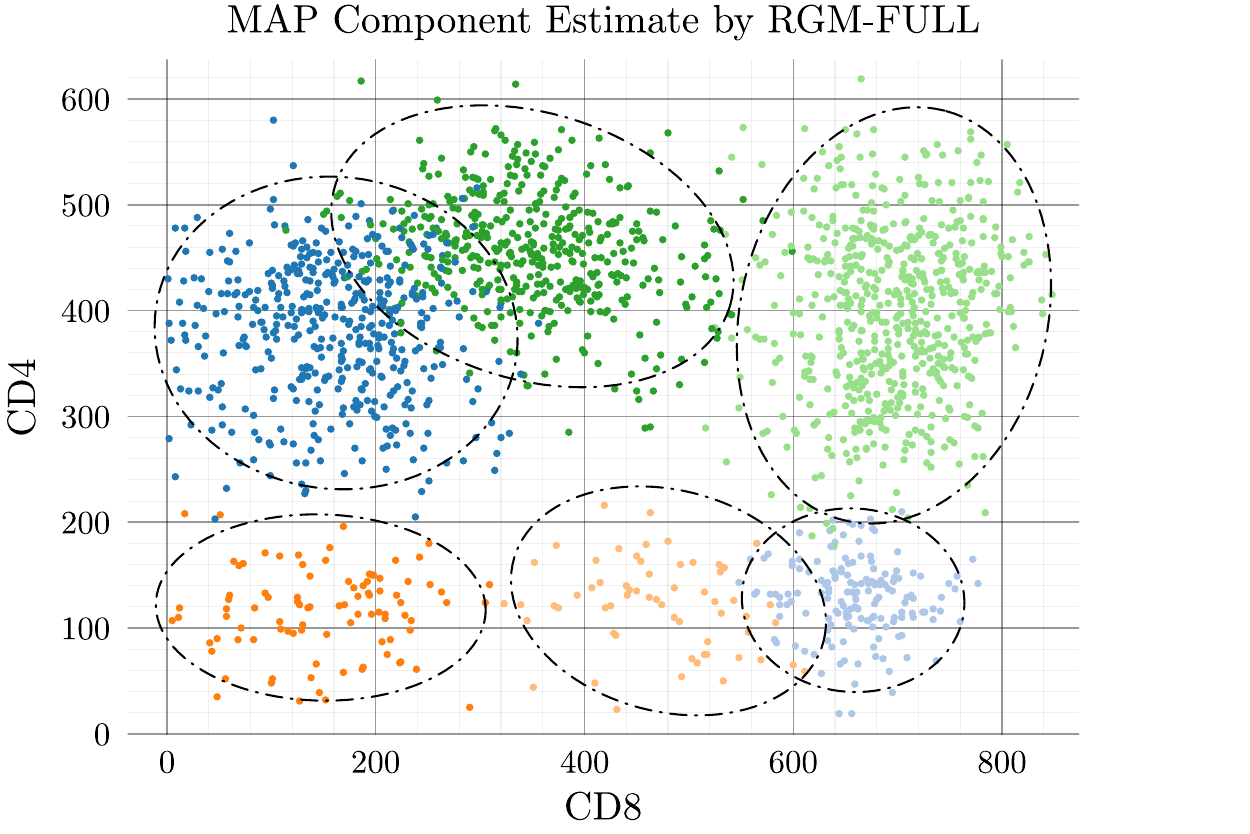} 
    } 
    \\
    \subfloat{
        \includegraphics[width=0.46\linewidth]{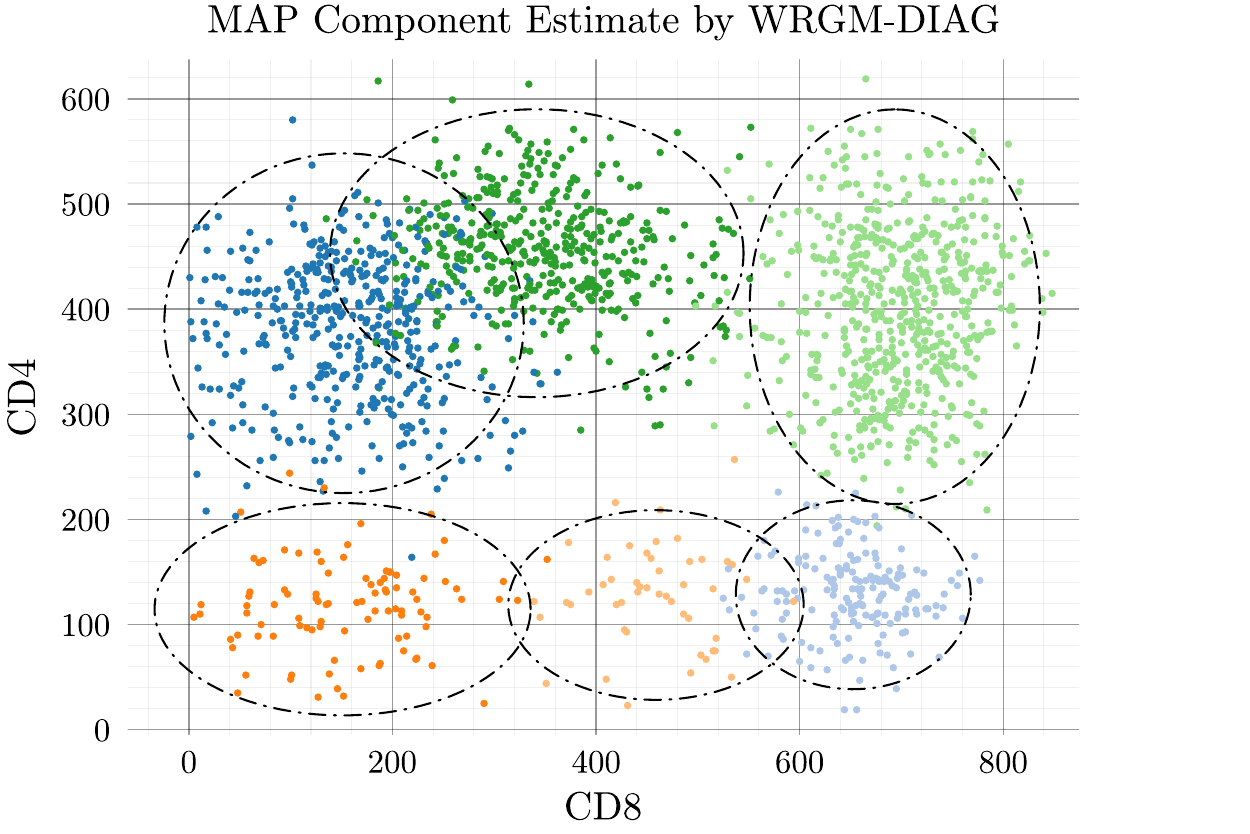} 
    } \quad
    \subfloat{
        \includegraphics[width=0.46\linewidth]{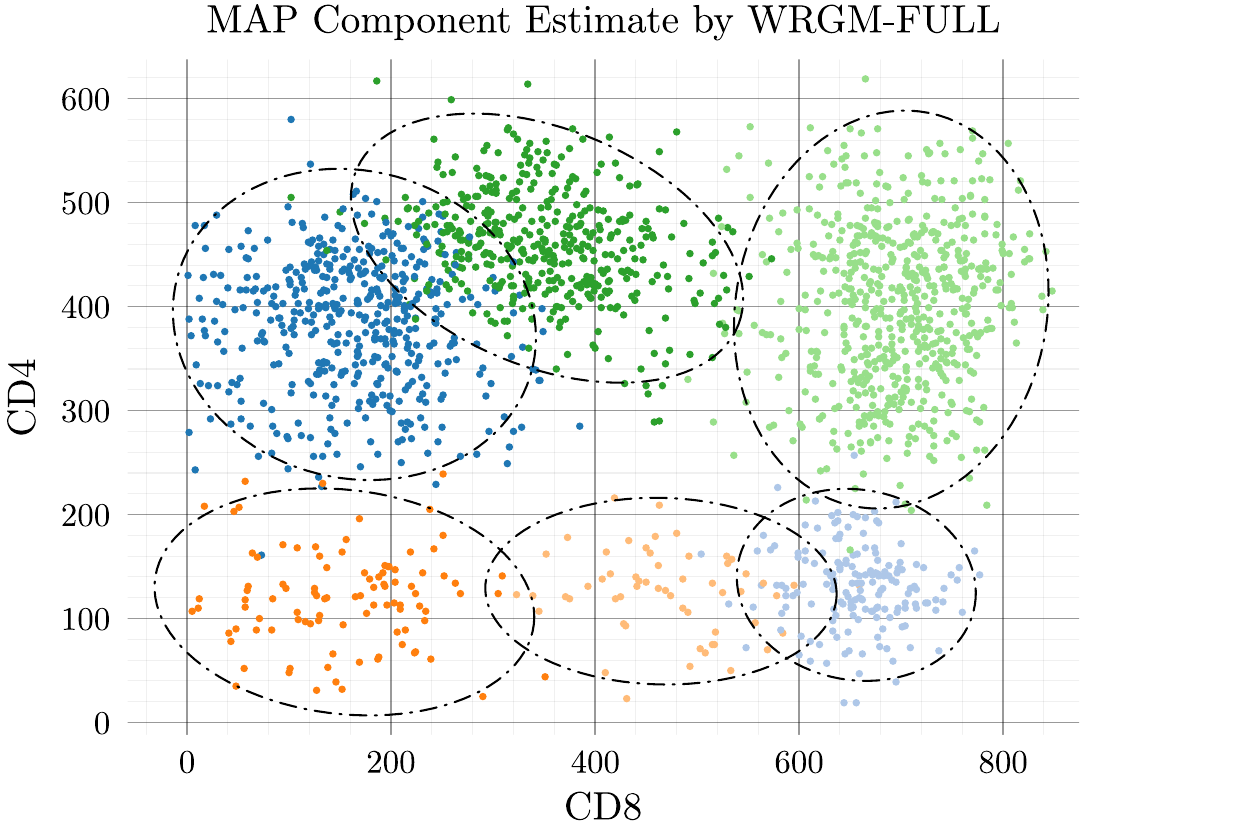} 
    }  
    \caption{MAP Component Assignments for the six models for the GvHD dataset}
    \label{fig:gvhd_map}
\end{figure}

\begin{figure}
    \centering
    \subfloat{
        \includesvg[width=0.46\linewidth]{figures/GvHD/GvHD_Mean_min_dist_kde} 
    } \quad
    \subfloat{
        \includesvg[width=0.46\linewidth]{figures/GvHD/GvHD_Wasserstein_min_dist_kde} 
    }  
    \caption{Minimum pairwise distances between the component mean parameters for each of the six models for the GvHD dataset.}
    \label{fig:gvhd_min_d}
\end{figure}

\section{Discussion}
\label{sec_discussion}
We introduced the WRGM model, an extension of the RGM framework, in which the mean and covariance parameters of each mixture component are no longer assumed to be a priori independent. Instead, they are jointly distributed according to a symmetric repulsive distribution designed to promote separation between the component distributions. We establish a posterior contraction rate for density estimation that matches the rate obtained by \cite{Xie2020}, up to a logarithmic factor. Simulation studies and data applications demonstrate that the proposed WRGM model outperforms both the RGM and MFM models in terms of density estimation accuracy.
\\\\
There are several promising avenues for future research. First, while this paper focuses on posterior contraction rates in the context of density estimation, it would be valuable to study the posterior contraction properties of the Gaussian mixing measures themselves \citep{Ho2016AOS, Ho2016EJS, Guha2021}. Another important direction is the selection of the number of mixture components. In this regard, exploring connections between non-local priors \citep{fuquene2019choosing} and the WRGM framework may offer a principled approach to model selection. Finally, although this work focuses on Gaussian mixtures—primarily due to the analytical convenience of closed-form Wasserstein distances—future work could extend to more general mixture models, potentially leveraging numerical methods to compute the Wasserstein distance.



\begin{appendices}

\section{Proofs of Theoretical Results}
This section provides the proofs of Proposition \ref{thm_normalizing_const} and Theorem \ref{thm_post_contraction}. The proof of Proposition \ref{thm_normalizing_const} follows the same strategy as Theorem 1 in \cite{Xie2020}, while the proof of Theorem \ref{thm_post_contraction} is adapted from the proof of Theorem 4 in \cite{Xie2020}. The technical lemmas supporting the proof of Theorem \ref{thm_post_contraction} are deferred to Appendix \ref{sec_technical_lemmas}.

\subsection{Proof of Proposition \ref{thm_normalizing_const}}
    \begin{proof}
Since $ h_K((m_1, \Sigma_1), \ldots, (m_K, \Sigma_K)) \le 1$, we immediately have that
$$ Z_K \le \int_{\mathbb{R}^p} \cdots \int_{\mathbb{R}^p}  \int_{{\cal S}} \cdots \int_{{\cal S}}  \bigg( \prod_{k=1}^{K} p_{m}(m_k) p_{\Sigma}(\Sigma_k) \bigg) d m_1 \cdots d m_K d \Sigma_1 \cdots d \Sigma_K = 1. $$    
It follows that $- \log h_K \ge 0$.
\\\\
Consider $m_1, \ldots, m_K \sim p_{m}$, and $\Sigma_1, \ldots, \Sigma_K \sim p_{\Sigma}$. Suppose $h_K$ is of the form \eqref{eqn_repulsive_fun1}. Denote $\mu_k = N(m_k, \Sigma_k), k=1,\ldots,K$. Then by Jensen's inequality, we have
\begin{eqnarray*}
-\log Z_K &=& -\log \mathbb{E} \Big[ \min_{1 \le k < k' \le K} g( W_{2}(\mu_k, \mu_k') ) \Big] 
\le \mathbb{E} \Big[ \max_{1 \le k < k' \le K} -\log  g( W_{2}(\mu_k, \mu_k') ) \Big].
\end{eqnarray*}
Since
\begin{eqnarray*}
 \Big[ \max_{1 \le k < k' \le K} -\log g( W_{2}(\mu_k, \mu_k') )  \Big]^2 = \max_{1 \le k < k' \le K} \big( \log g( W_{2}(\mu_k, \mu_k') ) \big)^2 ,   
\end{eqnarray*}
we obtain that
\begin{eqnarray*}
    -\log Z_K &\le& \mathbb{E} \bigg\{ \Big( \max_{1 \le k < k' \le K}\big[ \log g( W_{2}(\mu_k, \mu_k') )  \big]^2  \Big)^{\frac{1}{2}} \bigg\} \\
    & \le & \bigg\{ \mathbb{E} \bigg( \max_{1 \le k < k' \le K}\big[ \log g( W_{2}(\mu_k, \mu_k') )  \big]^2  \bigg) \bigg\}^{\frac{1}{2}} \\
    & \le & \bigg\{ \sum_{1 \le k < k' \le K} \mathbb{E} \bigg( \Big[ \log g(W_{2}(\mu_k, \mu_k')) \Big]^2 \bigg)  \bigg\}^{\frac{1}{2}} \\
    &=& \bigg\{  \frac{1}{2} K (K-1) \mathbb{E} \bigg( \Big[ \log g(W_{2}(\mu_1, \mu_2)) \Big]^2 \bigg) \bigg\}^{\frac{1}{2}} \\
    & \le & c_1 K 
\end{eqnarray*}
where the second inequality follows from Jensen's inequality, and $c_1$ is a constant which can be taken as
$$ c_1 = \frac{1}{2} \mathbb{E} \big( \big[ \log g(W_{2}(\mu_k, \mu_k')) \big]^2 \big) .$$
The proof when $h_K$ takes the form \eqref{eqn_repulsive_fun2} is similar and is thus ommited.

\end{proof}

\subsection{Proof of Theorem \ref{thm_post_contraction}}
\begin{proof}
    To obtain posterior contraction rate, we consider the sequence of submodels $({\cal F}_{K_n})_{n=1}^{\infty}$ of ${\cal M}(\mathbb{R}^p)$, where ${\cal F}_{K_n}$ is defined as:
$$ {\cal F}_{K_n} = \bigg\{f_F: F = \sum_{k=1}^{K} w_k \delta_{(m_k, \Sigma_k)}, K \le K_n, m_k \in \mathbb{R}^p, \Sigma_k \in {\cal S} \bigg\} ,$$
and $(K_n)_{n}$ is a sequence of increasing integers.
The submodel ${\cal F}_{K_n}$ is partitioned as follows:
$${\cal G}_K(\mathbf{a}_K) = {\cal F}_K \bigg( \prod_{k=1}^{K} (a_k, a_k + 1]) \bigg), \quad \mathbf{a}_K = (a_1, \ldots, a_K) \in \mathbb{N}^K, \quad K=1,\ldots, K_n,$$
where
$${\cal F}_K \bigg( \prod_{k=1}^{K} (a_k, b_k] \bigg) = \bigg\{f_F: F = \sum_{k=1}^{K} w_k \delta_{(m_k, \Sigma_k)}, ||m_k||_{\infty} \in (a_k, b_k] \bigg\} .$$

Applying Theorem 3 from \cite{Kruijer2010}, computing the posterior contraction rate reduces to identifying two sequences 
$(\overline{\epsilon}_n)_{n=1}^{\infty}$, $(\underline{\epsilon}_n)_{n=1}^{\infty}$ such that
\begin{eqnarray}
    \label{eqn_contraction_cond1}
    \Pi({\cal F}_{K_n}^{c}) \lesssim \exp(-c_1 n \underline{\epsilon}_n^2),
\end{eqnarray}
for some $c_1 > 0$,
\begin{eqnarray}
\label{eqn_contraction_cond2}
 \exp(- n \overline{\epsilon}_n^2)    \sum_{K=1}^{K_n} \sum_{a_1 = 0}^{\infty} \cdots \sum_{a_K=0}^{\infty} \sqrt{ {\cal N}(\overline{\epsilon}_n, {\cal G}_K( \mathbf{a}_K ), ||\cdot||_1 ) } \sqrt{ \Pi({\cal G}(\mathbf{a}_K) ) } \rightarrow 0,
\end{eqnarray}
\begin{eqnarray}
    \label{eqn_contraction_cond3}
    \Pi\Bigg( f: \int f_0 \log \frac{f_0}{f} \le \underline{\epsilon}_n^2, \int f_0 \bigg( \log \frac{f_0}{f} \bigg)^2 \le \underline{\epsilon}_n^2 \Bigg) \ge \exp(- n \underline{\epsilon}_n^2) .
\end{eqnarray}

We set $\underline{\epsilon}_n = (\log n)^{t_0} / \sqrt{n}, \overline{\epsilon}_n = (\log n)^t / \sqrt{n}$ where $t$ and $t_0$ satisfy $t > t_0 + \frac{1}{2}$, $t_0 > \frac{p^2}{2} + p + \frac{\alpha}{4}$, and $K_n =  \floor*{c_0 (\log n)^{2t-1} } $, for some sufficiently small constant $c_0 > 0$. Lemma \ref{lemma_entropy_remaining_mass_cond} verifies Conditions \eqref{eqn_contraction_cond1} and \eqref{eqn_contraction_cond2}, while Lemma \ref{lemma_prior_mass_cond} verifies Condition \eqref{eqn_contraction_cond3}.  
\end{proof}

The following lemma bounds the covering number of $ {\cal F}_K \bigg( \prod_{k=1}^{K} (a_k, b_k] \bigg)$ with respect to $||\cdot||_1$.
\begin{lemma}
\label{lemma_entropy_bound}
    \begin{eqnarray*}
{\cal N}\bigg( \epsilon, {\cal F}_K \bigg( \prod_{k=1}^{K} (a_k, b_k] \bigg), ||\cdot||_1 \bigg) & \lesssim &  \frac{1}{\epsilon^{Kq}} \bigg( \prod_{k=1}^{K} b_k \bigg)^{p}
\end{eqnarray*}
where $q = 1+4p+p(p-1)/2$.
\end{lemma}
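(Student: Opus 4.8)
The plan is to build an explicit finite $\epsilon$-net for $\mathcal{F}_K\big(\prod_{k=1}^{K}(a_k,b_k]\big)$ in $\|\cdot\|_1$ by discretizing the mixing parameters and then to count it. I would parametrize a density in this family by an ordered tuple $\big(w,(m_k)_{k=1}^{K},(\Sigma_k)_{k=1}^{K}\big)$ with $w\in\Delta^{K}$, $m_k\in[-b_k,b_k]^{p}$ (which is implied by $\|m_k\|_\infty\le b_k$), and $\Sigma_k\in\mathcal{S}$. For two such tuples, the triangle inequality together with $\|\phi(\cdot\mid m,\Sigma)\|_1=1$ gives
\begin{equation*}
\big\|f_F-f_{F'}\big\|_1 \;\le\; \|w-w'\|_1 \;+\; \max_{1\le k\le K}\big\|\phi(\cdot\mid m_k,\Sigma_k)-\phi(\cdot\mid m_k',\Sigma_k')\big\|_1 ,
\end{equation*}
so it is enough to approximate $w$ in $\|\cdot\|_1$ and each pair $(m_k,\Sigma_k)$ finely enough that the induced Gaussian $L_1$-perturbation is small.

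The technical heart, and what I expect to be the main obstacle, is a modulus-of-continuity estimate for Gaussian densities that is \emph{uniform over the whole covariance set} $\mathcal{S}$: I would show there is $C=C(p,\underline{\sigma}^2,\overline{\sigma}^2)$ with
\begin{equation*}
\big\|\phi(\cdot\mid m,\Sigma)-\phi(\cdot\mid m',\Sigma')\big\|_1 \;\le\; C\big(\|m-m'\|+\|\Sigma-\Sigma'\|_F\big)
\end{equation*}
for all $m,m'\in\mathbb{R}^p$ and $\Sigma,\Sigma'\in\mathcal{S}$. This is exactly where Condition A3 (equivalently B4), confining every eigenvalue of every $\Sigma\in\mathcal{S}$ to the fixed interval $[\underline{\sigma}^2,\overline{\sigma}^2]$, is used. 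The bound follows by differentiating $\phi(y\mid m,\Sigma)$: the gradient in $m$ is $\Sigma^{-1}(y-m)\phi(y\mid m,\Sigma)$, whose $L_1$-norm is at most $\lambda_{\min}(\Sigma)^{-1}\mathbb{E}\|Z\|$ with $Z\sim N(0,\Sigma)$ and hence bounded under A3; the derivatives in the entries of $\Sigma$ produce terms assembled from $(\Sigma^{-1})_{ij}$ and quadratic forms in $y-m$, whose integrals against $\phi$ are controlled by powers of $\lambda_{\min}(\Sigma)^{-1}$ and $\lambda_{\max}(\Sigma)$, again uniformly bounded. Integrating these along the segment from $(m,\Sigma)$ to $(m',\Sigma')$, which stays inside the convex set $\mathbb{R}^p\times\{\text{symmetric }\Sigma:\lambda(\Sigma)\in[\underline{\sigma}^2,\overline{\sigma}^2]\}$, yields the estimate. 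This step is genuinely new relative to \cite{Xie2020}, where the covariances are diagonal and simultaneously diagonalizable, so only $p$ scalar scale parameters, rather than the full $p(p+1)/2$-dimensional set of covariances, had to be covered.

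With this in hand, I would fix $\eta\asymp\epsilon$ and form the net as a product of three nets. A minimal $\eta$-net of $\Delta^{K}$ in $\|\cdot\|_1$ has cardinality $\lesssim(C/\epsilon)^{K}$ by a lattice-point count; each box $[-b_k,b_k]^{p}$ has an $\eta$-net in $\|\cdot\|$ of size $\lesssim(b_k/\epsilon)^{p}$, so the means contribute $\lesssim\epsilon^{-pK}\big(\prod_{k=1}^{K}b_k\big)^{p}$; and since $\mathcal{S}$ is a bounded subset of the $p(p+1)/2$-dimensional space of symmetric matrices (each entry bounded by $\overline{\sigma}^2$), it has an $\eta$-net in $\|\cdot\|_F$ of size $\lesssim(C/\epsilon)^{p(p+1)/2}$, so the covariances contribute $\lesssim\epsilon^{-Kp(p+1)/2}$. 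Taking the product, mapping back to densities via Steps above, and rescaling $\epsilon$ by a constant, I obtain
\begin{equation*}
\mathcal{N}\Big(\epsilon,\ \mathcal{F}_K\big(\textstyle\prod_{k=1}^{K}(a_k,b_k]\big),\ \|\cdot\|_1\Big) \;\lesssim\; \epsilon^{-K\left(1+p+p(p+1)/2\right)}\Big(\prod_{k=1}^{K}b_k\Big)^{p}.
\end{equation*}
Finally, $1+p+\tfrac{p(p+1)}{2}=q-2p\le q$ for every $p\ge1$ with $q=1+4p+p(p-1)/2$, and for $\epsilon$ bounded away from $1$ the slack factor $\epsilon^{-2pK}$ absorbs the polynomial-in-$K$ and dimension-dependent constants hidden in $\lesssim$; hence the bound above is $\lesssim\epsilon^{-Kq}\big(\prod_{k=1}^{K}b_k\big)^{p}$, which is the claim.
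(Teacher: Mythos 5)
Your proof is correct, and it reaches the same (in fact the same \emph{tighter}) exponent $K\bigl(1+2p+p(p-1)/2\bigr)$ as the paper's own argument, which is then loosened to the stated $q=1+4p+p(p-1)/2$; since $p(p+1)/2 = p + p(p-1)/2$, your count and the paper's coincide exactly. The route is genuinely different, however. The paper covers each covariance by covering its eigenvalues (a $p$-fold product of intervals, contribution $\epsilon^{-pK}$) and its eigenvector matrix in $O(p)$ (via a metric-entropy bound for compact Lie groups, contribution $\epsilon^{-p(p-1)K/2}$), and then controls the $L_1$ error by first bounding the Hellinger distance between nearby Gaussians (their Lemma on the Hellinger distance) and using $\|\cdot\|_1\lesssim H$. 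You instead cover $\mathcal{S}$ directly as a bounded subset of the $p(p+1)/2$-dimensional space of symmetric matrices in Frobenius norm, and replace the Hellinger step with a uniform $L_1$ Lipschitz estimate for $\phi(\cdot\mid m,\Sigma)$ obtained by differentiating in $(m,\Sigma)$ and integrating along a line segment inside the convex set $\{\Sigma:\underline{\sigma}^2 I\preceq\Sigma\preceq\overline{\sigma}^2 I\}$. Your decomposition of the dimension is $p+p(p-1)/2$ (eigenvalues plus $O(p)$) versus $p(p+1)/2$ (symmetric matrices), the same number counted two ways. What the paper's route buys is alignment with the eigendecomposition used throughout (e.g., the eigenvalue-restricted inverse-Wishart prior and the Hellinger lemma, which is reused); what yours buys is avoiding the covering number of $O(p)$ and the explicit Hellinger computation, at the cost of the Lipschitz estimate — which you sketch plausibly but do not fully establish. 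A complete write-up along your lines should spell out the bound on the $\Sigma$-derivative terms and the convexity of the eigenvalue-constrained set, both of which you correctly identify as the points that need checking.
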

\begin{proof}
It is easy to see that we have the following bounds 
$$ {\cal N}(\epsilon, \triangle^K, ||\cdot||_1) \lesssim \bigg( \frac{1}{\epsilon} \bigg)^K .$$
$$ {\cal N}(\epsilon, [\underline{\sigma}^2, \overline{\sigma}^2], |\cdot|) \lesssim \frac{1}{\epsilon} ,$$
$$ {\cal N}(\epsilon, \{m_k: ||m_k||_{\infty} \in (a_k, b_k]\}, ||\cdot||_{\infty}) \lesssim \bigg( \frac{b_k}{\epsilon} \bigg)^p .$$
By Lemma \ref{lemma_entropy_orthogonal_group}, we can also bound the covering number of $O(p)$ with respect to the operator norm:
$${\cal N}(\epsilon, O(p), ||\cdot||_{op}) \lesssim \bigg( \frac{1}{\epsilon} \bigg)^{p(p-1)/2} .$$

Consider $f_{F_1}, f_{F_2} \in {\cal F}_K(\prod_{k=1}^{K} (a_k, b_k]) $ with $F_1 = \sum_{k=1}^{K} w_{1,k} \delta_{(m_{1,k}, \Sigma_{1,k})}$, $F_2 = \sum_{k=1}^{K} w_{2,k} \delta_{(m_{2,k}, \Sigma_{2,k})}$, where $||m_{1,k}||_{\infty}, || m_{2,k} ||_{\infty} \in (a_k, b_k]$, and the covariance matrices have the following decompositions
$$ \Sigma_{1,k} = U_{1,k} \mbox{diag}(\lambda_1(\Sigma_{1,k}), \ldots, \lambda_p(\Sigma_{1,k})) U_{1,k}^{T},$$
$$ \Sigma_{2,k} = U_{2,k} \mbox{diag}(\lambda_1(\Sigma_{2,k}), \ldots, \lambda_p(\Sigma_{2,k})) U_{2,k}^{T}.$$

For each $k=1,\ldots,K$, by Lemma \ref{lemma_hellinger}, 
the conditions $||m_{1,k} - m_{2,k}|| < p \epsilon$, $|\lambda_j(\Sigma_{1,k}) - \lambda_j(\Sigma_{2,k})| < \epsilon, j=1,\ldots,p$, and $||U_{1,k} - U_{2,k}||_{op} < \epsilon$  imply the following bound on the Hellinger distance between two normal distributions:
$$ H(N(m_{1,k}, \Sigma_{1,k}), N(m_{2,k}, \Sigma_{2,k})) \lesssim \epsilon .$$
Given $\sum_{k=1}^{K} |w_{1,k} - w_{2,k}| < \epsilon$, by triangle inequality and upper bounding the $L_1$ norm of two Gaussian densities by Hellinger distance, we obtain  
\begin{eqnarray*}
    ||f_{F_1} - f_{F_2}||_1 &\le& \sum_{k=1}^{K} w_{1,k} ||\phi_{\Sigma_{1,k}}(y - m_{1,k}) - \phi_{\Sigma_{2,k}}(y - m_{2,k})||_1 + \sum_{k=1}^{K} |w_{1,k} - w_{2,k}| \\
    &\le& \sum_{k=1}^{K} 2\sqrt{2} w_{1,k} H(\phi_{\Sigma_{1,k}}(y - m_{1,k}), \phi_{\Sigma_{2,k}}(y - m_{2,k})) + \epsilon \\
    & \lesssim & \epsilon.
\end{eqnarray*}

It thus follows that 
\begin{eqnarray*}
{\cal N}\bigg( \epsilon, {\cal F}_K \bigg( \prod_{k=1}^{K} (a_k, b_k] \bigg), ||\cdot||_1 \bigg) & \lesssim & {\cal N}(\epsilon, \triangle^K, ||\cdot||_1) \times {\cal N}(\epsilon, [\underline{\sigma}, \overline{\sigma}], |\cdot|)^{pK} \\
&& \times {\cal N}(\epsilon, \{m_k: ||m_k||_{\infty} \in (a_k, b_k]\}, ||\cdot||_{\infty})^{K} \\
&& \times {\cal N}(\epsilon, O(p), ||\cdot||_{op})^K  \\
& \lesssim & \bigg( \frac{1}{\epsilon} \bigg)^K \bigg( \frac{1}{\epsilon} \bigg)^{pK}  \bigg(\frac{1}{\epsilon} \bigg)^{p(p-1)K/2} \prod_{k=1}^{K} \bigg( \frac{b_k}{\epsilon} \bigg)^{p} \\
&= & \frac{1}{\epsilon^{Kq}} \bigg( \prod_{k=1}^{K} b_k \bigg)^{p}
\end{eqnarray*}
where $q = 1+2p+p(p-1)/2$.

\end{proof}

\begin{lemma}
\label{lemma_entropy_remaining_mass_cond}
   Assume conditions A0-A3 and B1-B5 hold. Let $\underline{\epsilon}_n = (\log n)^{t_0} / \sqrt{n}, \overline{\epsilon}_n = (\log n)^t / \sqrt{n}$ where $t$ and $t_0$ satisfy $t > t_0 + \frac{1}{2} > \frac{1}{2}$, and $K_n =  \floor*{c_0 (\log n)^{2t-1} } $, for some sufficiently small constant $c_0 > 0$. Then \eqref{eqn_contraction_cond1} and \eqref{eqn_contraction_cond2} hold.
\end{lemma}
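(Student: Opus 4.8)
The plan is to verify the two displayed conditions separately, following the architecture of the proof of Theorem~4 in \cite{Xie2020}. Condition \eqref{eqn_contraction_cond1} is a tail estimate on the prior of the number of components: since ${\cal F}_{K_n}^{c}$ is exactly the event $\{K > K_n\}$, I would write $\Pi({\cal F}_{K_n}^{c}) = \sum_{N > K_n} p_K(N)$ and invoke the tail part of Condition~B5 to get $\Pi({\cal F}_{K_n}^{c}) \le \exp(-B_4 K_n \log K_n)$ for $n$ large. With $K_n = \floor*{c_0 (\log n)^{2t-1}}$ one has $\log K_n \asymp \log\log n$, and since $t > t_0 + \tfrac12$ forces $2t - 1 > 2t_0$, the quantity $K_n\log K_n \asymp (\log n)^{2t-1}\log\log n$ dominates $n\underline\epsilon_n^2 = (\log n)^{2t_0}$; hence \eqref{eqn_contraction_cond1} holds for every $c_1 > 0$.

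For \eqref{eqn_contraction_cond2}, fix $K \le K_n$ and $\mathbf{a}_K = (a_1,\dots,a_K) \in \mathbb{N}^K$. First I would apply Lemma~\ref{lemma_entropy_bound} with $b_k = a_k + 1$ to obtain ${\cal N}(\overline\epsilon_n, {\cal G}_K(\mathbf{a}_K), \|\cdot\|_1) \lesssim \overline\epsilon_n^{-Kq} \prod_{k=1}^{K}(a_k+1)^p$, with $q$ the constant from that lemma. Next, for the prior mass I set $\pi(a) := \int_{\{\|m\|_\infty \in (a,a+1]\}} p_m(m)\,dm$; bounding $h_K \le 1$ in \eqref{eqn_repulsive_prior}, integrating out the covariances (each $p_\Sigma$ being a probability density on ${\cal S}$ by Condition~B4), using $Z_K^{-1} \le e^{c_1 K}$ from Proposition~\ref{thm_normalizing_const}, and accounting for the at most $K!$ relabellings of the atoms, I obtain $\Pi({\cal G}_K(\mathbf{a}_K)) \le p_K(K)\,K!\,e^{c_1 K}\prod_{k=1}^{K}\pi(a_k)$. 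Condition~B2 together with $\|m\|_\infty \le \|m\|$ gives $\pi(a) \le B_2 e^{-b_2 a^2}$ for $a \ge 1$, and $\pi(0) \le 1$ trivially.

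Since both estimates factorize over $k$, summing over $\mathbf{a}_K \in \mathbb{N}^K$ produces the factor $\big(\sum_{a \ge 0}\big((a+1)^p\pi(a)\big)^{1/2}\big)^{K} = C_0^{K}$, the series converging by the Gaussian tail of $\pi$. Using $p_K(K) \le 1$, $\overline\epsilon_n^{-1} \le \sqrt n$, and $\sqrt{K!} \le K_n^{K_n/2}$, the left-hand side of \eqref{eqn_contraction_cond2} then becomes $\lesssim \exp(-n\overline\epsilon_n^2)\,K_n\,(C_1 n^{q/4})^{K_n}$ up to factors of order $\exp(o((\log n)^{2t}))$. Taking logarithms and using $n\overline\epsilon_n^2 = (\log n)^{2t}$ together with $K_n \log n \le c_0 (\log n)^{2t}$, the exponent equals $-(\log n)^{2t}\big(1 - \tfrac{q}{4}c_0 + o(1)\big)$, which tends to $-\infty$ as soon as $c_0 < 4/q$; this is precisely the smallness requirement on $c_0$, and it also pins down the admissible order of $K_n$.

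The substantive new points beyond \cite{Xie2020} are twofold. First, one must keep the repulsive normalizing constant $Z_K$ under control — this is handled once and for all by $-\log Z_K \le c_1 K$ from Proposition~\ref{thm_normalizing_const}, so that $h_K \le 1$ lets the prior-mass bound collapse to the independent-prior case. Second, and this is where I expect the real work to lie, one must cover the component Gaussians when the covariances need not be simultaneously diagonalizable: Lemma~\ref{lemma_entropy_bound} must then cover the orthogonal group $O(p)$, which inflates the entropy exponent $q$, forces a smaller $c_0$ and hence a smaller $K_n$, and is ultimately responsible for the extra logarithmic factor in the contraction rate relative to \cite{Xie2020}.
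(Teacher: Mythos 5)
Your argument is correct and follows essentially the same route as the paper's own proof: bound the partition prior mass $\Pi(\mathcal{G}_K(\mathbf{a}_K))$ by using $h_K \le 1$ together with $Z_K^{-1} \le e^{c_1 K}$ from Proposition~\ref{thm_normalizing_const} and the sub-Gaussian tail in B2, bound the covering numbers via Lemma~\ref{lemma_entropy_bound}, sum the factorized Gaussian-tail series over $\mathbf{a}_K$, and finally use B5 for the remaining-mass condition. Your extra $\sqrt{K!}$ factor is not needed (the paper partitions by a fixed labeling, so no relabeling multiplicity arises) but is harmless since $K_n \log K_n = o\big((\log n)^{2t}\big)$, and your explicit smallness threshold $c_0 < 4/q$ makes precise the paper's ``for $c_0$ small enough.''
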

\begin{proof}
We first derive an upper bound on the quantity below:
$$ \sum_{K=1}^{K_n} \sum_{a_1 = 0}^{\infty} \cdots \sum_{a_K=0}^{\infty} \sqrt{ {\cal N}(\epsilon, {\cal G}_K( \mathbf{a}_K ), ||\cdot||_1 ) } \sqrt{ \Pi({\cal G}(\mathbf{a}_K) ) } $$
We have
\begin{eqnarray*}
    \Pi({\cal G}_K(\mathbf{a}_K)) &\le& \Pi(m_1, \ldots, m_K: ||m_k|| \ge \sqrt{p} a_k, k=1,\ldots, K|K) p_K(K)  \\
    & \le & \frac{p_K(K)}{Z_K} \int \cdots \int \prod_{k=1}^{K} I(||m_k||^2 \ge p a_k^2) p_m(m_1) dm_1 \cdots p_m(m_K) dm_K \\
    && \qquad \qquad \qquad p_{\Sigma}(\Sigma_1) d \Sigma_1 \cdots p_{\Sigma}(\Sigma_K) d \Sigma_K \\
    & \le & e^{c_1 K} \prod_{k=1}^{K} \int_{||m_k||^2 \ge p a_k^2} p(m_k) dm_k \\
    & \le & e^{c_1 K} B^{K/2} \prod_{k=1}^{K} \exp \left( - p b_2 a_k^2 \right).
\end{eqnarray*}
where the second last inequality follows from Proposition \ref{thm_normalizing_const} and the last inequality follows from Condition B2. 
We have by Lemma \ref{lemma_entropy_bound} that
$${\cal N}(\epsilon, {\cal G}_K(\mathbf{a}_K), ||\cdot||_1) \le c_2 \bigg( \frac{1}{\epsilon^q} \bigg)^K \prod_{k=1}^{K} (a_k + 1)^p ,$$
for some $c_2 > 0$. It thus follows that
\begin{eqnarray*}
 \sum_{K=1}^{K_n} \sum_{a_1 = 0}^{\infty} \cdots \sum_{a_K=0}^{\infty} \sqrt{ {\cal N}(\epsilon, {\cal G}_K( \mathbf{a}_K ), ||\cdot||_1 ) } \sqrt{ \Pi({\cal G}(\mathbf{a}_K) ) }  &\le &\sum_{K=1}^{K_n} \bigg( \frac{ S \sqrt{B_2 c_2 e^{c_1} }}{ \epsilon^{q/2}} \bigg)^K \\
 & \le & K_n \bigg( \frac{M}{\epsilon^{q/2}} \bigg)^{K_n}
\end{eqnarray*}
for some $M>0$. Thus, for $c_0$ small enough we have

\begin{eqnarray}
 \exp(- n \overline{\epsilon}_n^2)    \sum_{K=1}^{K_n} \sum_{a_1 = 0}^{\infty} \cdots \sum_{a_K=0}^{\infty} \sqrt{ {\cal N}(\overline{\epsilon}_n, {\cal G}_K( \mathbf{a}_K ), ||\cdot||_1 ) } \sqrt{ \Pi({\cal G}_K(\mathbf{a}_K) ) } \le \exp\bigg(-\frac{1}{2} (\log n)^{2t} \bigg) \rightarrow 0 .
\end{eqnarray}
For the remaining mass condition, we observe that by Condition B5,
$$ \Pi({\cal F}_{K_n}^{c}) = \Pi(K > K_n) \le\exp(-B_4 K_n \log K_n) \le \exp(-c_1 n \underline{\epsilon}_n^2) ,$$
for some $c_1 > 0$.

\end{proof}

\begin{lemma}
\label{lemma_prior_mass_cond}
Suppose Assumptions A0 - A3 and B1 - B5 hold. Let $\underline{\epsilon}_n = (\log n)^{t_0} / \sqrt{n}$ with $t_0 > \frac{p^2}{2} + p + \frac{\alpha}{4}$. Then Condition \ref{eqn_contraction_cond3} is satisfied. 
\end{lemma}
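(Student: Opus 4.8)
The plan is to verify the prior-mass (Kullback--Leibler) condition \eqref{eqn_contraction_cond3} by the standard two-step route for Gaussian-mixture priors, following \cite{Xie2020} but accounting for the joint mean--covariance repulsion and for full (non-simultaneously-diagonalizable) covariances. First I would replace the Kullback--Leibler-type neighbourhood by a tractable sup-norm neighbourhood. Since every density $f_F$ in the model with component means contained in a ball of radius $O(\sqrt{\log n})$ and eigenvalues in $[\underline{\sigma}^2,\overline{\sigma}^2]$ (Condition A3) satisfies a uniform Gaussian lower bound $f_F(y)\gtrsim \exp(-c\|y\|^2)$, and $f_0$ has a sub-Gaussian tail by A0, the contributions of $\{\|y\|>a_n\}$ with $a_n\asymp\sqrt{\log n}$ to both $\int f_0\log(f_0/f_F)$ and $\int f_0(\log(f_0/f_F))^2$ are $o(\underline{\epsilon}_n^2)$, while on $\{\|y\|\le a_n\}$ both integrands are controlled by $\|f_0/f_F-1\|_\infty$. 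Hence it suffices to produce a set of densities $f_F$ on which $\|f_F-f_0\|_\infty\lesssim \underline{\epsilon}_n^2$ and show $\Pi$ gives it mass at least $\exp(-n\underline{\epsilon}_n^2)$.

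Second, I would construct a finitely supported approximating mixing measure. Truncating $F_0$ to $\{\|m\|\le a_n\}$ discards mass $\lesssim\exp(-b_1 a_n^2)$ by A0; applying the classical moment-matching (Tchakaloff/Carath\'eodory-type) approximation for Gaussian mixtures then yields $F'$ supported on $N$ atoms, all with $\|m_k\|\le C a_n$ and $\Sigma_k\in{\cal S}$, such that $\|f_{F'}-f_0\|_\infty\lesssim \exp(-cL\log L)$, where $L$ is the order of matched moments and $N\lesssim L^{\,p+p(p+1)/2}$, the exponent being the dimension of the parameter space $\mathbb{R}^p\times{\cal S}$ over which $F_0$ ranges. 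Taking $L\asymp \log n/\log\log n$ makes the error $\lesssim\underline{\epsilon}_n^2$ while $N\lesssim(\log n)^{\,p+p(p+1)/2}$ (up to $\log\log$ factors), in particular $N\le K_n$. If two atoms of $F'$ are closer than $\rho$ (a fixed small power of $\underline{\epsilon}_n$) I would perturb or merge them; this moves $f_{F'}$ by $O(\rho)$ and guarantees the repulsion function does not vanish at the target configuration.

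Third, I would lower-bound $\Pi\big(f_F:\|f_F-f_{F'}\|_\infty\lesssim\underline{\epsilon}_n^2\big)$ by restricting to $K=N$ components and to the event that, after relabelling, the weights are within $\epsilon'$ in $\ell_1$ of those of $F'$ and each $(m_k,\Sigma_k)$ lies within $\eta$ of the corresponding atom, with $\epsilon',\eta$ small powers of $\underline{\epsilon}_n$ so the induced sup-norm perturbation is $\lesssim\underline{\epsilon}_n^2$. This probability factorizes: $p_K(N)\ge\exp(-b_4 N\log N)$ by B5; the Dirichlet probability of the weight neighbourhood is $\ge\exp(-cN\log(1/\epsilon'))$ by a standard small-ball estimate (first enlarging the smallest atoms of $F'$ so its weights are bounded below if needed); and, using $Z_N^{-1}\ge 1$, the repulsive prior assigns the mean--covariance neighbourhood at least $h_N^{\min}\,\big(\prod_{k=1}^N \inf_{\rm nbhd}p_m(m_k)\,\inf_{\rm nbhd}p_\Sigma(\Sigma_k)\big)\times(\text{volume})$, where $h_N^{\min}\ge(c_2\rho^2)^{O(N)}$ by A1 together with the $\rho$-separation, $p_m(m_k)\ge B_3\exp(-b_3(Ca_n)^\alpha)\ge\exp(-c(\log n)^{\alpha/2})$ by B3, $p_\Sigma$ is bounded below by a positive constant on the neighbourhood (the atoms of $F'$ having eigenvalues strictly inside $[\underline{\sigma}^2,\overline{\sigma}^2]$, so B4 applies), and the volume is $\gtrsim\eta^{(p+p(p+1)/2)N}$. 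Collecting the bounds, the prior mass is at least $\exp\!\big(-cN[(\log n)^{\alpha/2}+\log n+\log N]\big)$; since $\log(1/\underline{\epsilon}_n)\asymp\log n$, $\alpha\ge 2$, and $N\lesssim(\log n)^{\,p+p(p+1)/2}$, the exponent is of order $(\log n)^{\,p+p(p+1)/2+\alpha/2}$ (and at worst $(\log n)^{\,p^2+2p+\alpha/2}$ with cruder bookkeeping), which is $\le n\underline{\epsilon}_n^2=(\log n)^{2t_0}$ whenever $t_0>\tfrac{p^2}{2}+p+\tfrac{\alpha}{4}$, proving \eqref{eqn_contraction_cond3}.

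The main obstacle, and the genuinely new ingredient relative to \cite{Xie2020}, is handling the repulsion factor $h_N$ and the full covariance matrices simultaneously: because repulsion now acts on the whole Gaussian through $W_2$ and we no longer assume simultaneous diagonalizability, one must (i) ensure the moment-matching $F'$ can be taken with pairwise-separated atoms without spoiling the approximation, and (ii) absorb both the resulting $-\log h_N\lesssim N\log(1/\rho)$ and the full-covariance volume cost $\tfrac{p(p+1)}{2}N\log(1/\eta)$ into the budget $n\underline{\epsilon}_n^2$ --- this is precisely where the extra $\tfrac{p^2}{2}$ in the admissible range of $t_0$ originates. Minor care is also needed for the Dirichlet small-ball bound when $F'$ has tiny atoms and for the tail estimates in the first reduction step, but both follow by routine adaptation of \cite{Xie2020}.
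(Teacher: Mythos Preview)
Your proposal is correct and follows essentially the same route as the paper. The paper packages your first two steps into a separate approximation lemma (Lemma~\ref{lemma_mixture_approx_2}, adapting Lemmas D.1--D.4 of \cite{Xie2020}) that yields a discrete $F^*$ with $N\lesssim(\log(1/\epsilon))^{p^2+2p}$ well-separated atoms together with the inclusion $\tilde B(F^*,\epsilon)\subset B\big(f_0,\eta\epsilon^{1/2}(\log(1/\epsilon))^{(p+4)/4}\big)$, and then lower-bounds $\Pi(\tilde B(F^*,\epsilon))$ exactly as you do---factorizing into $p_K(N)$, the Dirichlet small-ball (Ghosal--van der Vaart Lemma~A.2), and the $(m_k,\Sigma_k)$-neighbourhood, using $Z_N^{-1}\ge 1$, Condition~A1 for $h_N$, B3 for $p_m$, and Lemma~\ref{lemma_wishart} for $p_\Sigma$---to arrive at the same $(\log(1/\epsilon))^{p^2+2p+\alpha/2}$ exponent and hence the threshold $t_0>\tfrac{p^2}{2}+p+\tfrac{\alpha}{4}$.
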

\begin{proof}
Motivated by Lemma \ref{lemma_mixture_approx_2}, we are interested in finding the prior probability of the following event:
\[ \tilde{B}(F^{*},\epsilon) := 
\left\{ f_F : F = \sum_{k=1}^{N} w_k \delta_{(m_k, \Sigma_k)}, \quad (m_k, \Sigma_k) \in E_k, \quad \sum_{k=1}^{N} |w_k - w^*_k| < \epsilon \right\},
\]
where
\[
E_k = \left\{ (m, \Sigma) \in \mathbb{R}^p \times \mathcal{S} : \|m - m^*_k\|_{\infty} < \frac{\epsilon}{2}, \quad || \Sigma - \Sigma^*_k||_{op} < \frac{\epsilon}{2} \right\},
\]
and the measure $
F^* = \sum_{k=1}^{N} w^*_k \delta_{(m^*_k, \Sigma^*_k)}
$
satisfies
\[||m_k^{*}||_{\infty} \lesssim \bigg( \log \frac{1}{\epsilon} \bigg)^{\frac{1}{2}},
\|m^*_k - m^*_{k'}\|_{\infty} \geq 2\epsilon, \quad ||\Sigma^*_k - \Sigma^*_{k'}||_{op} \geq 2\epsilon \text{ whenever } k \neq k', \quad j = 1, \dots, p,
\]
and
\[
N \lesssim \left( \log \frac{1}{\epsilon} \right)^{p^2+2p}.
\]
The prior probability of $\tilde{B}(F^{*},\epsilon)$ can be factorized as:
\[
\Pi(\tilde{B}(F^*, \epsilon)) = p_K(N) \Pi \left( \bigcap_{k=1}^{N} \{(m_k, \Sigma_k) \in E_k\} \mid K = N \right) \Pi \left( \|\mathbf{w} - \mathbf{w}^*\|_1 < \epsilon \mid K = N \right).
\]
Applying condition A1, B3, Lemma \ref{lemma_wishart}, and following similar lines of the proof of Theorem 4 in \cite{Xie2020}, we can show that
$$ - N \bigg( \frac{1}{\epsilon} \bigg)^{\frac{\alpha}{2}} \lesssim \log \Pi((m_k, \Sigma_k) \in E_k, k=1,\ldots,N).$$
Applying Condition B5 and the assumption $N \lesssim \bigg( \log \frac{1}{\epsilon} \bigg)^{p^2+2p}$, it follows that
$$- \bigg( \log \frac{1}{\epsilon} \bigg)^{p^2 + 2p + \frac{\alpha}{2}} \lesssim \log p_K(N) +  \log \Pi((m_k, \Sigma_k) \in E_k, k=1,\ldots,N).$$
By Lemma A.2 in Ghosal and Van Der Vaart (2001), we have
\[
- \bigg( \log \frac{1}{\epsilon} \bigg)^{p^2 + 2p+1} \lesssim - N \bigg( \log \frac{1}{\epsilon} \bigg)
\lesssim \log \Pi \left( w_1, \dots, w_N : \sum_{k=1}^{N} |w_k - w^*_k| < \epsilon \right).
\]
It follows that for $\alpha \ge 2$,
\[
\exp\bigg( -c_3 \bigg( \log \frac{1}{\epsilon} \bigg)^{p^2+2p+\frac{\alpha}{2}} \bigg) \lesssim \Pi(\tilde{B}(F^{*}, \epsilon)) \lesssim \Pi( B\bigg(f_0, \eta \epsilon^{\frac{1}{2}} \bigg(\log \frac{1}{\epsilon} \bigg)^{\frac{p+4}{4}} \bigg) ,
\]
where the second inequality follows from Lemma \ref{lemma_mixture_approx_2}.
It thus follows that
$$ \exp\bigg( -c_3 \bigg( \log \frac{1}{\epsilon} \bigg)^{p^2+2p+\frac{\alpha}{2}} \bigg) \lesssim  \Pi(B(f_0, \epsilon)) .$$
Setting $\underline{\epsilon}_n = \frac{(\log n)^{t_0}}{\sqrt{n}} $ with $t_0 > \frac{p^2}{2} + p + \frac{\alpha}{4}$, \eqref{eqn_contraction_cond3} is satisfied.
\end{proof}

\section{Technical Lemmas}
\label{sec_technical_lemmas}
\begin{lemma}
\label{lemma_hellinger}
Let $N(m_1, \Sigma_1)$ and $N(m_2, \Sigma_2)$ be two arbitrary $p$-dimensional multivariate Gaussian distributions, and $U_1, U_2$ be two orthogonal matrices such that  $U_1^{T} \Sigma_1 U_1 = \text{diag}((\lambda_j(\Sigma_1)_{j=1}^{p})$ and $U_2^{T} \Sigma_2 U_2 = \text{diag}((\lambda_j(\Sigma_2)_{j=1}^{p})$. Suppose the eignenvalues satisfy $0 < \underline{\sigma}^2 < \lambda_j(\Sigma_1) < \overline{\sigma}^2$ and $0 < \underline{\sigma}^2 < \lambda_j(\Sigma_2) < \overline{\sigma}^2$ for all $j=1,\ldots,p$. 
Furthermore, Suppose $||m_1 - m_2|| < \sqrt{p} \epsilon$, $| \lambda_j(\Sigma_1) - \lambda_j(\Sigma_2)| < \epsilon$, $j=1,\ldots,p$, and $||U_1 - U_2||_{op} < \epsilon$, where $||\cdot||_{op}$ is the operator norm. 
\\\\
Then the Hellinger distance between $N(m_1, \Sigma_1)$ and $N(m_2, \Sigma_2)$ can be upper bounded as
\[
H^2(N(m_1, \Sigma_1), N(m_2, \Sigma_2)) \le C \epsilon^2 .
\]
for all sufficiently small $\epsilon > 0$, and where the constant $C > 0$ does not depend on $N(m_1, \Sigma_1)$ or $N(m_2, \Sigma_2)$.

\end{lemma}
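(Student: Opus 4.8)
The plan is to use the closed-form expression for the Hellinger affinity between two Gaussians. With $\bar\Sigma := \tfrac12(\Sigma_1+\Sigma_2)$,
\[
1 - H^2\big(N(m_1,\Sigma_1), N(m_2,\Sigma_2)\big) = \frac{\det(\Sigma_1)^{1/4}\det(\Sigma_2)^{1/4}}{\det(\bar\Sigma)^{1/2}}\,\exp\!\Big(-\tfrac18 (m_1-m_2)^{T}\bar\Sigma^{-1}(m_1-m_2)\Big),
\]
so it suffices to show the right-hand side is at least $1 - C\epsilon^2$. I would bound the determinant factor and the exponential factor separately, showing each is of the form $1 - O(\epsilon^2)$ and at most $1$, then multiply.

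First I would translate the eigen-closeness hypotheses into closeness of the covariance matrices themselves. Writing $\Sigma_i = U_i D_i U_i^{T}$ with $D_i = \mathrm{diag}\big((\lambda_j(\Sigma_i))_{j=1}^p\big)$, the telescoping identity $\Sigma_1 - \Sigma_2 = U_1(D_1-D_2)U_1^{T} + (U_1-U_2)D_2 U_1^{T} + U_2 D_2 (U_1-U_2)^{T}$, combined with $\|U_i\|_{op}=1$, $\|D_i\|_{op}\le\overline\sigma^2$, $\|D_1-D_2\|_{op}<\epsilon$, and $\|U_1-U_2\|_{op}<\epsilon$, yields $\|\Sigma_1-\Sigma_2\|_{op}\le(1+2\overline\sigma^2)\epsilon =: C_1\epsilon$. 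Since the eigenvalues of $\Sigma_1,\Sigma_2$, hence of $\bar\Sigma$, lie in $[\underline\sigma^2,\overline\sigma^2]$, we have $\|\bar\Sigma^{-1}\|_{op}\le\underline\sigma^{-2}$; together with $\|m_1-m_2\|^2<p\epsilon^2$ this gives $(m_1-m_2)^{T}\bar\Sigma^{-1}(m_1-m_2)\le p\epsilon^2/\underline\sigma^2$, so the exponential factor is at least $1-\tfrac{p}{8\underline\sigma^2}\epsilon^2$ by $e^{-x}\ge 1-x$.

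For the determinant factor, set $\psi(\Sigma):=\log\det\Sigma$, so that its logarithm equals $\tfrac12\big(\tfrac12\psi(\Sigma_1)+\tfrac12\psi(\Sigma_2)-\psi(\bar\Sigma)\big)$. On the compact convex set ${\cal K}=\{\Sigma:\underline\sigma^2 I\preceq\Sigma\preceq\overline\sigma^2 I\}$ — which contains $\Sigma_1,\Sigma_2,\bar\Sigma$ and the segments between them — $\psi$ is smooth with $D\psi(\Sigma)[H]=\mathrm{tr}(\Sigma^{-1}H)$ and $D^2\psi(\Sigma)[H,H]=-\mathrm{tr}\big((\Sigma^{-1}H)^2\big)\in\big[-\underline\sigma^{-4}\|H\|_F^2,\,0\big]$. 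A second-order Taylor expansion of $\psi$ around $\bar\Sigma$, evaluated at $\Sigma_1$ and $\Sigma_2$ — whose first-order terms cancel because $\Sigma_1-\bar\Sigma=-(\Sigma_2-\bar\Sigma)$ — gives $0\ge\tfrac12\psi(\Sigma_1)+\tfrac12\psi(\Sigma_2)-\psi(\bar\Sigma)\ge -c_2\|\Sigma_1-\Sigma_2\|_F^2\ge -c_2 p\,C_1^2\epsilon^2$, using $\|\cdot\|_F^2\le p\|\cdot\|_{op}^2$ for symmetric $p\times p$ matrices. Exponentiating and applying $e^{-x}\ge 1-x$ once more shows the determinant factor is at least $1-c_3\epsilon^2$. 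Multiplying the two lower bounds and dropping the $O(\epsilon^4)$ cross term gives $1-H^2\ge 1-C\epsilon^2$ for all small $\epsilon>0$, with $C$ depending only on $p,\underline\sigma^2,\overline\sigma^2$. I expect the determinant factor to be the main obstacle: one must verify that its contact with $1$ at $\Sigma_1=\Sigma_2$ is genuinely second order (the first-order term vanishing by symmetry, and concavity of $\log\det$ forcing the quadratic remainder to have a definite sign), and that the Hessian of $\log\det$ is uniformly bounded over ${\cal K}$, both of which rely on the eigenvalue bounds in the hypothesis.
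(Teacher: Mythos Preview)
Your proposal is correct and follows the same overall scaffolding as the paper: both start from the closed-form Hellinger affinity between Gaussians, split into the determinant factor and the exponential factor, and bound each separately from below by $1-O(\epsilon^2)$. The exponential factor is handled essentially identically.

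The determinant factor is where the two arguments diverge. The paper rewrites $\bar\Sigma$ in the $U_1$-basis as $U_1\big(\tfrac12(D_1+D_2+\Xi)\big)U_1^T$ with $\|\Xi\|_{op}\lesssim\epsilon$, invokes Weyl's inequality to control $\lambda_j(\bar\Sigma)-\tfrac12(\lambda_j(\Sigma_1)+\lambda_j(\Sigma_2))$, and then Taylor-expands the scalar $\log$ eigenvalue by eigenvalue. You instead first convert the hypotheses into $\|\Sigma_1-\Sigma_2\|_{op}\le C_1\epsilon$ via a telescoping identity, and then Taylor-expand $\psi(\Sigma)=\log\det\Sigma$ directly as a matrix function about $\bar\Sigma$, exploiting the symmetry $\Sigma_1-\bar\Sigma=-(\Sigma_2-\bar\Sigma)$ to kill the first-order term and the uniform Hessian bound $|D^2\psi(\Sigma)[H,H]|\le\underline\sigma^{-4}\|H\|_F^2$ on $\mathcal{K}$ to control the remainder. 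Your route is shorter and more robust: it sidesteps Weyl's inequality and any eigenvalue-matching bookkeeping, and makes the second-order contact at $\Sigma_1=\Sigma_2$ transparent (concavity of $\log\det$ gives the sign, the eigenvalue bounds give the uniform curvature constant). The paper's route, by contrast, keeps everything scalar but has to manage the discrepancy between $\lambda_j(\bar\Sigma)$ and $\tfrac12(\lambda_j(\Sigma_1)+\lambda_j(\Sigma_2))$, which is a more delicate intermediate step.
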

\begin{proof}
    The (squared) Hellinger distance between two multivariate Gaussian distributions \( N(m_1, \Sigma_1) \) and \( N(m_2, \Sigma_2) \) is given by:

\[
H^2(N(m_1, \Sigma_1), N(m_2, \Sigma_2)) = 1 - \frac{\det(\Sigma_1)^{1/4} \det(\Sigma_2)^{1/4}}{\det\left(\frac{\Sigma_1 + \Sigma_2}{2} \right)^{1/2}} e^{-\frac{1}{8} (m_1 - m_2)^T \left(\frac{\Sigma_1 + \Sigma_2}{2}\right)^{-1} (m_1 - m_2)}
\]
Using the given bounds on the eigenvalues:
\[
\underline{\sigma}^2 < \lambda_j(\Sigma_1), \lambda_j(\Sigma_2) < \overline{\sigma}^2, \quad j=1, \dots, p
\]
and the condition
\[
| \lambda_j(\Sigma_2) - \lambda_j(\Sigma_1) | < \epsilon, \quad j=1,\ldots,p,
\]
we have
\begin{eqnarray*}
    \det(\Sigma_2) \le \det(\Sigma_1) (1+c_1 \epsilon)^p \le \det(\Sigma_1) (1+c_2 p \epsilon) 
\end{eqnarray*}
for some constant $c_1, c_2 > 0$. 
\\\\
We now move on to the determinant of $\frac{\Sigma_1 + \Sigma_2}{2}$. Since $\Sigma_1 = U_1 D_1 U_1^{T}$ and $\Sigma_2 = U_2 D_2 U_2^{T}$ for some diagonal matrices $D_1, D_2$,
We have
\[ \frac{\Sigma_1 + \Sigma_2}{2} = U_1 \bigg( \frac{D_1 + U_1^{T} U_2 D_2 U_2^{T} U_1 }{2}  \bigg) U_1^{T} . \]
Given the condition $||U_1 - U_2||_{op} < \epsilon$, we can write $U_2 = U_1 + \triangle U$ for some $\triangle U$ with $||\triangle U||_{op} < \epsilon$, and consequently 
$$ U_1^{T} U_2 = I + U_1^{T} (\triangle U) .$$
It then follows that 
$$ U_1^{T} U_2 D_2 U_2^{T} U_1 = D_2 + \Xi$$
for some symmetric ``perturbation matrix'' $\Xi$ with $||\Xi||_{op} \le C_1 \epsilon$, for some constant $C_1 > 0$.
\\
We now consider the eigenvalues of $\frac{\Sigma_1 + \Sigma_2}{2}$:
$$ \lambda_j\bigg( \frac{\Sigma_1 + \Sigma_2}{2} \bigg) = \lambda_j \bigg( 
U_1\frac{D_1 + D_2 + \Xi}{2} U_1^{T} \bigg).$$
\\
We first consider the eigenvalues of $U_1 \frac{D_1 + D_2}{2} U_1^T$:
$$ \lambda_j \bigg(U_1 \frac{D_1 + D_2}{2} U_1^T \bigg) = \lambda_j \bigg(  \frac{D_1 + D_2}{2} \bigg) = \frac{1}{2}(\lambda_j(D_1) + \lambda_j(D_2)) = \frac{1}{2}(\lambda_j(\Sigma_1) + \lambda_j(\Sigma_2)). $$
\\
By Weyl's inequality, we can bound the eigenvalue differences by: 
\begin{eqnarray}
\label{eqn_eigen_diff}
\Bigg|\lambda_j \bigg( \frac{\Sigma_1 + \Sigma_2}{2}\bigg) - \frac{1}{2}(\lambda_j(\Sigma_1) + \lambda_j(\Sigma_2)) \Bigg| \le ||U_1 \Xi U_1^{T}||_{op} = ||\Xi||_{op}  \le C_1 \epsilon .    
\end{eqnarray}
\\
Define 
$$ \bar{\lambda}_j = \frac{1}{2} (\lambda_j(\Sigma_1) + \lambda_j(\Sigma_2)),$$
we can express $\lambda_j(\Sigma_1)$ and $\lambda_j(\Sigma)$ as perturbations around $\bar{\lambda}_j$:
$$ \lambda_j(\Sigma_1) = \bar{\lambda}_j + \delta_j, \quad \lambda_j(\Sigma_1) = \bar{\lambda}_j - \delta_j $$
for some $\delta_j$ satisfying $|\delta_j| < \epsilon$ by assumption. Applying Taylor's theorem we obtain:
$$ \log \lambda_j(\Sigma_1) = \log \bar{\lambda}_j + \log\bigg( 1 + \frac{\delta_j}{\bar{\lambda}_j} \bigg) \le \log \bar{\lambda}_j + \frac{\delta_j}{\bar{\lambda}_j} + C_2 \epsilon^2 ,$$
$$ \log \lambda_j(\Sigma_2) = \log \bar{\lambda}_j + \log\bigg( 1 - \frac{\delta_j}{\bar{\lambda}_j} \bigg) \le \log \bar{\lambda}_j - \frac{\delta_j}{\bar{\lambda}_j} + C_2 \epsilon^2$$
for some constant $C_2 > 0$. It follows that

$$\log\bigg( \frac{1}{2} (\lambda_j(\Sigma_1) + \lambda_j(\Sigma_2))\bigg)  =  \log \bar{\lambda}_j \le \frac{1}{2} \bigg( \log \lambda_j(\Sigma_1) + \log \lambda_j(\Sigma_2) \bigg) + C_2 \epsilon^2 . $$
Consequently,
\begin{eqnarray*}
    \log \det\bigg(\frac{\Sigma_1 + \Sigma_2}{2} \bigg) &\le& \frac{1}{2} \sum_{j=1}^{p} (\log \lambda_j(\Sigma_1) + \log \lambda_j(\Sigma_2) ) + C_2 p \epsilon^2 \\
    &=& \frac{1}{2} (\log \det(\Sigma_1) + \log \det(\Sigma_2)) + C_2 p \epsilon^2 .
\end{eqnarray*}
Therefore,
$$ \det \bigg( \frac{ \Sigma_1 + \Sigma_2 }{2} \bigg)^{1/2} \le (\det(\Sigma_1) \det(\Sigma_2))^{1/4} \exp\bigg(\frac{1}{2} C_2 p \epsilon^2 \bigg) .$$
Thus, we obtain:
\begin{eqnarray}
    \label{eqn_determinant_bound}
    \frac{\det(\Sigma_1)^{1/4} \det(\Sigma_2)^{1/4}}{\det\left(\frac{\Sigma_1 + \Sigma_2}{2} \right)^{1/2}} \ge \exp\bigg(-\frac{1}{2}C_2 p\epsilon^2\bigg).
\end{eqnarray}
We next bound the exponential term, for which we use the given condition:
\[
||\mu_1 - \mu_2|| < \sqrt{p} \epsilon.
\]
From \eqref{eqn_eigen_diff}, \( (\Sigma_1 + \Sigma_2)/2 \) has eigenvalues bounded in \( [\underline{\sigma}^2 - C_1 \epsilon, \overline{\sigma}^2 + C_1 \epsilon ]\) for some constant $C_1>0$, the operator norm of its inverse satisfies:

\[
\left\|\left(\frac{\Sigma_1 + \Sigma_2}{2}\right)^{-1} \right\|_{op} \leq \frac{1}{\underline{\sigma}^2 - C_1 \epsilon}.
\]

Thus, the quadratic form in the exponent satisfies:

\[
\frac{1}{8} (\mu_1 - \mu_2)^T \left(\frac{\Sigma_1 + \Sigma_2}{2}\right)^{-1} (\mu_1 - \mu_2) \leq \frac{\sqrt{p}^2 \epsilon^2}{8 (\underline{\sigma}^2- C_1 \epsilon)} = \frac{p \epsilon^2}{8 (\underline{\sigma}^2 - C_1 \epsilon)}.
\]

So the exponential term satisfies:
\begin{eqnarray}
    \label{eqn_exp_bound}
    e^{-\frac{1}{8} (\mu_1 - \mu_2)^T \left(\frac{\Sigma_1 + \Sigma_2}{2}\right)^{-1} (\mu_1 - \mu_2)} \ge 1 - C_3 \epsilon^2.
\end{eqnarray}
for some $C_3 > 0$.
\\
Combining \eqref{eqn_determinant_bound} and \eqref{eqn_exp_bound}, we obtain:

\[
H^2(N(\mu_1, \Sigma_1), N(\mu_2, \Sigma_2)) \le C \epsilon^2 .
\]
for some $C > 0$.

\end{proof}

We have the following result concerning the metric entropy of $O(p)$ with respect to the operator norm.
\begin{lemma}
\label{lemma_entropy_orthogonal_group}
The metric entropy of $O(p)$ can be upper bounded as:
$$    {\cal N}(\epsilon, O(p), ||\cdot||_{op}) \lesssim \bigg(\frac{1}{\epsilon}\bigg)^{p(p-1)/2} .$$
\end{lemma}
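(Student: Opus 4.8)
The plan is to establish a covering number bound for the orthogonal group $O(p)$ by exploiting its structure as a compact submanifold of $\mathbb{R}^{p \times p}$ of dimension $p(p-1)/2$. The key idea is a standard volumetric/dimension-counting argument: a $d$-dimensional compact set embedded in Euclidean space has $\epsilon$-covering number of order $\epsilon^{-d}$, and for $O(p)$ we have $d = \dim O(p) = p(p-1)/2$.

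First I would recall that $O(p)$, viewed as a subset of $\mathbb{R}^{p \times p} \cong \mathbb{R}^{p^2}$ equipped with the operator norm (equivalently, up to constants, the Frobenius norm), is a smooth compact manifold of dimension $p(p-1)/2$ --- this is the dimension of the Lie group, equal to the dimension of its tangent space, the space of skew-symmetric matrices. Because all matrix norms on a finite-dimensional space are equivalent, it suffices to bound the covering number with respect to any fixed norm, say the Frobenius norm $\|\cdot\|_F$, and absorb the norm-equivalence constants into the implied constant. Since $O(p)$ is contained in a bounded region (every entry of an orthogonal matrix lies in $[-1,1]$, so $\|U\|_F \le \sqrt{p}$ for all $U \in O(p)$), I can invoke a general fact: for a compact $d$-dimensional $C^1$-submanifold $\mathcal{M}$ of $\mathbb{R}^N$ with bounded diameter and bounded curvature, ${\cal N}(\epsilon, \mathcal{M}, \|\cdot\|) \lesssim (1/\epsilon)^d$ for small $\epsilon$. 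One clean way to make this rigorous is via local charts: cover $O(p)$ by finitely many coordinate patches, each bi-Lipschitz equivalent to a bounded open subset of $\mathbb{R}^{p(p-1)/2}$ (e.g.\ using the exponential map $A \mapsto \exp(A)$ from a neighborhood of $0$ in the skew-symmetric matrices, or the Cayley transform), so that on each patch an $\epsilon$-net of the Euclidean parameter domain pulls back to an $O(\epsilon)$-net of the patch; the Euclidean domain of dimension $p(p-1)/2$ has covering number $\lesssim (1/\epsilon)^{p(p-1)/2}$, and summing over the finitely many patches preserves this order.

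The main obstacle is ensuring the bi-Lipschitz constants of the chosen charts can be controlled uniformly --- i.e.\ that the constant hidden in $\lesssim$ genuinely does not blow up as we patch things together. This is handled by compactness: finitely many patches suffice, each with a finite Lipschitz constant, so the maximum over the finite collection is finite and depends only on $p$. An alternative route avoiding charts altogether, which may be cleaner to write, is to note that $O(p)$ has positive reach (or more simply, is a smooth compact manifold) and apply a known bound on covering numbers of such sets in terms of their dimension and a volume/reach constant; or, even more elementarily, parametrize $O(p)$ by Givens rotations / Euler-angle-type products of $p(p-1)/2$ elementary plane rotations, observe this parametrization from the torus $[0,2\pi)^{p(p-1)/2}$ into $O(p)$ is Lipschitz (as a composition of smooth maps on a compact domain), and transport an $\epsilon$-net of the torus, whose covering number is $\lesssim (1/\epsilon)^{p(p-1)/2}$. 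Either way, the asymptotic rate $(1/\epsilon)^{p(p-1)/2}$ for small $\epsilon$ follows, completing the proof.
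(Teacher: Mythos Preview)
Your proposal is correct and would yield the stated bound. The paper's own proof, however, takes a much shorter route: it simply observes that $O(p)$ is a compact real Lie group of dimension $p(p-1)/2$ and invokes Theorem~7 of Szarek (1998), which gives covering-number bounds for homogeneous spaces (including compact Lie groups) of the form $(C/\epsilon)^{\dim}$. Your argument---covering $O(p)$ by finitely many bi-Lipschitz charts via the exponential map, or alternatively pushing forward an $\epsilon$-net on the angle torus through the Givens/Euler parametrization---is essentially an elementary reconstruction of (a special case of) what Szarek's theorem provides. The advantage of your route is that it is self-contained and makes the mechanism transparent; the advantage of the paper's route is economy, since the result is already available in the literature. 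Either way, the essential content is the same dimensional fact $\dim O(p) = p(p-1)/2$.
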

\begin{proof}
    Since $O(p)$ is a real compact Lie group of dimension $p(p-1)/2$, this result follows from Theorem 7 of \cite{Szarek1998}.
\end{proof}

The following result is an extension of Lemma D.1 of \cite{Xie2020}.
\begin{lemma}
\label{lemma_mixture_approx_1}
    Let $F$ be a probability distribution compactly supported on a subset of 
\[
\{(m, \Sigma) \in \mathbb{R}^p \times \mathcal{S} : ||m||_{\infty} \leq a\}
\]
with $a \lesssim \bigg( \log \frac{1}{\epsilon}\bigg)^{\frac{1}{2}}$. Then for sufficiently small $\epsilon > 0$, there exists a discrete probability distribution $F^*$ on a subset of 
\[
\{(m, \Sigma) \in \mathbb{R}^p \times \mathcal{S} : ||m||_{\infty} \leq a\}
\]
with at most 
\[
N \lesssim \left(\log \frac{1}{\epsilon} \right)^{p^2 + 2p}
\]
support points, such that 
\[
||f_F - f_{F^*}||_{\infty} \lesssim \epsilon, \quad \text{and} \quad ||f_F - f_{F^*}||_1 \lesssim \epsilon \left(\log \frac{1}{\epsilon} \right)^{p/2}.
\]
\end{lemma}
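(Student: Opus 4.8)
The plan is to combine a Tchakaloff-type quadrature construction with a uniform Taylor-remainder bound for the Gaussian density, following the strategy of Lemma~D.1 in \cite{Xie2020} and the classical discretizations of Ghosal and van der Vaart, but now over the enlarged parameter set that also carries the full covariance. First I would cover the mean ball $\{m\in\mathbb{R}^p:\|m\|_\infty\le a\}$ by $L\lesssim a^p\lesssim(\log\tfrac{1}{\epsilon})^{p/2}$ unit cubes $Q_1,\dots,Q_L$ and write $F=\sum_\ell F|_{Q_\ell\times\mathcal{S}}$. On each slab $\Theta_\ell:=Q_\ell\times\mathcal{S}$, which is compact and convex in $\mathbb{R}^{D}$ with $D=p+p(p+1)/2$ (convexity of $\mathcal{S}$ being immediate from the eigenvalue bounds $\underline{\sigma}^2 I\preceq\Sigma\preceq\overline{\sigma}^2 I$ of Condition~A3), I would expand $\theta\mapsto\phi(y\mid m,\Sigma)$ to order $r:=\lceil c\log\tfrac{1}{\epsilon}\rceil$ about the centre $\theta_\ell$ of $\Theta_\ell$, obtaining a degree-$r$ polynomial $P_{\ell,r,y}$. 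Applying Tchakaloff's theorem to $F|_{\Theta_\ell}$ and the monomials of degree $\le r$ in $D$ variables produces a discrete measure supported in $\Theta_\ell$ with at most $\binom{D+r}{D}\lesssim r^{D}$ atoms and the same moments up to order $r$; summing these over $\ell$ yields the candidate $F^*$, a discrete probability measure supported in $\{\|m\|_\infty\le a\}\times\mathcal{S}$ with
\[
N \;\le\; L\binom{D+r}{D} \;\lesssim\; (\log\tfrac{1}{\epsilon})^{p/2}(\log\tfrac{1}{\epsilon})^{(p^2+3p)/2} \;=\; (\log\tfrac{1}{\epsilon})^{(p^2+4p)/2} \;\le\; (\log\tfrac{1}{\epsilon})^{p^2+2p}
\]
support points. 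Since the zeroth moment is matched on each slab, $F$ and $F^*$ assign the same mass to $\Theta_\ell$, so
\[
f_F(y)-f_{F^*}(y)=\sum_{\ell=1}^{L}\int_{\Theta_\ell}\big(\phi(y\mid m,\Sigma)-P_{\ell,r,y}(m,\Sigma)\big)\,d\big(F-F^*\big)(m,\Sigma).
\]

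The core of the argument is the uniform remainder estimate $\sup_{\Theta_\ell}|\phi(y\mid\cdot)-P_{\ell,r,y}|$. Using the Lagrange form of the multivariate Taylor remainder together with the multinomial identity, this is bounded by a constant times $(\operatorname{diam}\Theta_\ell)^{r+1}(r+1)!^{-1}\max_{|\alpha|=r+1}\sup_{\Theta_\ell}|\partial^\alpha_\theta\phi(y\mid\cdot)|$. Differentiating $\phi$ in $m$ yields Hermite-type polynomials in $\Sigma^{-1/2}(y-m)$, for which Cramér's bound supplies a saving of $\sqrt{(r+1)!}$; differentiating in $\Sigma$ yields polynomials in $(y-m)$ and in the entries of $\Sigma^{-1}$ whose magnitude Condition~A3 keeps uniformly controlled. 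Since each slab has diameter $O(1)$, on the bulk region $\|y\|_\infty\le R$ with $R\asymp(\log\tfrac{1}{\epsilon})^{1/2}$ these bounds give a remainder of order $\big(C(\log\tfrac{1}{\epsilon})/r\big)^{(r+1)/2}$ after Stirling's formula, which is $\le\epsilon^2$ once $c$ is chosen large enough in terms of $p,\underline{\sigma}^2,\overline{\sigma}^2$. Outside the bulk, $\phi(y\mid m,\Sigma)\le(2\pi\underline{\sigma}^2)^{-p/2}\exp(-c'R^2)\le\epsilon$ whenever $\|m\|_\infty\le a$ and $R$ is a large enough multiple of $(\log\tfrac{1}{\epsilon})^{1/2}$, so $f_F$ and $f_{F^*}$ are themselves $\le\epsilon$ there. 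Because the slab masses sum to one, the displayed decomposition then gives $\|f_F-f_{F^*}\|_\infty\lesssim\epsilon$.

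The $L_1$ bound follows by splitting $\int|f_F-f_{F^*}|$ at $\|y\|_\infty=R$. On $\{\|y\|_\infty\le R\}$ the integral is at most $\|f_F-f_{F^*}\|_\infty(2R)^p\lesssim\epsilon(\log\tfrac{1}{\epsilon})^{p/2}$, while on $\{\|y\|_\infty>R\}$ it is at most $\int_{\|y\|_\infty>R}(f_F+f_{F^*})$, which the Gaussian tail bound (again using $\|m\|_\infty\le a$, $\Sigma\preceq\overline{\sigma}^2 I$, and $R\asymp(\log\tfrac{1}{\epsilon})^{1/2}$) makes $\lesssim\epsilon$. Adding the two pieces yields $\|f_F-f_{F^*}\|_1\lesssim\epsilon(\log\tfrac{1}{\epsilon})^{p/2}$, as claimed.

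I expect the main obstacle to be precisely the uniform remainder estimate, because the mean parameter ranges over a set whose diameter $a\asymp(\log\tfrac{1}{\epsilon})^{1/2}$ is \emph{not} bounded; this is exactly why the covering into unit cubes is used, so that each slab has $O(1)$ diameter and $r\asymp\log\tfrac{1}{\epsilon}$ moments suffice rather than $r\asymp(\log\tfrac{1}{\epsilon})^2$, and it forces careful bookkeeping of the derivative bounds for $\phi$ (Hermite growth, powers of $\Sigma^{-1}$, multi-index combinatorics) against Stirling's estimate, including an explicit lower bound on $c$. A minor but genuine technical point is to parametrize directly by $(m,\Sigma)$ rather than through an eigendecomposition, so that the domain stays convex and $\phi$ extends smoothly to a neighbourhood of it, making the Lagrange remainder legitimate.
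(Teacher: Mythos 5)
Your proposal takes a genuinely different route from the paper. The paper's proof of this lemma follows Xie~(2020), Lemma~D.1, and works \emph{globally} on the compact parameter set: it expands the Gaussian as
$\phi(y\mid m,\Sigma)=\det(2\pi\Sigma)^{-1/2}\sum_{j\ge 0}\tfrac{(-1/2)^j}{j!}\,Q_\Sigma^j(y-m)$ with $Q_\Sigma(y-m)=(y-m)^T\Sigma^{-1}(y-m)$, truncates the exponential series at order $J\asymp\log\tfrac1\epsilon$, uses the multinomial theorem to rewrite the truncation as a polynomial in $y$ whose coefficients are products of the form $\det(\Sigma)^{-1/2}\prod(\Sigma^{-1})_{i_1i_2}^{r_{i_1i_2}}\prod(-m_i)^{s_i}$, and then matches the $F$- and $F^*$-integrals of this finite coefficient family (of size $\lesssim J^{p^2+2p}$) using Lemma~A.1 of Ghosal and van der Vaart (2001). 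You instead cover the mean ball by $O(a^p)$ cubes of fixed side, Taylor-expand $\phi$ in the \emph{parameter} $\theta=(m,\Sigma)$ about the centre of each slab to order $r\asymp\log\tfrac1\epsilon$, and match \emph{all} monomials of degree $\le r$ in $D=p+p(p+1)/2$ variables per slab via Tchakaloff. Both routes rest on the same Carath\'eodory-type reduction and on truncation degree $\asymp\log\tfrac1\epsilon$; your moment count comes out slightly smaller, since $(p^2+4p)/2\le p^2+2p$. What the paper's route buys is that it never differentiates in $\Sigma$ at all — it only needs moments of $\det(\Sigma)^{-1/2}$, entries of $\Sigma^{-1}$, and powers of $m_i$ — whereas your route must control generic mixed parameter derivatives of $\phi$. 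What your route buys is transparency: the slabs have $O(1)$ diameter, so the need for only $r\asymp\log\tfrac1\epsilon$ terms is immediate.

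One technical point in your remainder estimate needs correcting. You write that Cram\'er's bound supplies a $\sqrt{(r+1)!}$ saving and that the $\Sigma$-derivatives are "uniformly controlled" by Condition~A3. The Hermite saving applies only to the pure $m$-derivatives; the $\Sigma$-derivatives are \emph{not} uniformly bounded in the order. Iterating $\partial_\Sigma\Sigma^{-1}=-\Sigma^{-1}(\partial\Sigma)\Sigma^{-1}$ and absorbing the degree-$2|\alpha|$ polynomial in $(y-m)$ into the Gaussian tail gives $|\partial_\Sigma^\alpha\phi|\lesssim C^{|\alpha|}\alpha!$, with $C$ depending on $p,\underline\sigma^2,\overline\sigma^2$. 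So the Lagrange remainder decays only geometrically, $\lesssim(C\,\mathrm{diam}\,\Theta_\ell)^{r+1}(r+1)^{D-1}$, not at the rate $(C\log\tfrac1\epsilon/r)^{(r+1)/2}$ you quote. This is still $\le\epsilon^2$ with $r\asymp\log\tfrac1\epsilon$, but only after you take the cube side (and, if $\overline\sigma^2/\underline\sigma^2$ is large, a finite subdivision of $\mathcal{S}$) smaller than $1/C$; since that side is a fixed constant the count $L\lesssim a^p$ is unaffected and your argument goes through, but the explicit choice of side length should be recorded. It is precisely to avoid this bookkeeping, and the attendant care with analyticity in $\Sigma$, that the paper expands in powers of $Q_\Sigma(y-m)$ rather than in the parameter.
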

\begin{proof}
The proof is an adaptation of the proof of Lemma D.1 of \cite{Xie2020}.  We define the quadratic form:
\[
Q_{\Sigma}(y) = y^T \Sigma^{-1} y.
\]
We have
\begin{eqnarray*}
Q_{\Sigma}(y - m) &=& (y - m)^T \Sigma^{-1} (y - m) \\
&=&  \sum_{i,j} (y_i - m_i) (\Sigma^{-1})_{ij} (y_j - m_j)
\end{eqnarray*}

Raising this expression to the power \( j \) and applying the multinomial theorem, we have

\[
Q_{\Sigma}^{j}(y - m) = \sum_{\substack{r_{i_1,i_2} \geq 0 \\ \sum_{i_1,i_2} r_{i_1,i_2} = j}} \binom{j}{\{r_{i_1,i_2}\}} \prod_{i_1,i_2} \left[ (\Sigma^{-1})_{i_1,i_2} (y_{i_1} - m_{i_1}) (y_{i_2} - m_{i_2}) \right]^{r_{i_1,i_2}}.
\]
We aim to find a discrete distribution $F^{*}$ such that
$$ \int \text{det}(\Sigma)^{-\frac{1}{2}} Q_{\Sigma}^j(y-m) (dF - dF^{*}) = 0,$$
for $j=1,\ldots,J-1$. Since
\begin{eqnarray*}
    \prod_{i_1,i_2} \left[ (\Sigma^{-1})_{i_1,i_2} (y_{i_1} - m_{i_1}) (y_{i_2} - m_{i_2}) \right]^{r_{i_1,i_2}} &=&
\left( \prod_{i_1,i_2} (\Sigma^{-1})_{i_1,i_2}^{r_{i_1,i_2}} \right) \\
&&
\times \sum_{k_1=0}^{\alpha_1} \cdots \sum_{k_p=0}^{\alpha_p} 
\prod_{i_1} \binom{\alpha_{i_1}}{k_{i_1}} y_{i_1}^{k_{i_1}} (-m_{i_1})^{\alpha_{i_1} - k_{i_1}} \\
&& \times
\sum_{l_1=0}^{\beta_1} \cdots \sum_{l_p=0}^{\beta_p} 
\prod_{i_2} \binom{\beta_{i_2}}{l_{i_2}} y_{i_2}^{l_{i_2}} (-m_{i_2})^{\beta_{i_2} - l_{i_2}},
\end{eqnarray*}
where $\alpha_1 + \cdots + \alpha_p = \beta_1 + \cdots + \beta_p = j$, and $\sum_{i_1,i_2} r_{i_1,i_2} = j$, a sufficient condition is that
$$ \int \text{det}(\Sigma)^{-\frac{1}{2}} \left( \prod_{i_1,i_2} (\Sigma^{-1})_{i_1,i_2}^{r_{i_1,i_2}} \right)  \prod_{i_1} \binom{\alpha_{i_1}}{k_{i_1}} (-m_{i_1})^{\alpha_{i_1} - k_{i_1}} \prod_{i_2} \binom{\beta_{i_2}}{l_{i_2}} (-m_{i_2})^{\beta_{i_2} - l_{i_2}} (dF - dF^{*}) = 0 ,$$
for $j=0,1,\ldots,J-1$.
According to Lemma A.1 in Ghosal and Van Der Vaart
(2001), the discrete distribution $F^{*}$ can be chosen with at most $N \lesssim J^{p^2+2p}$ support points. The remaining proof follows similar lines as the proof of Lemma D.1 of \cite{Xie2020}.

\end{proof}

Lemmas D.2, D.3, and D.4 from \cite{Xie2020} can be readily adapted to the present context. By combining these with Lemma \ref{lemma_mixture_approx_1}, we obtain the following lemma, which is analogous to Lemma 4 of \cite{Xie2020}.

\begin{lemma}
\label{lemma_mixture_approx_2}
    Assume conditions A0-A3 and B1-B5 hold. For some constant $\eta > 0$ and for all sufficiently small $\epsilon > 0$, there exists a discrete distribution 
\[
F^* = \sum_{k=1}^{N} w^*_k \delta_{(m^*_k, \Sigma^*_k)}
\]
supported on a subset of $\{(m, \Sigma) \in \mathbb{R}^p \times \mathcal{S} : \|m\|_{\infty} \leq 2a\}$ with 
\[
a = b^{- \frac{1}{2}} \left( \log \frac{1}{\epsilon} \right)^{\frac{1}{2}},
\]
such that 
\[
\|m^*_k - m^*_{k'}\|_{\infty} \geq 2\epsilon, \quad ||\Sigma^*_k - \Sigma^*_{k'}||_{op} \geq 2\epsilon \text{ whenever } k \neq k', \quad j = 1, \dots, p,
\]
and
\[
N \lesssim \left( \log \frac{1}{\epsilon} \right)^{p^2+2p}.
\]

Moreover, we have the inclusion
\[
\left\{ f_F : F = \sum_{k=1}^{N} w_k \delta_{(m_k, \Sigma_k)}, \quad (m_k, \Sigma_k) \in E_k, \quad \sum_{k=1}^{N} |w_k - w^*_k| < \epsilon \right\} \subset B \left( f_0, \eta \epsilon^{\frac{1}{2}} \left( \log \frac{1}{\epsilon} \right)^{\frac{p+4}{4}} \right),
\]
where
\[
E_k = \left\{ (m, \Sigma) \in \mathbb{R}^p \times \mathcal{S} : \|m - m^*_k\|_{\infty} < \frac{\epsilon}{2}, \quad || \Sigma - \Sigma^*_k||_{op} < \frac{\epsilon}{2} \right\}.
\]

\end{lemma}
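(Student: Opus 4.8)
The plan is to construct $F^*$ in three stages---truncation, discretization, and separation---mirroring the proof of Lemma~4 in \cite{Xie2020} but carrying the covariance coordinate through each step, and then to verify the claimed Kullback--Leibler inclusion by chaining $\|\cdot\|_1$ and $\|\cdot\|_\infty$ estimates. First, using the sub-Gaussian tail of $F_0$ in Condition~A0, I would replace $F_0$ by its renormalized restriction $F_0^{(a)}$ to $\{(m,\Sigma): \|m\|_\infty \le a\}$ with $a = b^{-1/2}(\log(1/\epsilon))^{1/2}$ for a suitable $b$ depending on $b_1$; this is the direct analogue of Lemma~D.2 of \cite{Xie2020} and costs only $\|f_{F_0} - f_{F_0^{(a)}}\|_1 \lesssim \epsilon$, the covariances already lying in the bounded set $\mathcal{S}$ by A0 and A3. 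Next I would apply Lemma~\ref{lemma_mixture_approx_1} to the compactly supported $F_0^{(a)}$, obtaining a discrete $\tilde F = \sum_{k=1}^{N}\tilde w_k \delta_{(\tilde m_k,\tilde\Sigma_k)}$ with $N \lesssim (\log(1/\epsilon))^{p^2+2p}$ atoms in the same support region and $\|f_{F_0^{(a)}} - f_{\tilde F}\|_\infty \lesssim \epsilon$, $\|f_{F_0^{(a)}} - f_{\tilde F}\|_1 \lesssim \epsilon(\log(1/\epsilon))^{p/2}$.

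Then comes the step that is genuinely new relative to \cite{Xie2020}: I must perturb the atoms of $\tilde F$ so that the resulting $F^* = \sum_{k=1}^N w_k^* \delta_{(m_k^*,\Sigma_k^*)}$, with $w_k^* = \tilde w_k$, is $2\epsilon$-separated \emph{simultaneously} in the means, $\|m_k^* - m_{k'}^*\|_\infty \ge 2\epsilon$, and in the covariances, $\|\Sigma_k^* - \Sigma_{k'}^*\|_{op}\ge 2\epsilon$. A greedy packing argument suffices: the admissible region $\{\|m\|_\infty \le 2a\}\times\mathcal S$ has dimension $p + p(p+1)/2$ and volume of order $(\log(1/\epsilon))^{p/2}$, so the number of $c\epsilon$-separated configurations it can hold---of order $(\log(1/\epsilon))^{p/2}\,\epsilon^{-p-p(p+1)/2}$---vastly exceeds the polylogarithmic $N$ for small $\epsilon$; hence each atom can be relocated by at most $O(\epsilon)$ while remaining in $\{\|m\|_\infty \le 2a\}$ and, since the eigenvalue window $[\underline{\sigma}^2,\overline{\sigma}^2]$ is fixed while the perturbations vanish with $\epsilon$, inside $\mathcal S$. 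To see that this barely changes the mixture I would bound, for each $k$, both $H\big(\phi(\cdot|\tilde m_k,\tilde\Sigma_k),\phi(\cdot|m_k^*,\Sigma_k^*)\big)$ and the sup-norm of the corresponding density difference by $O(\epsilon)$, using the closed-form Hellinger expression from Lemma~\ref{lemma_hellinger} together with Weyl/Lipschitz estimates for the determinants and for the inverse quadratic form under the eigenvalue bounds A3; summing over $k$ with weights gives $\|f_{\tilde F} - f_{F^*}\|_1\lesssim\epsilon$ and $\|f_{\tilde F}-f_{F^*}\|_\infty\lesssim\epsilon$. I would phrase this via a direct operator-norm perturbation estimate rather than Lemma~\ref{lemma_hellinger} verbatim, to avoid the possibility that a nearly degenerate spectrum makes the eigenvector difference $\|U_1-U_2\|_{op}$ large; this is the most delicate point.

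Finally, for the inclusion, take any $f_F$ with $F=\sum w_k\delta_{(m_k,\Sigma_k)}$, $(m_k,\Sigma_k)\in E_k$, and $\sum_k|w_k-w_k^*|<\epsilon$. The same per-component estimate (now with displacement $<\epsilon/2$) combined with $\sum_k|w_k-w_k^*|<\epsilon$ yields $\|f_F - f_{F^*}\|_1\lesssim\epsilon$ and $\|f_F - f_{F^*}\|_\infty\lesssim\epsilon$; chaining through $f_{\tilde F}$, $f_{F_0^{(a)}}$, and $f_0$ gives $\|f_F-f_0\|_\infty\lesssim\epsilon$ and $\|f_F-f_0\|_1\lesssim\epsilon(\log(1/\epsilon))^{p/2}$, with the discretization term dominating. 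It then remains to convert these into the two Kullback--Leibler-type bounds defining $B\big(f_0,\eta\epsilon^{1/2}(\log(1/\epsilon))^{(p+4)/4}\big)$, i.e.\ $\int f_0\log(f_0/f)$ and $\int f_0(\log(f_0/f))^2$ both $\lesssim \epsilon(\log(1/\epsilon))^{(p+4)/2}$; this follows the adaptation of Lemma~D.4 of \cite{Xie2020}: on the effective support $\{\|y\|\lesssim(\log(1/\epsilon))^{1/2}\}$ one has $f_0(y)\gtrsim\exp(-c\log(1/\epsilon))$, so $\log(f_0/f)$ is $O(\log(1/\epsilon))$ there and the $L_1$ closeness is inflated by one (for the first moment) or two (for the second moment) logarithmic factors, while the tail contribution is controlled by A0 and B2. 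I expect the separation/packing step together with the operator-norm-to-Hellinger estimate to be the main obstacle, since the remaining pieces transfer routinely from \cite{Xie2020}.
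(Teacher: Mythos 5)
Your plan follows exactly the route the paper itself gestures at: the paper gives no self-contained proof of this lemma but merely remarks that Lemmas~D.2, D.3, D.4 of \cite{Xie2020} ``can be readily adapted'' and combined with Lemma~\ref{lemma_mixture_approx_1}, and your truncation--discretization--separation--KL-conversion cascade is precisely that adaptation, so the approach is the same. What you add is the content the paper leaves implicit, and your additions are sound: the observation that the covariance coordinate adds no truncation cost because $\mathcal{S}$ is already compact under A3; the packing/perturbation argument for achieving $2\epsilon$-separation simultaneously in $\|\cdot\|_\infty$ on the means and $\|\cdot\|_{op}$ on the covariances (which does work since the two separations can be arranged coordinate-by-coordinate, and the polylogarithmic $N$ is dwarfed by the $\epsilon^{-p}$ and $\epsilon^{-p(p+1)/2}$ packing capacities); and, most usefully, the caution that Lemma~\ref{lemma_hellinger} should not be applied blindly in the perturbation step because near-degenerate spectra can make the eigenvector separation $\|U_1-U_2\|_{op}$ uncontrolled, so a direct operator-norm Lipschitz estimate for $\Sigma\mapsto\phi(\cdot\,|\,m,\Sigma)$ (using A3 to bound $\Sigma^{-1}$ and $\det\Sigma$) is the safer route. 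One small imprecision: in the greedy packing step the displacement of an individual atom need not be $O(\epsilon)$ but only $O(\epsilon\cdot\mathrm{polylog}(1/\epsilon))$ in the worst case of initially clustered atoms; this is harmless because the target $L_1$ tolerance already carries a $(\log(1/\epsilon))^{p/2}$ factor and the final radius $\eta\epsilon^{1/2}(\log(1/\epsilon))^{(p+4)/4}$ absorbs polylogarithmic slack, but it is worth stating explicitly when you write out the D.3 analogue.
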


The following lemma establishes a lower bound on the prior density for the covariance matrices.

\begin{lemma}
\label{lemma_wishart}
    Under condition B4, the prior density of covariance matrices $p_{\Sigma}(\Sigma)$ is uniformly bounded from below by some positive constant for all $\Sigma \in {\cal S}$.
\end{lemma}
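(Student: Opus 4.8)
The plan is to exploit the explicit closed form of the inverse-Wishart density together with the two-sided eigenvalue constraint that defines $\mathcal{S}$. Write the prior as $p_{\Sigma}(\Sigma) = c^{-1}\, p_{IW}(\Sigma;\Psi,\nu)\,\mathbf{1}_{\mathcal{S}}(\Sigma)$ with normalizing constant $c = \int_{\mathcal{S}} p_{IW}(\Sigma;\Psi,\nu)\, d\Sigma$. Since $p_{IW}(\cdot;\Psi,\nu)$ is a probability density on the cone of positive definite matrices we have $c \le 1$, and $c > 0$ because $\mathcal{S}$ has positive (and finite) Lebesgue measure and $p_{IW}$ is strictly positive on it; hence $p_{\Sigma}$ is a well-defined density and it suffices to bound $p_{IW}(\Sigma;\Psi,\nu)$ from below by a positive constant uniformly over $\Sigma \in \mathcal{S}$.

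Recall $p_{IW}(\Sigma;\Psi,\nu) = C_{p,\nu,\Psi}\, |\Sigma|^{-(\nu+p+1)/2} \exp\!\big(-\tfrac12 \mathrm{tr}(\Psi\Sigma^{-1})\big)$, where $C_{p,\nu,\Psi}$ is a finite positive number because $\nu > p-1$. The next step is to control the two $\Sigma$-dependent factors using the bounds $\underline{\sigma}^2 \le \lambda_j(\Sigma) \le \overline{\sigma}^2$ for all $j$ (valid for $\Sigma\in\mathcal{S}$ by Condition A3 / Condition B4). From $|\Sigma| = \prod_{j=1}^{p}\lambda_j(\Sigma) \in [\underline{\sigma}^{2p},\overline{\sigma}^{2p}]$ and the monotonicity of $x \mapsto x^{-(\nu+p+1)/2}$ on $(0,\infty)$, we get $|\Sigma|^{-(\nu+p+1)/2} \ge \overline{\sigma}^{-p(\nu+p+1)}$. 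From $\Sigma \succeq \underline{\sigma}^2 I$, hence $\Psi^{1/2}\Sigma^{-1}\Psi^{1/2} \preceq \underline{\sigma}^{-2}\Psi$, we obtain $\mathrm{tr}(\Psi\Sigma^{-1}) \le \underline{\sigma}^{-2}\,\mathrm{tr}(\Psi)$, so $\exp(-\tfrac12 \mathrm{tr}(\Psi\Sigma^{-1})) \ge \exp(-\tfrac12\underline{\sigma}^{-2}\mathrm{tr}(\Psi))$. Multiplying the three lower bounds yields a constant
\[
\kappa = C_{p,\nu,\Psi}\,\overline{\sigma}^{-p(\nu+p+1)}\exp\!\big(-\tfrac12\underline{\sigma}^{-2}\mathrm{tr}(\Psi)\big) > 0 ,
\]
independent of $\Sigma$, with $p_{IW}(\Sigma;\Psi,\nu) \ge \kappa$ for all $\Sigma \in \mathcal{S}$, and therefore $p_{\Sigma}(\Sigma) = c^{-1}p_{IW}(\Sigma;\Psi,\nu) \ge c^{-1}\kappa \ge \kappa > 0$ for every $\Sigma \in \mathcal{S}$.

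I do not anticipate a genuine obstacle here: the statement amounts to the observation that the eigenvalue truncation confines both $|\Sigma|$ and $\mathrm{tr}(\Sigma^{-1})$ to bounded intervals on which the continuous, strictly positive inverse-Wishart density cannot degenerate. The only point worth writing out carefully is that the truncation normalizing constant $c$ is both positive and finite — which is immediate from $p_{IW}$ being a bona fide density ($c\le 1$) that is strictly positive on a set of positive Lebesgue measure ($c>0$) — while everything else reduces to the one-line eigenvalue estimates above.
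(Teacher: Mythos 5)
Your proof is correct and follows essentially the same approach as the paper's: bound $\det(\Sigma)^{-(\nu+p+1)/2}$ from below via the eigenvalue upper bound and bound $\mathrm{tr}(\Psi\Sigma^{-1})$ from above via the eigenvalue lower bound, then combine. Your version is a bit more careful in explicitly addressing the truncation normalizing constant $c$ (noting $c\le 1$ so $c^{-1}\ge 1$), which the paper leaves implicit, but this is a cosmetic rather than substantive difference.
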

\begin{proof}
The density function of the inverse-Wishart distribution is given by
\[
p_{IW}(\Sigma;\Psi,\nu) = C_{\Psi, \nu} \text{det}(\Sigma)^{-\frac{\nu + p + 1}{2}} e^{-\frac{1}{2} \operatorname{tr}(\Psi \Sigma^{-1})},
\]
where $C_{\Psi,\nu}$ is the normalizing constant. Since \( \Sigma \) has eigenvalues in \( [\underline{\sigma}^2, \overline{\sigma}^2] \), its determinant satisfies:
\[
\underline{\sigma}^{2p} \leq \text{det}(\Sigma) \leq \overline{\sigma}^{2p}.
\]
It follows that
$$ \text{det}(\Sigma)^{-\frac{\nu+p+1}{2}} \ge \overline{\sigma}^{-2p \frac{\nu+p+1}{2}} .$$
Since
\[
\sum_{i=1}^{p} \lambda_i(\Psi) \lambda_{\min}(\Sigma^{-1}) \leq \operatorname{tr}(\Psi \Sigma^{-1}) \leq \sum_{i=1}^{p} \lambda_i(\Psi) \lambda_{\max}(\Sigma^{-1}),
\]
where $\lambda_{\min}(\Sigma^{-1}) = \min_{j}\{\lambda_j(\Sigma^{-1})\}$, $\lambda_{\max}(\Sigma^{-1}) = \max_{j}\{\lambda_j(\Sigma^{-1})\}$,
consequently $e^{-\frac{1}{2} \operatorname{tr}(\Psi \Sigma^{-1})}$ is also bounded below, and the result follows.

\end{proof}




\end{appendices}

\section*{Declarations}

\bmhead{Funding}
Weipeng Huang has been supported by the Doctoral Research Initiation Program of Shenzhen Institute of Information Technology (Grant SZIIT2024KJ001).


\bibliography{sn-bibliography}

\end{document}